\newtheorem{lemma}{Lemma}[section]
\newtheorem{corollary}[lemma]{Corollary}
\newtheorem{definition}[lemma]{Definition}
\newtheorem{assumption}[lemma]{Assumption}
\newcommand{\E}[2]{\mathbb{E}_{#1}\left[ #2 \right]}
\newcommand{\eps}{\varepsilon}
\newcommand{\expo}[1]{\mathsf{Expo}\left(#1\right)}
\newcommand{\gammad}[2]{\mathsf{Gamma}\left(#1, #2\right)}
\newcommand{\lap}[1]{\mathsf{Lap}\left(#1\right)}
\renewcommand{\P}[2]{\mathbb{P}_{#1}\left[#2\right]}
\newcommand{\arxiv}[1]{#1}
\newcommand{\narxiv}[1]{#1}
\renewcommand{\narxiv}[1]{}
\title{Approximate Differential Privacy of the $\ell_2$ Mechanism}
\author{Matthew Joseph\thanks{mtjoseph@google.com. Google Research.} \and Alex Kulesza\thanks{kulesza@google.com. Google Research.} \and Alexander Yu\thanks{alexjyu@google.com. Google Research.}}
\begin{document}

\maketitle

\begin{abstract}
    We study the $\ell_2$ mechanism for computing a $d$-dimensional statistic with bounded $\ell_2$ sensitivity under approximate differential privacy. Across a range of privacy parameters, we find that the $\ell_2$ mechanism obtains lower error than the Laplace and Gaussian mechanisms, matching the former at $d=1$ and approaching the latter as $d \to \infty$.
\end{abstract}
\section{Introduction}
\label{sec:intro}
Computing a $d$-dimensional statistic with bounded $\ell_2$ sensitivity is a fundamental task in differential privacy (DP)~\cite{DMNS06}. It underlies standard algorithms like private stochastic gradient descent~\cite{SCS13, ACGMM+16}, the binary tree mechanism~\cite{CSS11, DNPR10}, and the projection~\cite{NTZ13, N23B}, matrix~\cite{LMHMR15, MMHM18}, and factorization mechanisms~\cite{ENU20, NT23}. The canonical approximate DP algorithm for this problem is the Gaussian mechanism~\cite{DMNS06}. To compute statistic $T(X)$, the Gaussian mechanism samples an output according to $g_X(y) \propto \exp(-[\|y - T(X)\|_2/\sigma]^2)$ for an appropriate value of $\sigma$; in particular, the analytic Gaussian mechanism~\cite{BW18} chooses the smallest possible $\sigma$ sufficient for the desired approximate DP guarantee.

In this paper, we analyze the $\ell_2$ mechanism. Given a parameter $\sigma$, this mechanism samples an output according to density $f_X(y) \propto \exp(-\|y - T(X)\|_2/\sigma)$. As an instance of the $K$-norm mechanism~\cite{HT10} using the $\ell_2$ norm, the $\ell_2$ mechanism immediately satisfies $\frac{1}{\sigma}$-(pure) DP and can be sampled efficiently. However, its approximate DP guarantees are not well understood.

\subsection{Contributions}
\label{subsec:contributions}
For arbitrary dimension $d$ and privacy parameters $\eps$ and $\delta$, we provide an algorithm for choosing $\sigma$ to obtain an $\ell_2$ mechanism that satisfies $(\eps, \delta)$-DP. The resulting $\ell_2$ mechanism can be efficiently sampled in parallel and empirically dominates both the Laplace mechanism and the analytic Gaussian mechanism in terms of mean squared $\ell_2$ error (left plot in \Cref{fig:intro}). Moreover, unlike the Gaussian mechanism, the $\ell_2$ mechanism always satisfies a pure DP guarantee (right plot in \Cref{fig:intro}).

Our algorithms bound relevant quantities of the privacy loss random variable for the $\ell_2$ mechanism.~\citet{BW18} showed that mechanism $M$ is $(\eps, \delta)$-DP if and only if
\begin{equation}
\label{eq:iff}
    \P{}{\ell_{M,X,X'} \geq \eps} - e^\eps\P{}{\ell_{M,X',X}  \leq -\eps} \leq \delta~
\end{equation}
where $\ell_{M,X,X'}$ is the privacy loss associated with $M$ on arbitrary neighboring databases $X$ and $X'$ (\Cref{sec:prelims}). Proving $(\eps,\delta)$-DP therefore reduces to upper bounding the first term and lower bounding the second. We show that the first term is defined by the mass that $M(X)$ places on a region of $\mathbb{R}^d$ determined by certain spherical caps, while the second term is defined by the mass that $M(X')$ places on the same region. We then provide algorithms to approximate the first term from above and the second term from below. Because these approximations are provably upper and lower bounds, they yield a formal differential privacy guarantee. Experiments suggest that, for reasonable algorithm parameter values, these approximations are tight (\Cref{subsec:experiments_privacy}).

\begin{figure*}[t]
        \centering
        \includegraphics[scale=0.55]{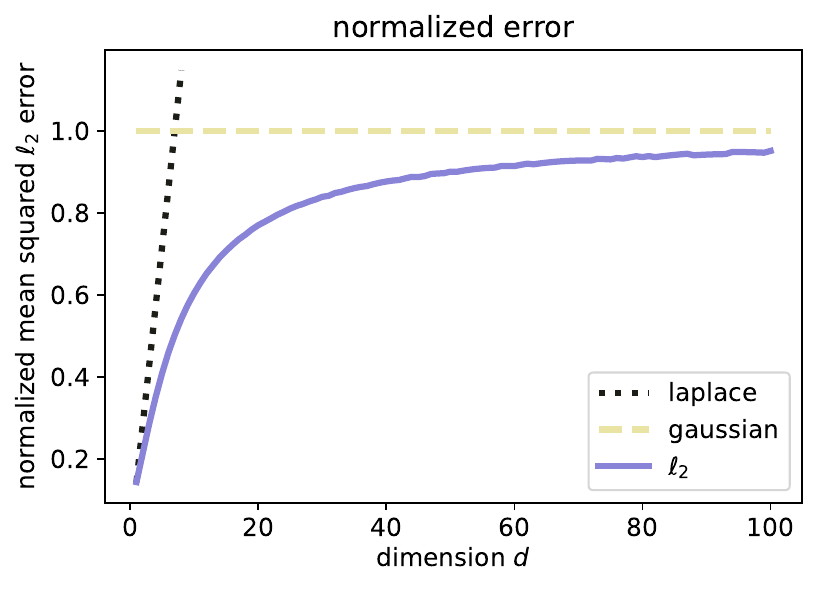}
        \includegraphics[scale=0.55]{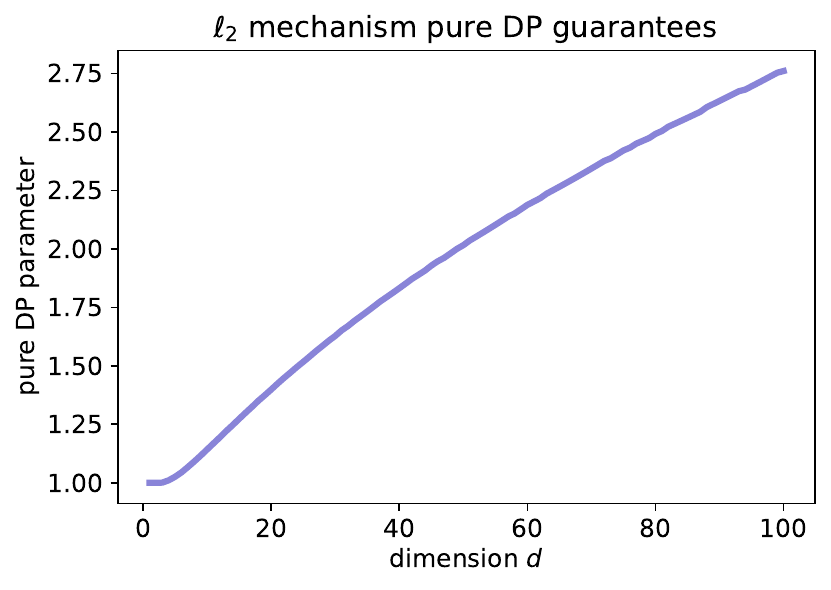}
        \caption{\textbf{Left}: normalized mean squared $\ell_2$ error. At each $d$, we compute mean squared $\ell_2$ error for the $(1, 10^{-5})$-DP Laplace, analytic Gaussian~\cite{BW18}, and $\ell_2$ mechanisms. Quantities are normalized so that the analytic Gaussian mechanism error is always 1. Note that we truncate the Laplace mechanism at $d=8$, after which its error relative to the analytic Gaussian mechanism continues to grow. See \Cref{subsec:experiments_error} for details. \textbf{Right}: the pure DP guarantee of the $(1, 10^{-5})$-DP $\ell_2$ mechanism as $d$ grows.}
        \label{fig:intro}
\end{figure*}

\subsection{Related Work}
\citet{GZ21} also use spherical caps to analyze what they call ``generalized Gaussians'', which have densities proportional to $\exp(-[\|y - T(X)\|_p / \sigma]^p)$ for integers $p \geq 1$. A few features separate their work from ours: they study a statistic with bounded $\ell_\infty$ sensitivity; their results do not cover the $\ell_2$ mechanism, which uses norm $p=2$ but exponent $p'=1$; they work with a sufficient condition for $(\eps, \delta)$-DP, which only bounds the first term in \Cref{eq:iff} by $\delta$, leading to looser results; and since their goal is an asymptotic utility guarantee, their results rely on asymptotic concentration inequalities that are less precise than the approach used here.

A few authors have studied the $\ell_2$ mechanism, primarily in the context of objective perturbation~\cite{CMS11, KST12, YRUF14}. However, they all use pure DP rather than approximate DP.

\arxiv{\subsection{Organization}
Preliminaries appear in \Cref{sec:prelims}. \Cref{sec:l2} describes the $\ell_2$ mechanism, its privacy guarantees, and sampling. \Cref{sec:experiments} provides empirical evaluations of the privacy analysis, error, and speed. \Cref{sec:discussion} concludes with a general discussion of context and future directions.}
\section{Preliminaries}
\label{sec:prelims}

We use the formulation of $(\eps, \delta)$-DP given by~\citet{BW18}. It is defined in terms of the privacy loss random variable. Our results apply for either the add-remove or swap notions of neighboring databases.

\begin{definition}
\label{def:plrv}
    Let $M$ be a mechanism whose output density given input database $X$ is $f_X$. Then for neighboring databases $X, X'$, the \emph{privacy loss} of $M$ at point $y$ is $\ell_{M,X,X'}(y) = \ln\left(\tfrac{f_X(y)}{f_{X'}(y)}\right)$. Its \emph{privacy loss random variable} is $\ell_{M,X,X'}(Y)$ where $Y \sim f_X$.
\end{definition}
The following results relate the privacy loss random variable to differential privacy.
\begin{lemma}
\label{lem:plrv_pure_dp}
    Mechanism $M$ is $\eps$-DP if and only if, for any neighboring $X \sim X'$, $|\ell_{M,X,X'}(Y)| \leq \eps$.
\end{lemma}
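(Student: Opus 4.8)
The plan is to translate the two-sided privacy loss bound into a pointwise bound on the density ratio $f_X/f_{X'}$ and then match that bound against the measure-theoretic definition of $\eps$-DP. Expanding the privacy loss, the condition $|\ell_{M,X,X'}(Y)| \le \eps$ — which I read as holding for $f_X$-almost every $y$, since $\ell_{M,X,X'}(Y)$ is a function of $Y \sim f_X$ — is equivalent to the sandwich $e^{-\eps} f_{X'}(y) \le f_X(y) \le e^{\eps} f_{X'}(y)$ for $f_X$-almost every $y$. The useful observation is that the absolute value already packages both the upper bound $f_X \le e^{\eps} f_{X'}$ and the lower bound $f_{X'} \le e^{\eps} f_X$ into a single statement on the sample space of $Y \sim f_X$, so the two inequalities can be handled in parallel.

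For the reverse direction (bound $\Rightarrow$ $\eps$-DP), I would fix an arbitrary ordered pair of neighboring databases $X, X'$ and an arbitrary measurable event $S$. The hypothesis applied to this pair gives $f_X(y) \le e^{\eps} f_{X'}(y)$ for $f_X$-almost every $y$, so integrating over $S$ yields $\P{}{M(X) \in S} = \int_S f_X \le e^{\eps} \int_S f_{X'} = e^{\eps}\,\P{}{M(X') \in S}$. Since the pair and the event were arbitrary, this is precisely $\eps$-DP.

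For the forward direction (DP $\Rightarrow$ bound), assume $M$ is $\eps$-DP and fix neighboring $X, X'$. To get the upper bound I would consider the ``bad set'' $S_+ = \{y : f_X(y) > e^{\eps} f_{X'}(y)\}$: if $S_+$ had positive $f_X$-mass, then the strict pointwise inequality together with positive mass would force $\P{}{M(X) \in S_+} = \int_{S_+} f_X > e^{\eps} \int_{S_+} f_{X'} = e^{\eps}\,\P{}{M(X') \in S_+}$, contradicting $\eps$-DP; hence $\int_{S_+} f_X = 0$ and $\ell_{M,X,X'} \le \eps$ holds $f_X$-almost everywhere. The lower bound is symmetric: applying the identical argument to $S_- = \{y : f_{X'}(y) > e^{\eps} f_X(y)\}$ with the DP inequality written in the reversed ordering shows $\int_{S_-} f_X = 0$, giving $\ell_{M,X,X'} \ge -\eps$ almost surely. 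Combining the two yields $|\ell_{M,X,X'}(Y)| \le \eps$.

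The routine integration is not the difficulty; the only real care is measure-theoretic. I expect the main obstacle to be stating the qualifiers cleanly: interpreting the random-variable bound as an almost-sure statement under $f_X$, handling points where a density vanishes (so that the log-ratio is $\pm\infty$) by noting they carry zero $f_X$-mass, and checking that quantifying over all ordered neighboring pairs is what recovers both directions of the density-ratio sandwich from the one-directional DP inequality.
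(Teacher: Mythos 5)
Your proof is correct: reading the bound as an $f_X$-almost-sure statement, the integration argument for (bound $\Rightarrow$ DP) and the bad-set contradiction for (DP $\Rightarrow$ bound), with the quantification over both ordered pairs of neighbors doing the work, is exactly the standard argument, and the paper states this lemma without any proof (it is a folklore characterization of pure DP), so there is nothing to diverge from. One small step to write out: your ``identical argument'' applied to $S_-$ with the reversed DP inequality literally yields $\int_{S_-} f_{X'} = 0$ rather than $\int_{S_-} f_X = 0$; the latter then follows from the pointwise bound $f_X < e^{-\eps} f_{X'}$ on $S_-$ (or by noting that $\int_{S_-} f_X > 0$ forces $S_-$ to have positive Lebesgue measure, giving the strict inequality directly).
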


\begin{lemma}[\citet{BW18}]
\label{lem:approx_dp}
    Mechanism $M$ is $(\eps, \delta)$-DP if and only if, for any neighboring $X \sim X'$,
    \begin{equation*}
        \P{}{\ell_{M,X,X'} \geq \eps} - e^\eps\P{}{\ell_{M,X',X}  \leq -\eps} \leq \delta.
    \end{equation*}
\end{lemma}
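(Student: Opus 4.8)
The plan is to start from the standard definition of $(\eps,\delta)$-DP, namely that $\P{}{M(X) \in S} \leq e^\eps \P{}{M(X') \in S} + \delta$ for every measurable set $S$ and every neighboring pair $X \sim X'$, and to reduce this family of inequalities to a single worst-case set. Fixing $X$ and $X'$, I would rewrite the requirement as $\sup_S \left[ \P{}{M(X) \in S} - e^\eps \P{}{M(X') \in S} \right] \leq \delta$, so that proving the lemma amounts to evaluating this supremum in closed form and then translating it into the language of the privacy loss random variable.

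The key step is a pointwise optimization over $S$. Writing both probabilities as integrals against the output densities $f_X$ and $f_{X'}$, the objective becomes $\int_S \left( f_X(y) - e^\eps f_{X'}(y) \right) dy$, whose integrand is nonnegative precisely on the set $S^\star = \left\{ y : f_X(y) \geq e^\eps f_{X'}(y) \right\}$. Hence the supremum is attained at $S = S^\star$ and equals $\int_{S^\star} \left( f_X(y) - e^\eps f_{X'}(y) \right) dy$. By the definition of the privacy loss, this worst-case set admits the clean description $S^\star = \left\{ y : \ell_{M,X,X'}(y) \geq \eps \right\}$.

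It then remains to match the two resulting terms to the privacy loss random variables in the statement. The first term $\int_{S^\star} f_X(y)\, dy = \P{}{M(X) \in S^\star}$ equals $\P{}{\ell_{M,X,X'} \geq \eps}$, since the random variable $\ell_{M,X,X'}(Y)$ is taken with $Y \sim f_X$. For the second term I would invoke the sign-flip identity $\ell_{M,X',X}(y) = -\ell_{M,X,X'}(y)$, which yields the alternative description $S^\star = \left\{ y : \ell_{M,X',X}(y) \leq -\eps \right\}$; because $\ell_{M,X',X}(Y')$ uses $Y' \sim f_{X'}$, we obtain $\int_{S^\star} f_{X'}(y)\, dy = \P{}{M(X') \in S^\star} = \P{}{\ell_{M,X',X} \leq -\eps}$. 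Substituting both identities gives exactly the claimed expression, and the equivalence with $(\eps,\delta)$-DP follows since the supremum is bounded by $\delta$ for all neighboring pairs if and only if the stated inequality holds for all neighboring pairs.

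I expect the main subtlety to lie not in the algebra but in the dual role played by the single set $S^\star$ under the two different reference measures: the same region simultaneously describes the event $\ell_{M,X,X'} \geq \eps$ when integrated against $f_X$ and the event $\ell_{M,X',X} \leq -\eps$ when integrated against $f_{X'}$, and keeping this bookkeeping straight is the crux of the argument. Minor care is also needed at the boundary $\left\{ f_X = e^\eps f_{X'} \right\}$, where the defining inequality of $S^\star$ may be taken as weak or strict without changing the integral, and in a fully general (non-density) treatment one would replace densities by Radon--Nikodym derivatives; the excerpt's assumption of output densities makes the latter generalization unnecessary here.
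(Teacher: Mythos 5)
Your proof is correct: the paper itself gives no proof of this lemma (it is imported from \citet{BW18}), and your argument---reducing the supremum over measurable sets to the single worst-case set $S^\star = \{y : f_X(y) \geq e^\eps f_{X'}(y)\}$ and then reading off the two probabilities under the two reference densities---is exactly the standard argument behind the cited result. The dual-role bookkeeping you flag (the same set $S^\star$ describing the event $\ell_{M,X,X'} \geq \eps$ under $f_X$ and the event $\ell_{M,X',X} \leq -\eps$ under $f_{X'}$) is handled correctly, so nothing is missing.
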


Since the $\ell_2$ mechanism can be viewed as an instance of the $K$-norm mechanism~\cite{HT10}, we recall some relevant results about the $K$-norm mechanism.

\begin{lemma}[\citet{HT10}]
\label{lem:k_norm}
    Given norm $\|\cdot\|$, scale parameter $\sigma$, statistic $T$ with $\|\cdot\|$-sensitivity $\Delta$, and database $X$, the \emph{$K$-norm mechanism} has output density $f_X(y) \propto \exp\left(-\|y - T(X)\| / \sigma\right)$ and satisfies $\frac{\Delta}{\sigma}$-DP. Moreover, letting $B^d$ denote the unit ball for $\|\cdot\|$, the following procedure samples this mechanism: \narxiv{1) sample radius $r \sim \gammad{d+1}{\sigma}$, the Gamma distribution with shape $d+1$ and scale $\sigma$; 2) uniformly sample $z \sim B^d$; and 3) output $T(X) + rz$.}\arxiv{
    \begin{enumerate}
        \item sample radius $r \sim \gammad{d+1}{\sigma}$, the Gamma distribution with shape $d+1$ and scale $\sigma$;
        \item uniformly sample $z \sim B^d$;
        \item output $T(X) + rz$.
    \end{enumerate}}
\end{lemma}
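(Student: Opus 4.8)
The statement bundles two claims: the $\tfrac{\Delta}{\sigma}$-DP guarantee and the correctness of the sampling procedure. I would treat them separately.

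For the privacy guarantee, the plan is to invoke \Cref{lem:plrv_pure_dp} and bound the privacy loss pointwise. The normalizing constant of $f_X(y) \propto \exp(-\|y - T(X)\|/\sigma)$ depends only on $\|\cdot\|$, $\sigma$, and $d$ — not on the center $T(X)$ — so it cancels in the log-ratio, giving
\[
  \ell_{M,X,X'}(y) = \frac{1}{\sigma}\bigl(\|y - T(X')\| - \|y - T(X)\|\bigr).
\]
The reverse triangle inequality bounds the absolute value of the right-hand side by $\frac{1}{\sigma}\|T(X) - T(X')\|$, which is at most $\frac{\Delta}{\sigma}$ by the sensitivity assumption. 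Hence $|\ell_{M,X,X'}(Y)| \le \frac{\Delta}{\sigma}$ uniformly over $y$ and over neighboring pairs, and \Cref{lem:plrv_pure_dp} yields $\frac{\Delta}{\sigma}$-DP.

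For the sampling procedure, translating by $T(X)$ reduces the claim to showing that $W = rz$, with $r \sim \gammad{d+1}{\sigma}$ and $z$ uniform on $B^d$ drawn independently, has density proportional to $\exp(-\|w\|/\sigma)$. I would work in polar coordinates adapted to $\|\cdot\|$: any $w \neq 0$ is written $w = s u$ with $s = \|w\|$ and $u \in \partial B^d$, under which Lebesgue measure factors as $dw = s^{d-1}\,ds\,d\mu(u)$ for the induced surface measure $\mu$. A uniform point in the ball decomposes as $z = \rho u$ with direction $u$ uniform on $\partial B^d$ and radius $\rho = \|z\|$ independent of $u$ with density $d\,\rho^{d-1}$ on $[0,1]$ (equivalently $\rho \sim \betad{d}{1}$, since $\mathbb{P}[\|z\|\le t] = t^d$ by scaling). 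As scaling by $r$ leaves the direction fixed, $W$ has direction $u$ (still uniform) and radius $r\rho$.

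The crux is then identifying the law of $r\rho$, and the plan is to show $r\rho \sim \gammad{d}{\sigma}$: this follows from the Gamma–Beta algebra identity that a product of independent $\gammad{d+1}{\sigma}$ and $\betad{d}{1}$ variables is $\gammad{d}{\sigma}$, or equally from a one-line convolution/Mellin computation. Feeding a $\gammad{d}{\sigma}$ radius and a uniform direction back through $dw = s^{d-1}\,ds\,d\mu(u)$, the density of $W$ at $w$ is proportional to $\|w\|^{-(d-1)}$ times the radial density $\|w\|^{d-1}e^{-\|w\|/\sigma}$; the $\|w\|^{d-1}$ factors cancel and leave exactly $\exp(-\|w\|/\sigma)$. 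The one genuine subtlety — and the step I would verify most carefully — is this polar factorization for a general norm: one must confirm that for a uniform point in $B^d$ the radius and direction are independent with the stated laws, and that the Jacobian $s^{d-1}$ is precisely what cancels against the radial Gamma density. Everything else is routine.
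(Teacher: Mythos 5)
Your proof is correct, but note that the paper never proves this lemma at all --- it is imported from \citet{HT10}, and the appendix only proves the downstream sampling facts (\Cref{lem:gamma_sample}, \Cref{lem:l2_sample}) --- so what you have written is a self-contained argument for something the paper only cites. Your privacy half (normalizing constants cancel by translation invariance, then the reverse triangle inequality plus \Cref{lem:plrv_pure_dp}) is the standard argument and is airtight. For the sampler half, your route through the beta--gamma algebra is valid, with one correction and one simplification worth knowing. The correction: for a general norm, the measure $\mu$ in the factorization $dw = s^{d-1}\,ds\,d\mu(u)$ is the \emph{cone measure} on $\partial B^d$, not the induced surface (Hausdorff) measure; the two coincide only for the Euclidean ball. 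Since your argument uses nothing about $\mu$ beyond the factorization itself and the independence of radius and direction for a uniform point of $B^d$ (both of which hold exactly for the cone measure), this is a naming fix rather than a gap --- but it is precisely the subtlety you flagged. The simplification: you can bypass polar coordinates and the beta--gamma identity entirely by conditioning on $r$. Given $r$, the point $rz$ is uniform on $rB^d$ with density $\mathbbm{1}[\|y\| \le r]/(r^{d}\,\mathrm{vol}(B^d))$, so the marginal density of the output at $y$ is
\begin{equation*}
    \int_{\|y\|}^{\infty} \frac{r^{d}e^{-r/\sigma}}{\Gamma(d+1)\,\sigma^{d+1}}\cdot\frac{1}{r^{d}\,\mathrm{vol}(B^d)}\,dr
    \;=\; \frac{e^{-\|y\|/\sigma}}{\Gamma(d+1)\,\sigma^{d}\,\mathrm{vol}(B^d)} \;\propto\; e^{-\|y\|/\sigma},
\end{equation*}
where the $r^{d}$ in the $\gammad{d+1}{\sigma}$ density exactly cancels the $r^{-d}$ from the volume of the scaled ball. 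This one-line mixture computation also explains why the shape parameter must be $d+1$ rather than $d$, which in your derivation appears only implicitly through the beta--gamma identity.
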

\section{\texorpdfstring{$\ell_2$ Mechanism}{L2 Mechanism}}
\label{sec:l2}
This section provides an algorithm for computing $\sigma$ to achieve an $(\eps, \delta)$-DP $\ell_2$ mechanism (\Cref{subsec:l2_dp}) and then describes a simple method for sampling the $\ell_2$ mechanism in parallel (\Cref{subsec:l2_sampler}). Without loss of generality, we assume that our statistic $T$ has $\ell_2$ sensitivity $\Delta_2 = 1$. If $\Delta_2 \neq 1$, we can run the algorithm on $T / \Delta_2$ and rescale.

\subsection{Privacy Analysis}
\label{subsec:l2_dp}
The overall goal is to translate an $(\eps, \delta)$-DP privacy budget to the minimum $\sigma$ such that the $\ell_2$ mechanism $M$ with parameter $\sigma$ satisfies $(\eps, \delta)$-DP. To do this, we focus on a subroutine that determines whether or not $M$ satisfies $(\eps, \delta)$-DP and then binary search over $\sigma$.

Recall from \Cref{lem:approx_dp} that $M$ is $(\eps, \delta)$-DP if and only if $\P{}{\ell_{M,X,X'} \geq \eps} - e^\eps\P{}{\ell_{M,X',X}  \leq -\eps} \leq \delta$. The next subsections will provide algorithms that upper bound the first term and lower bound the second term, and thus err on the side of a conservative privacy guarantee.

Before starting our privacy analysis, we consider a simpler (but, as we will see, significantly worse) approach. The $\ell_2$ mechanism, and the $K$-norm mechanism more broadly, can be viewed as instances of the exponential mechanism~\cite{MT07}. The exponential mechanism admits a few possible approximate DP analyses. For example, the $\eps$-DP exponential mechanism satisfies $\frac{\eps^2}{8}$-concentrated DP~\cite{CR21}, and a concentrated DP guarantee can be converted to an approximate DP guarantee~\cite{BS16, CKS20, ALCKS20, ZDW22}. However, any such analysis also applies to the Laplace mechanism, and the $\eps$-DP Laplace mechanism is only $(\eps', \delta)$-DP for $\eps' \approx \eps - O(\delta)$ (\Cref{lem:laplace_approx} in the Appendix). This is a negligible improvement for realistic $\delta$, so a different privacy analysis is necessary.

\subsubsection{First Term Upper Bound}
\label{subsubsec:ub}
For the privacy guarantee to hold, \Cref{eq:iff} must hold for arbitrary neighboring databases $X$ and $X'$. Since the $\ell_2$ ball is spherically symmetric, without loss of generality we consider statistic $T$ where $T(X) = 0$ and $T(X') = e_1 = (1, 0, \ldots, 0)$. Shorthand the respective mechanisms as $M(0)$ and $M(1)$. Then
\begin{align*}
    \ell_{M,X,X'}(y) =&\ \ln\left(\frac{f_X(y)}{f_{X'}(y)}\right) \\
    =&\ \ln\left(\frac{\exp[-\|y\|_2 / \sigma]}{\exp[-\|y - e_1\|_2 / \sigma]}\right) \\
    =&\ \frac{1}{\sigma}\left(\|y-e_1\|_2 - \|y\|_2\right)
\end{align*}
so we want to upper bound
\begin{equation}
\label{eq:plrv_1_ub}
    \P{y \sim M(0)}{\frac{1}{\sigma}\left(\|y-e_1\|_2 - \|y\|_2\right) \geq \eps}.
\end{equation}
We shorthand the relevant region in \Cref{eq:plrv_1_ub} as $V$.
\begin{definition}
\label{def:V}
    Define $V = \{y \mid \frac{1}{\sigma}(\|y-e_1\|_2 - \|y\|_2) \geq \eps\}$, $M$'s \emph{high privacy loss region}.
\end{definition}
A simple case is $\sigma \geq \frac{1}{\eps}$.
\begin{lemma}
\label{lem:large_sigma}
    $\sigma \geq \frac{1}{\eps}$ if and only if $\P{y \sim M(0)}{y \in V} = 0$.
\end{lemma}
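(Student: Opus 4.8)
The plan is to reduce the lemma to a statement about the range of the scalar function $g(y) = \|y - e_1\|_2 - \|y\|_2$. By \Cref{def:V}, the event $y \in V$ is exactly $g(y) \geq \eps\sigma$, so $\P{y \sim M(0)}{y \in V}$ is the $M(0)$-mass of the superlevel set $\{g \geq \eps\sigma\}$, and everything hinges on comparing the threshold $\eps\sigma$ with the largest value $g$ can attain. (Equivalently, since the privacy loss equals $\tfrac{1}{\sigma}g(y)$, this is really a statement about when the maximum privacy loss exceeds $\eps$.)

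The key geometric fact I would establish first is that $g(y) \leq 1$ for every $y$, with equality precisely on the closed ray $R = \{-\lambda e_1 : \lambda \geq 0\}$. This is immediate from the triangle inequality $\|y - e_1\|_2 \leq \|y\|_2 + \|e_1\|_2 = \|y\|_2 + 1$; the equality case of the triangle inequality for the $\ell_2$ norm forces $y$ and $-e_1$ to be non-negatively proportional, which is exactly membership in $R$.

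With this in hand, both directions follow by splitting on how $\eps\sigma$ compares to $1$. For the forward direction, suppose $\sigma \geq 1/\eps$, i.e. $\eps\sigma \geq 1$. If $\eps\sigma > 1$, then $\{g \geq \eps\sigma\}$ is empty since $g \leq 1$, so the probability is $0$. If $\eps\sigma = 1$, then $\{g \geq 1\} = R$, a ray; since $M(0)$ has a density (absolute continuity), this measure-zero set receives mass $0$. For the converse I would argue the contrapositive: if $\sigma < 1/\eps$, then $\eps\sigma < 1$, and because $g$ is continuous and equals $1$ on $R$, the set $\{g > \eps\sigma\} \subseteq V$ is a nonempty open subset of $\mathbb{R}^d$, hence of positive Lebesgue measure; as the $M(0)$ density $\propto \exp(-\|y\|_2/\sigma)$ is strictly positive everywhere, this forces $\P{y \sim M(0)}{y \in V} > 0$.

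The only delicate point is the boundary case $\sigma = 1/\eps$, where the conclusion relies on $R$ being Lebesgue-null. This holds for $d \geq 2$ but fails at $d = 1$, where $R = (-\infty, 0]$ carries mass $\tfrac12$ under $M(0)$; I expect this edge case to be absorbed by restricting to $d \geq 2$ (at $d = 1$ the $\ell_2$ mechanism coincides with the Laplace mechanism and can be treated separately). Apart from tracking this dimension dependence, the argument is routine: the substance is entirely in the sharp triangle-inequality bound and its equality characterization.
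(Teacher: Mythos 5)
Your proof is correct, and its core coincides with the paper's: both arguments rest on the triangle inequality $\|y-e_1\|_2 \leq \|y\|_2 + 1$ together with the structure of its equality set, split into the cases $\eps\sigma > 1$, $\eps\sigma = 1$, and $\eps\sigma < 1$. The differences are in the two directions' execution. For $\sigma < 1/\eps$ the paper argues geometrically that $V$ is the convex hull of a non-degenerate $-e_1$-facing hyperboloid component and hence has positive measure, whereas you get the same conclusion more elementarily: $\{g > \eps\sigma\}$ is open and nonempty by continuity of $g$, and the density of $M(0)$ is strictly positive, so the mass is positive; both are sound. The more substantive difference is the boundary case $\sigma = 1/\eps$. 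You correctly identify the equality set as the ray $\{-\lambda e_1 : \lambda \geq 0\}$ and note that it is Lebesgue-null only when $d \geq 2$; at $d=1$ the set is $(-\infty, 0]$, which carries mass $\tfrac{1}{2}$ under $M(0)$, so the forward direction of the lemma genuinely fails there. This is a real (if minor) flaw in the paper: its proof asserts that the ``$-e_1$ axis'' has measure $0$, which is false in one dimension, and indeed \Cref{lem:one_dim} evaluated at $\sigma = 1/\eps$ gives $\P{y \sim M(0)}{y \in V} = \tfrac{1}{2}$, contradicting \Cref{lem:large_sigma} at that single parameter point. The slip is harmless downstream, since \Cref{alg:term_1_ub} invokes the lemma only for $\sigma > 1/\eps$ (strict), routes $d=1$ to \Cref{lem:one_dim}, and the remaining analysis operates under \Cref{assm:d_sigma} ($d \geq 2$, $\sigma < 1/\eps$); but your version, which confines the boundary case to $d \geq 2$, is the correct statement, and your proof is in this respect more careful than the paper's.
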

\begin{proof}
    The equation $|\|y - e_1\|_2 - \|y\|_2| = \sigma\eps$ defines a hyperboloid with foci $0$ and $e_1$ and constant difference $\sigma\eps$. If we instead consider $\|y - e_1\|_2 - \|y\|_2 \geq \sigma\eps$, removing the absolute value restricts the hyperboloid to the $-e_1$ facing component, and moving to inequality yields the convex hull of that component. An illustration appears in \Cref{fig:hyperboloid}.
    
    If $\sigma > \frac{1}{\eps}$, then $\|y-e_1\|_2 - \|y\|_2 \geq \sigma \eps$ has no solution because the triangle inequality means $\|y-e_1\|_2 - \|y\|_2 \leq \|e_1\|$. Thus $V = \emptyset$, so $\P{y \sim M(0)}{y \in V} = 0$. If $\sigma = \frac{1}{\eps}$, then $\|y-e_1\|_2 - \|y\|_2 \geq \sigma \eps$ only holds for $y$ contained in the $-e_1$ axis. This set has measure 0, so $\P{y \sim M(0)}{y \in V} = 0$. Finally, if $\sigma < \frac{1}{\eps}$, then $\|y - e_1\|_2 - \|y\|_2 \geq \sigma\eps$ determines a $-e_1$ facing component of the hyperboloid that is non-degenerate, so its convex hull has positive measure, i.e. $\P{y \sim M(0)}{y \in V} > 0$.
\end{proof}

    \begin{figure}[h]
    \centering
    \begin{tikzpicture}
    \pgfmathsetmacro{\clipLeft}{-1}
    \pgfmathsetmacro{\clipRight}{1.1}
    \pgfmathsetmacro{\clipBottom}{-2}
    \pgfmathsetmacro{\clipTop}{2}
    \clip (\clipLeft,\clipBottom) rectangle(\clipRight,\clipTop);
    
    \coordinate (A) at (0,0);
    \coordinate (B) at (1,0);
    \node at (B) [below = 1mm] {$e_1$};
    \pgfmathsetmacro{\acRatio}{0.5}
    
    \coordinate (BA) at ($ (B)-(A) $);
    \newdimen\myBAx
    \pgfextractx{\myBAx}{\pgfpointanchor{BA}{center}}
    \newdimen\myBAy
    \pgfextracty{\myBAy}{\pgfpointanchor{BA}{center}}
    \pgfmathsetlengthmacro{\c}{veclen(\myBAx,\myBAy)/2}
    \pgfmathsetlengthmacro{\b}{sqrt(1-\acRatio^2)*\c}
    \pgfmathsetlengthmacro{\a}{\acRatio*\c}
    \pgfmathanglebetweenlines{\pgfpoint{0}{0}}{\pgfpoint{1}{0}}
    {\pgfpointanchor{A}{center}}{\pgfpointanchor{B}{center}}
    \let\rotAngle\pgfmathresult
    \coordinate (O) at ($ (A)!.5!(B) $);
    \tikzset{hyperbola/.style={rotate=\rotAngle,shift=(O),
        domain=-3:3,variable=\t,samples=50,smooth}}
    \draw[hyperbola] plot ({-\a*cosh(\t)},{\b*sinh(\t)});
    \pgfmathsetmacro{\baRatio}{\b/\a}
    \tikzset{asymptote/.style={rotate=\rotAngle,shift=(O),
        samples=2,domain=\clipLeft:\clipRight,dash pattern=on 2mm off 1mm}}
    \draw[asymptote] plot ({\x},{\baRatio*\x});
    \draw[asymptote] plot ({\x},{-\baRatio*\x});
    \tikzset{axis/.style={->,black!40}}
    \draw[axis] (\clipLeft,0) -- (\clipRight,0);
    \draw[axis] (0,\clipBottom) -- (0,\clipTop);
    \fill (A) circle (0.5mm);
    \fill (B) circle (0.5mm);
    \usetikzlibrary{patterns}
    \path[pattern=north east lines] (-1,-2)--(-1,2) -- plot[hyperbola] ({-\a*cosh(\t)},{\b*sinh(\t)});
    \end{tikzpicture}
    \caption{An illustration of $V$ for $\sigma = 1/(2\eps)$. We draw the projection of $V$ onto $\text{span}(e_1,e_2)$ as the shaded region.}
    \label{fig:hyperboloid}
    \end{figure}
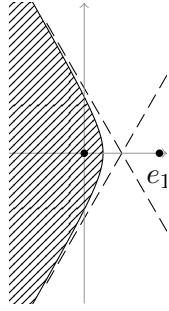

The rest of this subsection considers $\sigma < \frac{1}{\eps}$. When $d=1$, all $\ell_p$ norm mechanisms are identical. In particular, the $\ell_2$ mechanism is equivalent to the Laplace mechanism.

\begin{lemma}
\label{lem:one_dim}
    If $\sigma \leq \frac{1}{\eps}$ and $d = 1$, then $\P{y \sim M(0)}{y \in V} = 1 - \frac{1}{2}\exp\left(\frac{1}{2}[\eps - \frac{1}{\sigma}]\right)$.
\end{lemma}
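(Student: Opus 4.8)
The plan is to reduce the whole statement to a one-dimensional Laplace CDF computation. First I would observe that when $d=1$ the unit $\ell_2$ ball is just $[-1,1]$, so by \Cref{lem:k_norm} (or directly from the definition) the output $M(0)$ has density $f_0(y) = \frac{1}{2\sigma}\exp(-|y|/\sigma)$; that is, $M(0)$ is the centered Laplace distribution with scale $\sigma$. With $T(X)=0$ and $T(X')=e_1=1$, membership in $V$ is governed by the scalar function $g(y) := |y-1| - |y|$, and the event $\{y \in V\}$ is exactly $\{g(y) \ge \sigma\eps\}$.

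Next I would determine $V$ explicitly via the three-way case analysis of the piecewise-linear $g$: one has $g(y) = 1$ for $y \le 0$, $g(y) = 1 - 2y$ for $0 \le y \le 1$, and $g(y) = -1$ for $y \ge 1$. Here the hypothesis $\sigma \le \frac1\eps$ enters. Since $\sigma\eps \le 1$, the inequality $g(y) \ge \sigma\eps$ holds on all of $(-\infty,0]$, fails on all of $[1,\infty)$, and on $[0,1]$ reduces to $y \le \frac{1-\sigma\eps}{2}$. Because $\sigma\eps \le 1$, the threshold $\tau := \frac{1-\sigma\eps}{2}$ lies in $[0,\tfrac12]$, so the three pieces glue into a single ray $V = (-\infty, \tau]$.

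Finally I would integrate the Laplace density over this ray. For a threshold $\tau \ge 0$, splitting at the origin gives $\P{y\sim M(0)}{y \le \tau} = \tfrac12 + \int_0^\tau \frac{1}{2\sigma}e^{-y/\sigma}\,dy = 1 - \tfrac12 e^{-\tau/\sigma}$, where the leading $\tfrac12$ is the Laplace mass on $(-\infty,0]$. Substituting $\tau/\sigma = \tfrac{1}{2}\bigl(\tfrac{1}{\sigma} - \eps\bigr)$ yields $1 - \tfrac12 \exp\bigl(\tfrac12[\eps - \tfrac{1}{\sigma}]\bigr)$, as claimed.

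There is no substantive obstacle here; the lemma is essentially a bookkeeping computation. The one point demanding care is the role of the hypothesis $\sigma \le \frac1\eps$: it is exactly what guarantees both that the entire half-line $y \le 0$ belongs to $V$ and that $\tau \ge 0$, so that $V$ is a single interval and the Laplace CDF is evaluated on its $\tau \ge 0$ branch rather than split further. I would also sanity-check the boundary $\sigma = \frac1\eps$, where $\tau = 0$ and the formula returns $\tfrac12$, matching direct evaluation of $\P{y\sim M(0)}{y \le 0}$.
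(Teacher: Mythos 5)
Your proof is correct and follows essentially the same route as the paper's: reduce to the one-dimensional Laplace distribution, identify $V$ as the ray $y \leq \frac{1}{2}(1-\sigma\eps)$, and evaluate the Laplace CDF at that nonnegative threshold. The only difference is that you spell out the piecewise analysis of $|y-1|-|y|$ and the boundary check at $\sigma = \frac{1}{\eps}$, which the paper asserts without detail.
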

\begin{proof}
    $M$ has the same noise density as the Laplace mechanism $\lap{\sigma}$, and $|y-e_1| - |y| \geq \sigma \eps$ if and only if $y \leq \frac{1}{2}(1 - \sigma \eps)$. For $z \geq 0$, the $\lap{\sigma}$ CDF is $F(z) = 1 - \frac{1}{2}\exp\left(-\frac{z}{\sigma}\right)$, so by $1 - \sigma \eps \geq 0$, $\P{y \sim M(0)}{y \in V} = 1 - \frac{1}{2}\exp\left(\frac{1}{2}[\eps - \frac{1}{\sigma}]\right)$.
\end{proof}

This leaves the case $\sigma \leq \frac{1}{\eps}$ and $d \geq 2$. We proceed under that assumption.

\begin{assumption}
\label{assm:d_sigma}
    Statistic $T$ has dimension $d \geq 2$, and $\sigma < \frac{1}{\eps}$.
\end{assumption}

When $d \geq 2$, the level sets of $M$ are spheres, and we will show that the the high privacy loss portions of level sets are spherical caps.

\begin{definition}
\label{def:spheres}
    For $r > 0$ and $z \in \mathbb{R}^d$, define $S_{r,z} = \{x \in \mathbb{R}^d \mid \|x-z\|_2 = r\}$, the sphere of radius $r$ centered at $z$. For any sphere $S_{r,z}$ where $z = (c, 0, \ldots, 0)$, the \emph{spherical cap} of $S_{r,z}$ of height $h$ is the set of points $\hat S_{r,z,h} = \{(x_1, \ldots, x_d) \in \mathbb{R}^d \mid \|x-z\|_2=r \text{ and } x_1 \leq c-r+h\}$. See \Cref{fig:cap} for an illustration.
\end{definition}

\begin{figure}[h]
    \centering
    \begin{tikzpicture}
      \draw[->] (-1.5,0) -- (1.5,0) node[right]{};
      \draw[->] (0,-1.5) -- (0,1.5) node[above]{};
    
      \draw[thick] (0,0) circle (1);
      
      \draw[line width=3pt, violet!50] (-0.5, 0.866) arc(120:240:1);
    \end{tikzpicture}
    \caption{The unit circle in $\mathbb{R}^2$ with a spherical cap (thick purple arc) of height 0.5. In $\mathbb{R}^d$, the sphere and spherical cap are both $(d-1)$-dimensional objects.}
    \label{fig:cap}
\end{figure}
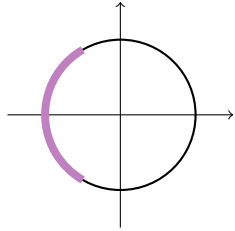

\begin{lemma}
\label{lem:loss_cap}
    Define height \arxiv{$h(r) = \min\left(r(1-\eps\sigma) + \frac{1 - (\eps\sigma)^2}{2}, 2r\right)$.}
    \narxiv{\begin{equation*}
        h(r) = \min\left(r(1-\eps\sigma) + \frac{1 - (\eps\sigma)^2}{2}, 2r\right).
    \end{equation*}}
    Then $\hat S_{r, 0, h(r)} = V \cap S_{r, 0}$.
\end{lemma}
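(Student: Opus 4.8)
The plan is to fix a single sphere $S_{r,0}$ and show that the inequality defining $V$ carves out exactly a spherical cap whose height solves to $h(r)$. The key simplification is that on $S_{r,0}$ the quantity $\|y\|_2 = r$ is constant, so the defining condition $\frac{1}{\sigma}(\|y-e_1\|_2 - \|y\|_2) \geq \eps$ collapses to a single-variable condition on the first coordinate $y_1$, which is precisely the axis along which spherical caps are measured in \Cref{def:spheres}.

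First I would substitute $\|y\|_2 = r$ to rewrite the condition as $\|y - e_1\|_2 \geq r + \sigma\eps$. Using $\|y - e_1\|_2^2 = \|y\|_2^2 - 2y_1 + 1 = r^2 - 2y_1 + 1$, and observing that both sides of $\|y - e_1\|_2 \geq r + \sigma\eps$ are nonnegative (since $r > 0$ and $\eps, \sigma > 0$), I can square without changing the inequality to obtain $r^2 - 2y_1 + 1 \geq (r + \sigma\eps)^2$. Solving for $y_1$ yields $y_1 \leq \frac{1 - (\sigma\eps)^2}{2} - r\sigma\eps$. Comparing this with the cap condition $x_1 \leq c - r + h$ at $c = 0$, I read off $-r + h(r) = \frac{1 - (\sigma\eps)^2}{2} - r\sigma\eps$, i.e. $h(r) = r(1 - \eps\sigma) + \frac{1 - (\eps\sigma)^2}{2}$, matching the first argument of the $\min$. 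This gives $V \cap S_{r,0} = \hat S_{r,0,h(r)}$ whenever the resulting cap is a proper cap.

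The remaining work, and the main obstacle, is the boundary bookkeeping encoded by the $\min$ with $2r$. The threshold $-r + h(r)$ must lie in $[-r, r]$ for the cap definition to describe a subset of the sphere; when it exceeds $r$, the entire sphere satisfies the privacy loss condition and the correct cap is the whole sphere, corresponding to height $2r$. I would verify that $-r + h(r) \geq r$ is equivalent to $h(r) \geq 2r$, which after factoring the difference of squares reduces to $r \leq \frac{1 - \eps\sigma}{2}$; on this range the rightmost point $(r, 0, \ldots, 0)$ already satisfies $\|y - e_1\|_2 - \|y\|_2 = 1 - 2r \geq \sigma\eps$, confirming geometrically that the whole sphere lies in $V$ and that capping the height at $2r$ is correct. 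Finally I would note that \Cref{assm:d_sigma} guarantees $\eps\sigma < 1$, so $h(r) > 0$ for every $r > 0$ and the cap is never empty (consistent with the leftmost point always lying in $V$); hence $h(r) = \min(r(1-\eps\sigma) + \frac{1 - (\eps\sigma)^2}{2}, 2r)$ describes $V \cap S_{r,0}$ across all regimes.
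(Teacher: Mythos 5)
Your proof is correct, and it takes a genuinely more direct route than the paper's. The paper first establishes a geometric monotonicity claim (\Cref{lem:normDiffMonotonicity}, via the law of cosines) to argue that $V \cap S_{r,0}$ is a contiguous cap admitting a boundary point $p$, then solves the equality $\frac{1}{\sigma}(\|p-e_1\|_2 - \|p\|_2) = \eps$ for the height (\Cref{lem:expressionOfh}), and finally splits into the cases $S_{r,0} \not\subset V$ and $S_{r,0} \subset V$, with \Cref{lem:radiusRangeForValidh} delimiting them. You bypass the monotonicity step entirely: since $\|y\|_2 = r$ is constant on the sphere, the defining inequality of $V$ becomes $\|y-e_1\|_2 \geq r + \sigma\eps$ with both sides nonnegative, so squaring is an equivalence, and because $\|y-e_1\|_2^2 = r^2 - 2y_1 + 1$ is affine in $y_1$, the condition collapses to the half-space condition $y_1 \leq \frac{1-(\sigma\eps)^2}{2} - r\sigma\eps$, which is precisely the definition of a cap in \Cref{def:spheres}; no separate argument is needed to show the intersection has cap structure. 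Your handling of the $\min$ is also sound: the threshold exceeds the sphere's rightmost coordinate exactly when $r \leq \frac{1-\eps\sigma}{2}$ (the same breakpoint as \Cref{lem:radiusRangeForValidh}, by the same difference-of-squares factoring), in which case both $V \cap S_{r,0}$ and the height-$2r$ cap equal the whole sphere. What the paper's route buys is the explicit geometric picture (angle monotonicity along each sphere, the boundary point), which supports intuition reused informally in later arguments such as \Cref{lem:U_R_0}; what your route buys is a shorter proof with one fewer supporting claim, in which the cap structure and the height formula fall out of a single chain of equivalences.
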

\begin{proof}
    Orient the 2-dim plane $\text{span}(e_1,e_2)$ so that the positive $e_1$ direction is right and the positive $e_2$ direction is up. Consider the points 0, $e_1$, and some $y \in S_{r,0} \cap H$ where $H$ is the upper half of the $\text{span}(e_1,e_2)$ plane (i.e., $y_2 \geq 0$).  Let $\theta$ be the clockwise angle from $y$ to $e_1$. Proofs of the following claim (and others omitted in this subsection) appear in \Cref{subsec:appendix_upper}.
    \begin{restatable}{claim}{normDiffMonotonicity}
    \label{lem:normDiffMonotonicity}
    $\|y-e_1\|_2 - \|y\|_2$ decreases as $\theta$ decreases.
    \end{restatable}
    
    We want to identify a function $h(r)$ for all $r > 0$ such that $\hat S_{r, 0, h(r)} = V \cap S_{r, 0}$. \Cref{assm:d_sigma} means $\sigma < \frac{1}{\eps}$, so by \Cref{lem:large_sigma}, $V \cap S_{r,0}$ is nonempty. The analysis splits into cases.
    
    \underline{Case 1}: $S_{r,0} \not\subset V$. Since $\frac{1}{\sigma}(\|p - e_1\|_2 - \|p\|_2)$ changes monotonically by \cref{lem:normDiffMonotonicity} then there exists $p = (p_1, p_2, 0,...,0) \in S_{r,0}$ at the base of $\hat S_{r, 0, h(r)}$ such that $\frac{1}{\sigma}(\|p - e_1\|_2 - \|p\|_2) = \eps$. Then $h(r)$ satisfies $(r-h(r))^2 + p_2^2 = r^2$, and we get $p_2 = \sqrt{2h(r)r - h(r)^2}$. 
    
    We can now derive the desired expression for $h(r)$ 
    \begin{restatable}{claim}{expressionOfh}
    \label{lem:expressionOfh}
    $\frac{1}{\sigma}(\|p - e_1\|_2 - \|p\|_2) = \eps$ is equivalent to $h(r) = r(1-\eps \sigma)  + \frac{1 - \eps^2\sigma^2}{2}$
    \end{restatable}
    
    Moreover, we show that with the above definition of $h(r)$, the constraint $h(r) \in [0,2r]$ is equivalent to a constraint on the radius given by $r \geq \frac{1-\eps\sigma}{2}$
    \begin{restatable}{claim}{radiusRangeForValidh}
    \label{lem:radiusRangeForValidh}
    $r(1-\eps \sigma)  + \frac{1 - \eps^2\sigma^2}{2} \in [0, 2r]$ if and only if $r \geq \frac{1-\eps\sigma}{2}$
    \end{restatable}
    
    In summary, $S_{r,0} \not\subset V$ if and only if $r \geq \frac{1-\eps\sigma}{2}$, and for such $r$, we have $\hat S_{r, 0, h(r)} = V \cap S_{r, 0}$ when $h(r)$ is defined as in \cref{lem:expressionOfh}.
    
    \underline{Case 2}: $S_{r,0} \subset V$. By the above, $S_{r,0} \subset V$ if and only if $r < \frac{1-\eps\sigma}{2}$. The statement $S_{r,0} \subset V$ is equivalent to $\frac{1}{\sigma}(\|p - e_1\|_2 - \|p\|_2) = \frac{1}{\sigma}(\sqrt{(r+1)^2 - 2h(r)} - r) \geq \eps$ for all $p \in S_{r,0}$, and this inequality is equivalent to $h(r) \leq r(1-\eps \sigma)  + \frac{1 - \eps^2\sigma^2}{2}$ by replacing equality with inequalities in the proof of \cref{lem:expressionOfh}. By \cref{lem:radiusRangeForValidh}, $r < \frac{1-\eps\sigma}{2}$ is equivalent to $r(1-\eps \sigma) + \frac{1- \eps^2 \sigma^2}{2} > 2r$. Then defining $h(r) = 2r$ suffices to satisfy $h(r) \leq r(1-\eps \sigma)  + \frac{1 - \eps^2\sigma^2}{2}$ for all $r < \frac{1-\eps\sigma}{2}$.
    
    In summary, if we define $h(r)$ as
    \[ h(r) = \begin{cases} 
          r(1-\eps \sigma)  + \frac{1 - \eps^2\sigma^2}{2} & \text{if } r \geq \frac{1-\eps\sigma}{2} \\
          2r & \text{if } r < \frac{1-\eps\sigma}{2} \\
       \end{cases}
    \] 
    then $h(r) \in [0, 2r]$ and $\hat S_{r, 0, h(r)} = V \cap S_{r, 0}$. Since $r(1-\eps \sigma)  + \frac{1 - \eps^2\sigma^2}{2} > 2r$ for $r < \frac{1-\eps\sigma}{2}$, this piecewise function is equivalent to $h(r) = \min\left(r(1-\eps\sigma) + \frac{1 - (\eps\sigma)^2}{2}, 2r\right)$.
\end{proof}

We showed in the above proof that, for small $r$, the entirety of $S_{r,0}$ lies in $V$.
\begin{corollary}
\label{cor:small_r_high_loss}
    If $r \leq \frac{1-\eps \sigma}{2}$, then $S_{r,0} \subset V$.
\end{corollary}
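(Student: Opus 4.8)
The plan is to deduce the corollary directly from \Cref{lem:loss_cap}, which already establishes $V \cap S_{r,0} = \hat S_{r,0,h(r)}$. Observe that $S_{r,0} \subset V$ is equivalent to $V \cap S_{r,0} = S_{r,0}$, i.e.\ to the spherical cap $\hat S_{r,0,h(r)}$ being the entire sphere. By \Cref{def:spheres}, $\hat S_{r,0,h} = \{x : \|x\|_2 = r,\ x_1 \leq -r + h\}$, so the cap coincides with all of $S_{r,0}$ exactly when $-r + h \geq r$, i.e.\ when $h \geq 2r$. Since $h(r) = \min\left(r(1-\eps\sigma) + \frac{1-(\eps\sigma)^2}{2},\, 2r\right) \leq 2r$ always holds, it therefore suffices to show that $h(r) = 2r$, equivalently that the first term in the minimum is at least $2r$, whenever $r \leq \frac{1-\eps\sigma}{2}$.

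First I would handle the strict inequality $r < \frac{1-\eps\sigma}{2}$. Under \Cref{assm:d_sigma} we have $\eps\sigma < 1$, so $r(1-\eps\sigma) + \frac{1-(\eps\sigma)^2}{2}$ is a sum of a nonnegative and a strictly positive term, hence positive. \Cref{lem:radiusRangeForValidh} states that this quantity lies in $[0,2r]$ if and only if $r \geq \frac{1-\eps\sigma}{2}$; thus for $r < \frac{1-\eps\sigma}{2}$ it lies outside $[0,2r]$, and being positive it must exceed $2r$. Consequently $h(r) = 2r$ throughout this range.

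The only remaining point is the boundary $r = \frac{1-\eps\sigma}{2}$, where the corollary asserts $\leq$ but the threshold appearing in the proof of \Cref{lem:loss_cap} (and in \Cref{lem:radiusRangeForValidh}) is split at a non-strict endpoint. Here I would simply evaluate the first term of the minimum directly: writing $c = \eps\sigma$ and $r = \frac{1-c}{2}$, one finds $r(1-c) + \frac{1-c^2}{2} = \frac{(1-c)^2}{2} + \frac{(1-c)(1+c)}{2} = 1-c = 2r$, so again $h(r) = 2r$.

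Combining the two cases gives $h(r) = 2r$ for all $r \leq \frac{1-\eps\sigma}{2}$, so by the reduction above $\hat S_{r,0,h(r)} = S_{r,0}$, and \Cref{lem:loss_cap} yields $V \cap S_{r,0} = S_{r,0}$, i.e.\ $S_{r,0} \subset V$. I do not expect a genuine obstacle here: the argument is a direct unpacking of the cap geometry of \Cref{def:spheres} together with \Cref{lem:radiusRangeForValidh}. The only subtlety worth care is the mismatch between the strict threshold used in the case analysis of \Cref{lem:loss_cap} and the non-strict threshold in the corollary statement, and this is settled by the one-line boundary computation above showing that the two branches of the minimum agree exactly at $r = \frac{1-\eps\sigma}{2}$.
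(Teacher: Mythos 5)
Your proof is correct and takes essentially the same route as the paper: there the corollary is read off directly from the case analysis inside the proof of \Cref{lem:loss_cap}, where $S_{r,0} \subset V$ corresponds exactly to the branch $h(r) = 2r$, justified by \Cref{lem:radiusRangeForValidh} --- the same reduction you perform from the statement of the lemma. Your explicit check of the boundary $r = \frac{1-\eps\sigma}{2}$, where the two branches of the minimum coincide, is a small point the paper glosses over (its Case 2 is stated for strict inequality), but it does not change the substance of the argument.
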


It remains to analyze the high privacy loss region for larger $r$. Our analysis will repeatedly reason about the fraction of a sphere occupied by a cap.

\begin{definition}
\label{def:F}
    Let $F_{r,h}$ denote the fraction of the surface of $S_{r,0}$ occupied by cap $\hat S_{r,0,h}$.
\end{definition}

\begin{lemma}[\cite{L10}]
\label{lem:cap_fraction}
    Let $I_x(a,b) = \frac{\int_0^xt^{a-1}(1-t)^{b-1} dt}{\int_0^1 t^{a-1}(1-t)^{b-1}dt}$ denote the regularized incomplete beta function. Let $h$ be the height function defined in \Cref{lem:loss_cap}. If $h(r) \leq r$, then $F_{r,h(r)} = \frac{1}{2}I_{(2rh(r)-h(r)^2)/r^2}\left(\frac{d-1}{2}, \frac{1}{2}\right)$. If $h(r) > r$, then $F_{r,h(r)} = 1 - F_{r,2r-h(r)}$.
\end{lemma}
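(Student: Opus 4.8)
The plan is to treat this as a direct computation of surface area on the $(d-1)$-dimensional sphere $S_{r,0}$, recovering the formula cited from \cite{L10}. First I would parametrize $S_{r,0}$ by the polar angle $\phi \in [0,\pi]$ measured from the positive $e_1$-axis, so that $x_1 = r\cos\phi$, and the latitude set at angle $\phi$ is a $(d-2)$-sphere of radius $r\sin\phi$. The surface measure then factors as $dA \propto \sin^{d-2}\phi\,d\phi\,d\omega$, where $d\omega$ is the measure on the unit $(d-2)$-sphere of longitudes. Since the cap condition $x_1 \leq -r + h$ (recall $z = 0$ here) depends only on $\phi$, the longitude factor integrates out and cancels in the ratio, reducing the problem to
\begin{equation*}
    F_{r,h} = \frac{\int_{\phi_0}^{\pi} \sin^{d-2}\phi \, d\phi}{\int_0^\pi \sin^{d-2}\phi\, d\phi}, \qquad \cos\phi_0 = \frac{h-r}{r}.
\end{equation*}

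For the case $h(r) \leq r$, I would note that $\cos\phi_0 \leq 0$, so $\phi_0 \in [\pi/2, \pi]$. Substituting $\psi = \pi - \phi$ folds the cap integral onto $[0,\alpha]$ with $\alpha = \pi - \phi_0 \in [0,\pi/2]$, where $\cos\alpha = 1 - h/r$ and hence $\sin^2\alpha = (2rh - h^2)/r^2$ — already the target argument of $I_x$. Then the substitution $t = \sin^2\psi$ turns $\int_0^\alpha \sin^{d-2}\psi\,d\psi$ into $\tfrac12 \int_0^{\sin^2\alpha} t^{(d-3)/2}(1-t)^{-1/2}\,dt$, which is $\tfrac12$ times the incomplete beta integral with parameters $a = \frac{d-1}{2}$ and $b = \frac12$. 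The denominator, computed the same way with $\alpha = \pi/2$ (using symmetry of $\sin$ about $\pi/2$), equals the complete beta function $B\!\left(\frac{d-1}{2},\frac12\right)$. Dividing gives exactly $F_{r,h(r)} = \tfrac12 I_{(2rh(r)-h(r)^2)/r^2}\!\left(\frac{d-1}{2},\frac12\right)$.

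For the case $h(r) > r$, I would argue by symmetry: the complement of the cap on $S_{r,0}$ is $\{x_1 > -r+h\}$, and the reflection $x_1 \mapsto -x_1$ is an isometry of the sphere carrying this set onto the cap $\{x_1 \leq -r + (2r-h)\}$ of height $2r - h$. Hence the complementary surface fraction is $F_{r,2r-h(r)}$, giving $F_{r,h(r)} = 1 - F_{r,2r-h(r)}$; since $2r - h < r$ here, the right-hand side is covered by the first case. I expect the only delicate points to be bookkeeping: correctly translating the geometric cap condition into the angular limit $\phi_0$ and verifying that $\sin^2\alpha$ simplifies to the stated argument, together with tracking the factor of $\tfrac12$ through the $\psi = \pi - \phi$ folding (which is precisely what distinguishes a sub-hemispherical cap from the full sphere). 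Everything else is a routine change of variables.
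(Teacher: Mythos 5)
Your proposal is correct, but it is worth noting that the paper does not prove this lemma at all: it is imported verbatim from the cited reference \cite{L10} (Li's formula for the area of a hyperspherical cap). What you have written is therefore a self-contained derivation of the cited result, and it is sound. The key steps all check out: the cap condition $x_1 \leq -r+h$ on $S_{r,0}$ translates to $\cos\phi \leq (h-r)/r$, i.e.\ $\phi \geq \phi_0$, so the longitude factor cancels in the ratio; when $h \leq r$ the folding $\psi = \pi - \phi$ gives $\cos\alpha = 1 - h/r$ and hence $\sin^2\alpha = (2rh-h^2)/r^2$; the substitution $t = \sin^2\psi$ (legitimate since $\alpha \leq \pi/2$, so $\cos\psi \geq 0$ on the range of integration) produces $\tfrac{1}{2}\int_0^{\sin^2\alpha} t^{(d-1)/2 - 1}(1-t)^{-1/2}\,dt$, while the denominator $\int_0^\pi \sin^{d-2}\phi\,d\phi = 2\int_0^{\pi/2}\sin^{d-2}\psi\,d\psi$ equals the complete beta function $B\bigl(\tfrac{d-1}{2},\tfrac12\bigr)$ exactly (the factor $2$ from symmetry cancels the $\tfrac12$ from the substitution), yielding the stated $\tfrac12 I_x$. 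The reflection argument for $h > r$ is also fine; the only unstated point is that the reflected complement $\{x_1 < r-h\}$ and the cap $\{x_1 \leq r-h\}$ of height $2r-h$ differ by a latitude sphere of zero surface measure, which is harmless for $d \geq 2$ (the regime in force under \Cref{assm:d_sigma}). So your argument buys the paper something it currently delegates to the literature: a short, elementary verification of the one external geometric ingredient in the upper- and lower-bound analyses.
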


There is no closed-form expression for $I_x(a,b)$, but it is a standard function in mathematical libraries like SciPy~\cite{S24}.\footnote{Note that the analytic Gaussian mechanism~\cite{BW18} depends similarly on the standard Gaussian CDF.} We can therefore use \Cref{lem:cap_fraction} to compute the fraction of any $S_{r,0}$ that lies in $V = \cup_{r \in \mathbb{R}^{+}}\hat S_{r,0,h(r)}$ (\Cref{lem:loss_cap}). It remains to extend these results about individual spheres to results about $V$ as a whole.

The next lemma shows that the high-loss cap fraction decreases with $r$. \narxiv{It combines \Cref{lem:loss_cap} and \Cref{lem:cap_fraction} to show directly that the appropriate $I_x(a,b)$ is monotone in $r$ in the desired direction. Since the proof is again mostly calculation, it also appears in \Cref{subsec:appendix_upper}.}

\begin{restatable}{lemma}{FMonotonic}
\label{lem:F_monotonic}
    $F_{r,h(r)}$ is monotone decreasing in $r$.
\end{restatable}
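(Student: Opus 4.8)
The plan is to collapse the whole statement to the monotonicity of a single scalar, the argument of the regularized incomplete beta function, and then to track how that argument interacts with the two cases of \Cref{lem:cap_fraction}. Write $\alpha = \eps\sigma$, which lies in $(0,1)$ by \Cref{assm:d_sigma}. Whenever $h(r) \leq r$, \Cref{lem:cap_fraction} gives $F_{r,h(r)} = \tfrac12 I_{x(r)}\!\left(\tfrac{d-1}{2},\tfrac12\right)$ with $x(r) = (2rh(r) - h(r)^2)/r^2$. Since the integrand defining $I_x(a,b)$ is nonnegative on $(0,1)$ and strictly positive there, $I_x(a,b)$ is strictly increasing in $x$ on $[0,1]$. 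So the entire problem reduces to understanding $x(r)$ together with the locations of the case boundaries of \Cref{lem:cap_fraction}.

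The first ingredient is a closed form for $x(r)$. Substituting $h(r) = r(1-\alpha) + \tfrac{1-\alpha^2}{2}$ (valid once $r \geq \tfrac{1-\alpha}{2}$) and simplifying gives $x(r) = (1-\alpha^2) + \tfrac{\alpha(1-\alpha^2)}{r} - \tfrac{(1-\alpha^2)^2}{4r^2}$. The crucial algebraic observation is that $x(r) = h(r)(2r-h(r))/r^2$ is invariant under $h \mapsto 2r - h$. Consequently, in the regime $h(r) > r$, where \Cref{lem:cap_fraction} instead yields $F_{r,h(r)} = 1 - F_{r,2r-h(r)}$, the complementary cap of height $2r - h(r) \leq r$ carries exactly the same argument $x(r)$, so $F_{r,h(r)} = 1 - \tfrac12 I_{x(r)}\!\left(\tfrac{d-1}{2},\tfrac12\right)$.

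Next I would split into the three regimes obtained by comparing $h(r)$ to $r$ and $2r$: the whole sphere lies in $V$ (so $F \equiv 1$) for $r \leq \tfrac{1-\alpha}{2}$ by \Cref{cor:small_r_high_loss}; $h(r) > r$ on $\tfrac{1-\alpha}{2} \leq r \leq \tfrac{1-\alpha^2}{2\alpha}$; and $h(r) \leq r$ for $r \geq \tfrac{1-\alpha^2}{2\alpha}$. Differentiating the closed form gives $x'(r) = \tfrac{1-\alpha^2}{r^3}\left(\tfrac{1-\alpha^2}{2} - \alpha r\right)$, which is positive for $r < \tfrac{1-\alpha^2}{2\alpha}$ and negative for $r > \tfrac{1-\alpha^2}{2\alpha}$. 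On the large-$r$ regime $F = \tfrac12 I_{x(r)}$ with $x$ decreasing, so $F$ decreases; on the middle regime $F = 1 - \tfrac12 I_{x(r)}$ with $x$ increasing, so $F$ decreases there as well. It then remains to check continuity at the two junctions: at $r = \tfrac{1-\alpha}{2}$ one has $x = 0$, so $F = 1$, matching the constant branch, and at $r = \tfrac{1-\alpha^2}{2\alpha}$ one has $x = 1$ and $I_1 = 1$, so both formulas give $F = \tfrac12$. Concatenating the three pieces yields a continuous, nonincreasing function.

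I expect the main obstacle to be conceptual rather than computational: $x(r)$ is itself \emph{not} monotone in $r$, since it rises and then falls, so one cannot naively argue ``$x$ decreasing, hence $I$ decreasing, hence $F$ decreasing.'' The resolution, and the step that needs care, is the observation that the peak of $x$ occurs exactly at the hemisphere radius $r = \tfrac{1-\alpha^2}{2\alpha}$ where $h(r) = r$, which is precisely where \Cref{lem:cap_fraction} switches cases, and that passing to the complementary cap reverses the direction of the inference so that increasing $x$ corresponds to decreasing $F$. Pinning down that the sign change of $x'$ coincides exactly with the case boundary, and that the two branches meet continuously, is the heart of the argument; the rest is routine algebra.
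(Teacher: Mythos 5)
Your proposal is correct and takes essentially the same approach as the paper's proof: both reduce the claim to the monotonicity of the beta-function argument $x(r) = (2rh(r)-h(r)^2)/r^2$, exploit its invariance under $h \mapsto 2r-h$ to handle the regime $h(r) > r$, and observe that the sign change of $x'(r)$ occurs exactly at the case boundary $r = \frac{1-(\eps\sigma)^2}{2\eps\sigma}$, so that $F_{r,h(r)}$ decreases on both branches. Your explicit continuity check at the two junctions is a detail the paper leaves implicit, but it does not constitute a different argument.
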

\arxiv{\begin{proof}
    Shorthand $\tau = \eps \sigma$. By \Cref{cor:small_r_high_loss}, $F_{r,h(r)} = 1$ for $r \leq \frac{1-\tau}{2}$. Suppose $r > \frac{1-\tau}{2}$. Then by \Cref{lem:loss_cap}, $h(r) = r(1-\tau) + \frac{1-\tau^2}{2}$.
    
    \underline{Case 1}: $r < \frac{1-\tau^2}{2\tau}$. This rearranges into $-\tau r + \frac{1-\tau^2}{2} > 0$, so $h(r) = r - \tau r + \frac{1-\tau^2}{2} > r$, and $F_{r,h(r)} = 1 - F_{r,2r-h(r)}$. Since we want to prove that $F_{r,h(r)}$ decreases with $r$, it suffices to show that $F_{r,2r-h(r)}$ increases with $r$. By \Cref{lem:cap_fraction},
    \begin{equation*}
        F_{r,2r-h(r)} = \frac{1}{2}I_{(2r[2r-h(r)] - [2r-h(r)]^2)/r^2}\left(\frac{d-1}{2}, \frac{1}{2}\right).
    \end{equation*}
    We expand the subscript for $I$
     \begin{align}
        \frac{2r(2r-h(r)) - (2r-h(r))^2}{r^2} =&\ \frac{4r^2 - 2rh(r) - (4r^2 - 4rh(r) + h(r)^2)}{r^2} \nonumber \\
        =&\ \frac{2rh(r) - h(r)^2}{r^2} \label{eq:h_middle}.
    \end{align}
    Since $h(r) \in [0,2r]$ (\Cref{lem:loss_cap}), $2rh(r) - h(r)^2 \geq 0$. Because $I_x(a, b)$ increases with $x$ for $x \geq 0$, it is enough to show that $[2rh(r) - h(r)^2]/r^2$ increases with $r$. Expanding yields
    \begin{equation*}
        \frac{2rh(r) - h(r)^2}{r^2} = \frac{2r(1-\tau) + (1-\tau^2)}{r} - \frac{r^2(1-\tau)^2 + r(1-\tau)(1-\tau^2) + \frac{(1-\tau^2)^2}{4}}{r^2}.
    \end{equation*}
    We drop terms that don't depend on $r$ to get
    \begin{equation*}
        \frac{1-\tau^2}{r} - \frac{(1-\tau)(1-\tau^2)}{r} - \frac{(1-\tau^2)^2}{4r^2} = (1-\tau^2)\left[\frac{\tau}{r} - \frac{(1-\tau^2)}{4r^2}\right].
    \end{equation*}
    Differentiating the second term with respect to $r$ gives $\frac{1-2\tau r-\tau^2}{2r^3}$. The denominator is always positive, and the numerator  is positive exactly when $r < \frac{1-\tau^2}{2\tau}$.
    
    \underline{Case 2}: $r \geq \frac{1-\tau^2}{2\tau}$. Then $h(r) \leq r$, and
    \begin{equation*}
        F_{r,h(r)} = \frac{1}{2}I_{(2rh(r)-h(r)^2)/r^2}\left(\frac{d-1}{2}, \frac{1}{2}\right).
    \end{equation*}
    By similar logic, it suffices to show that $(2rh(r)-h(r)^2)/r^2$ is decreasing in $r$ for $r \geq \frac{1-\tau^2}{2\tau}$. This follows from the analysis of the previous case.
\end{proof}}

This suggests the following approach: if $M(0)$ samples $y$ with $\|y\|_2 > r$ with probability $p$, then $pF_{r,h(r)}$ is an upper bound on $\P{y\sim M(0)}{y \in V, \|y\|_2 > r}$. The following result gives a closed form for $p$. The expression uses the lower incomplete gamma function and the gamma function, which are also standard functions in mathematical libraries. The proof is mostly calculation and appears in \Cref{subsec:appendix_upper}.

\begin{definition}
\label{def:gammas}
    For $z \geq 0$, the \emph{Gamma function} is $\Gamma(z) = \int_0^\infty t^{z-1}e^{-t} dt$, the \emph{lower incomplete Gamma function} is $\gamma(z, x) = \int_0^x t^{z-1}e^{-t}dt$, and the \emph{upper incomplete Gamma function} is $\Gamma(z, x) = \Gamma(z) - \gamma(z, x)$.
\end{definition} 

\begin{restatable}{lemma}{rBound}
\label{lem:r_bound}
    For $r > 0$, $\P{y \sim M(0)}{\|y\|_2 \leq r} = \frac{\gamma(d, r/\sigma)}{\Gamma(d)}$.
\end{restatable}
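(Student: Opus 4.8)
The plan is to compute the radial distribution of $M(0)$ directly. Since $T(X) = 0$, the mechanism $M(0)$ has density $f(y) = \frac{1}{Z}\exp(-\|y\|_2/\sigma)$ on $\mathbb{R}^d$, where $Z = \int_{\mathbb{R}^d}\exp(-\|y\|_2/\sigma)\,dy$ is the normalizing constant. Because this density depends on $y$ only through $\|y\|_2$, it is spherically symmetric, so the natural move is to integrate in spherical coordinates and reduce the whole computation to a one-dimensional integral over the radius.

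First I would convert to spherical coordinates. Writing $\rho = \|y\|_2$ and using that the surface area of the sphere of radius $\rho$ in $\mathbb{R}^d$ is $S_{d-1}\rho^{d-1}$ (where $S_{d-1} = 2\pi^{d/2}/\Gamma(d/2)$ is the unit-sphere surface area), the pushforward of $f$ onto $\rho$ has density proportional to $\rho^{d-1}\exp(-\rho/\sigma)$. This is exactly (up to normalization) the density of a $\gammad{d}{\sigma}$ random variable, so the key structural observation is simply that $\|y\|_2 \sim \gammad{d}{\sigma}$.

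Next I would nail down the constants. Substituting $t = \rho/\sigma$ gives $\int_0^\infty \rho^{d-1}\exp(-\rho/\sigma)\,d\rho = \sigma^d \Gamma(d)$, so $Z = S_{d-1}\sigma^d\Gamma(d)$. The same substitution on the truncated integral then yields
\begin{align*}
    \P{y \sim M(0)}{\|y\|_2 \leq r}
    &= \frac{1}{Z}\int_0^r S_{d-1}\rho^{d-1}\exp(-\rho/\sigma)\,d\rho \\
    &= \frac{S_{d-1}\sigma^d}{Z}\int_0^{r/\sigma} t^{d-1}e^{-t}\,dt
    = \frac{\gamma(d, r/\sigma)}{\Gamma(d)},
\end{align*}
where the last equality uses $\gamma(d, r/\sigma) = \int_0^{r/\sigma}t^{d-1}e^{-t}\,dt$ from \Cref{def:gammas} together with the cancellation of $S_{d-1}\sigma^d$.

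There is no real obstacle beyond bookkeeping; the only point requiring care is that the unit-sphere surface-area factor $S_{d-1}$ and the scale factor $\sigma^d$ appear identically in both $Z$ and the numerator and hence cancel cleanly, leaving only the ratio of the incomplete to the complete Gamma function. As a sanity check, one can instead derive the same conclusion from the sampler in \Cref{lem:k_norm}: there $\|y\|_2 = r\|z\|_2$ with $r \sim \gammad{d+1}{\sigma}$ and $\|z\|_2$ distributed as the radius of a uniform point in the unit ball, i.e. $\|z\|_2 \sim \mathsf{Beta}(d,1)$; the Beta--Gamma product identity then gives $r\|z\|_2 \sim \gammad{d}{\sigma}$, matching the direct computation.
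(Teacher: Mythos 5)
Your proof is correct and follows essentially the same route as the paper's: integrate the spherically symmetric density in hyperspherical coordinates, use the unit-sphere surface area $2\pi^{d/2}/\Gamma(d/2)$ and the substitution $t = \rho/\sigma$ to evaluate both the normalizer and the truncated integral, and let the constants cancel to leave $\gamma(d, r/\sigma)/\Gamma(d)$. The closing sanity check via \Cref{lem:k_norm} and the Beta--Gamma product identity is a nice independent confirmation (and is indeed valid, since the radius of a uniform point in $B_2^d$ is $\mathsf{Beta}(d,1)$), but it is supplementary rather than a departure from the paper's argument.
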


The preceding results yield the following upper bound.
\begin{lemma}
\label{lem:upper_bound}
    Suppose \Cref{assm:d_sigma} holds. Let $r_1 < \ldots < r_{n_r}$ where $r_1 =\frac{1-\eps \sigma}{2}$. Then, for $y \sim M(0)$,
    \arxiv{\begin{equation*}
        \P{}{y \in V} \leq \frac{\gamma(d, r_1/\sigma)}{\Gamma(d)} + \sum_{j=1}^{n_r-1}  \frac{\gamma(d, r_{j+1}/\sigma) - \gamma(d, r_j/\sigma)}{\Gamma(d)}F_{r_j, h(r_j)} +  \frac{\Gamma(d, r_{n_r}/\sigma)}{\Gamma(d)}F_{r_{n_r}, h(r_{n_r})}.
    \end{equation*}}
    \narxiv{\begin{align*}
        \P{}{y \in V} \leq&\ \frac{\gamma(d, r_1/\sigma)}{\Gamma(d)} \\
        +&\ \sum_{j=1}^{n_r-1}  \frac{\gamma(d, r_{j+1}/\sigma) - \gamma(d, r_j/\sigma)}{\Gamma(d)}F_{r_j, h(r_j)} \\
        +&\  \frac{\Gamma(d, r_{n_r}/\sigma)}{\Gamma(d)}F_{r_{n_r}, h(r_{n_r})}.
    \end{align*}}
\end{lemma}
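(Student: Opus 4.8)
The plan is to condition on the radius $R = \|y\|_2$ and exploit the spherical symmetry of $M(0)$. By \Cref{lem:k_norm} the output $y \sim M(0)$ has a density depending only on $\|y\|_2$, so conditioned on $R = r$ the direction $y/r$ is uniform on $S_{r,0}$; hence the conditional probability of landing in $V$ is precisely the surface fraction of $S_{r,0}$ lying in $V$. By \Cref{lem:loss_cap} that region is the cap $\hat S_{r,0,h(r)}$, whose fraction is $F_{r,h(r)}$ (\Cref{def:F}). Writing $\mu$ for the law of $R$, whose CDF is $\mu([0,r]) = \gamma(d,r/\sigma)/\Gamma(d)$ (\Cref{lem:r_bound}), the law of total probability gives
\[
    \P{y \sim M(0)}{y \in V} = \int_0^\infty F_{r,h(r)} \, d\mu(r).
\]

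I would then split the radial integral along the partition $[0,r_1] \cup (r_1,r_2] \cup \cdots \cup (r_{n_r-1}, r_{n_r}] \cup (r_{n_r}, \infty)$ and bound each piece. On $[0,r_1]$, \Cref{cor:small_r_high_loss} gives $S_{r,0} \subset V$ for every $r \le r_1 = \tfrac{1-\eps\sigma}{2}$, so $F_{r,h(r)} = 1$ there and this piece contributes exactly $\mu([0,r_1]) = \gamma(d,r_1/\sigma)/\Gamma(d)$, matching the first term. On each shell $(r_j, r_{j+1}]$ and on the tail $(r_{n_r}, \infty)$, monotonicity (\Cref{lem:F_monotonic}) gives $F_{r,h(r)} \le F_{r_j,h(r_j)}$ for $r \in (r_j, r_{j+1}]$ and $F_{r,h(r)} \le F_{r_{n_r},h(r_{n_r})}$ for $r > r_{n_r}$. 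Replacing the integrand by these constants and pulling them out yields
\[
    \int_{r_j}^{r_{j+1}} F_{r,h(r)} \, d\mu(r) \le F_{r_j,h(r_j)} \, \mu\left((r_j, r_{j+1}]\right), \qquad \int_{r_{n_r}}^\infty F_{r,h(r)} \, d\mu(r) \le F_{r_{n_r},h(r_{n_r})} \, \mu\left((r_{n_r}, \infty)\right).
\]

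To finish, I would evaluate the radial masses via \Cref{lem:r_bound}: each shell mass is $\mu((r_j, r_{j+1}]) = [\gamma(d,r_{j+1}/\sigma) - \gamma(d,r_j/\sigma)]/\Gamma(d)$, and the tail mass is $\mu((r_{n_r}, \infty)) = 1 - \gamma(d,r_{n_r}/\sigma)/\Gamma(d) = \Gamma(d,r_{n_r}/\sigma)/\Gamma(d)$, using the identity $\Gamma(d) = \gamma(d,x) + \Gamma(d,x)$ from \Cref{def:gammas}. Summing the inner-ball term, the shell bounds over $j = 1, \dots, n_r - 1$, and the tail bound reproduces the claimed inequality exactly.

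The hard part is conceptual rather than computational: it is the reduction of the conditional probability $\P{}{y \in V \mid R = r}$ to the pure surface fraction $F_{r,h(r)}$, which relies simultaneously on the spherical symmetry of $M(0)$ and on the cap identification of \Cref{lem:loss_cap}. Once that holds, the monotonicity of \Cref{lem:F_monotonic} is exactly what licenses the piecewise-constant overestimate, and the only care needed is to verify that every approximation is conservative (an upper bound), so that the resulting estimate is a valid upper bound on $\P{}{y \in V}$ and hence yields a sound privacy guarantee.
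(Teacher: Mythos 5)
Your proposal is correct and follows essentially the same route as the paper's proof: the first term captures the mass of the ball of radius $r_1$ lying entirely in $V$ (\Cref{cor:small_r_high_loss}, \Cref{lem:r_bound}), and the remaining terms form a left Riemann sum over the radial shells and tail, made valid as an upper bound by the monotonicity of $F_{r,h(r)}$ (\Cref{lem:F_monotonic}). Your write-up is simply a more explicit rendering—via conditioning on the radius and the law of total probability—of the same decomposition the paper describes.
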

\begin{proof}
    The first term is the mass placed on the ball lying entirely in the high privacy loss region (\Cref{cor:small_r_high_loss} and \Cref{lem:r_bound}); the second term is a (left) Riemann sum that upper bounds the mass placed on the high privacy loss region between balls of radius $r_1$ and $r_{n_r}$ (\Cref{lem:cap_fraction}, \Cref{lem:r_bound}, and \Cref{lem:F_monotonic}); and the last is an upper bound on the mass placed on the high privacy loss region outside the ball of radius $r_{n_r}$. More specifically, it is a Riemann sum in which the $j$th approximating rectangle has base length as the $j$th interval on the grid $\left[0, \frac{\gamma(d, r_{1}/\sigma)}{\Gamma(d)},...,\frac{\gamma(d, R_{n_{r}}/\sigma)}{\Gamma(d)},\infty\right]$ and height $F_{r_{j},h(r_{j})}$.
\end{proof}

\Cref{alg:term_1_ub} provides overall upper bound pseudocode.

\begin{algorithm}
    \caption{Term1UpperBound}
    \label{alg:term_1_ub}
    \begin{algorithmic}[1]
    \STATE {\bfseries Input:} Dimension $d$; scale parameter $\sigma$; privacy parameter $\eps$; largest radius $r^*$; number of radii $n_r$
        \IF{$\sigma > 1/\eps$}
            \STATE Return 0 (\Cref{lem:large_sigma})
        \ENDIF
        \IF{$d=1$}
            \STATE Return $1 - \frac{1}{2}\exp\left(\frac{1}{2}\left[\eps - \frac{1}{\sigma}\right]\right)$ (\Cref{lem:one_dim})
        \ENDIF
        \STATE Define $r_1 \gets \frac{1-\eps \sigma}{2}$, $r_{n_r} \gets r^*$, and $r_2, \ldots, r_{n_r-1}$ regularly spaced between $r_1$ and $r_{n_r}$
        \STATE Return upper bound from \Cref{lem:upper_bound}
    \end{algorithmic}
\end{algorithm}

\subsubsection{Second Term Lower Bound}
\label{subsubsec:lb}
\arxiv{It remains to lower bound
\begin{align*}
    \P{y \sim M(1)}{\ell_{M,X',X}(y) \leq -\eps} =&\ \P{y \sim M(1)}{\ln\left(\frac{f_X'(y)}{f_{X}(y)}\right) \leq -\eps} \\
    =&\ \P{y \sim M(1)}{\ln\left(\frac{\exp[-\|y-e_1\|_2 / \sigma]}{\exp[-\|y\|_2 / \sigma]}\right) \leq -\eps} \\
    =&\ \P{y \sim M(1)}{\frac{1}{\sigma}\left(\|y\|_2 - \|y-e_1\|_2\right) \leq -\eps} \\
    =&\ \P{y \sim M(1)}{\frac{1}{\sigma}\left(\|y-e_1\|_2 - \|y\|_2\right) \geq \eps}.
\end{align*}}
\narxiv{It remains to lower bound $\P{y \sim M(1)}{\ell_{M,X',X}(y) \leq -\eps}$. By the same logic used to derive \Cref{eq:plrv_1_ub}, we rewrite it as
\begin{align*}
    &\P{y \sim M(1)}{\ln\left(\frac{f_X'(y)}{f_{X}(y)}\right) \leq -\eps} \\
    =&\ \P{y \sim M(1)}{\frac{1}{\sigma}\left(\|y-e_1\|_2 - \|y\|_2\right) \geq \eps}.
\end{align*}}
The level sets of $M(1)$ are spheres centered at $e_1$. As in the upper bound analysis, there are a few simple cases. The first follows from the same reasoning used for \Cref{lem:large_sigma}.
\begin{corollary}
\label{cor:large_sigma_lb}
    If $\sigma \geq \frac{1}{\eps}$, then $\P{y \sim M(1)}{y \in V} = 0$.
\end{corollary}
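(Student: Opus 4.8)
The plan is to observe that the high privacy loss region $V$ from \Cref{def:V} is a purely geometric object that does not depend on which mechanism we sample from: it is defined solely in terms of $\sigma$, $\eps$, and the fixed points $0$ and $e_1$. Consequently, the structural characterization of $V$ established in the proof of \Cref{lem:large_sigma} carries over verbatim, and only the distribution against which we integrate changes. First I would recall that when $\sigma > \frac{1}{\eps}$, the triangle inequality forces $\|y-e_1\|_2 - \|y\|_2 \leq \|e_1\|_2 = 1 < \sigma\eps$ for every $y$, so $V = \emptyset$; and when $\sigma = \frac{1}{\eps}$, the inequality $\|y-e_1\|_2 - \|y\|_2 \geq \sigma\eps$ holds only along the ray on the $-e_1$ axis, a set of Lebesgue measure zero.

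The only additional ingredient needed relative to \Cref{lem:large_sigma} is that $M(1)$, like $M(0)$, is absolutely continuous with respect to Lebesgue measure: its density $f_{X'}(y) \propto \exp(-\|y - e_1\|_2/\sigma)$ is finite everywhere. Therefore $M(1)$ assigns probability zero to any set that is empty or has Lebesgue measure zero. Combining this with the two cases above yields $\P{y \sim M(1)}{y \in V} = 0$ whenever $\sigma \geq \frac{1}{\eps}$.

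There is essentially no obstacle here; the result is an immediate corollary, which is why it can be stated as following ``by the same reasoning.'' The one point worth making explicit is that although the probability is now taken under $M(1)$ rather than $M(0)$, the region $V$ is the identical set, so the emptiness and measure-zero conclusions transfer directly, and absolute continuity of $M(1)$ closes the argument. Note also that the corollary only requires the forward direction of the equivalence in \Cref{lem:large_sigma}, so none of the converse reasoning (the nondegeneracy argument for $\sigma < \frac{1}{\eps}$) is needed.
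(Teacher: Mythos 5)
Your proof is correct and takes essentially the same approach as the paper, which simply asserts that the corollary ``follows from the same reasoning used for \Cref{lem:large_sigma}'': namely, $V$ is a fixed geometric set independent of which mechanism is sampled (empty when $\sigma > \frac{1}{\eps}$, of Lebesgue measure zero when $\sigma = \frac{1}{\eps}$), and $M(1)$ has a density, so it assigns zero probability to it. Your explicit appeal to absolute continuity of $M(1)$ is just a careful spelling-out of what the paper leaves implicit, not a different argument.
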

The second case uses the same argument as \Cref{lem:one_dim}.
\begin{lemma}
\label{lem:one_dim_lb}
    If $\sigma \leq \frac{1}{\eps}$ and $d = 1$, then $\P{y \sim M(1)}{y \in V} = \frac{1}{2}\exp\left(\frac{1}{2}\left[-\eps - \frac{1}{\sigma}\right]\right)$.
\end{lemma}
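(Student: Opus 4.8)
The plan is to mirror the proof of \Cref{lem:one_dim}: in $d=1$ the $\ell_2$ mechanism is the Laplace mechanism, so I would treat $M(1)$ as the distribution $\lap{\sigma}$ centered at $e_1 = 1$, with density proportional to $\exp(-|y-1|/\sigma)$, and reduce the claim to a single CDF evaluation.

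First I would characterize $V$ in one dimension. As established in the proof of \Cref{lem:one_dim}, the condition defining $V$, namely $|y - e_1| - |y| \geq \sigma\eps$, is equivalent to $y \leq \tfrac{1}{2}(1 - \sigma\eps)$; this follows from a short case analysis on the signs of $y$ and $y-1$. The hypothesis $\sigma \leq \tfrac{1}{\eps}$ gives $\sigma\eps \leq 1$, so the threshold $\tfrac{1}{2}(1-\sigma\eps)$ lies in $[0, \tfrac12]$ and in particular to the left of the center $1$.

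Next I would evaluate the Laplace CDF at this threshold. For $z < 1$, the CDF of $\lap{\sigma}$ centered at $1$ is $F(z) = \tfrac{1}{2}\exp\!\left(\tfrac{z-1}{\sigma}\right)$. Substituting $z = \tfrac{1}{2}(1-\sigma\eps)$ yields
\[
    \P{y \sim M(1)}{y \in V} = F\!\left(\tfrac{1}{2}(1 - \sigma\eps)\right) = \frac{1}{2}\exp\!\left(\frac{\tfrac{1}{2}(1-\sigma\eps) - 1}{\sigma}\right),
\]
and simplifying the exponent to $\tfrac{1}{2}\bigl(-\eps - \tfrac{1}{\sigma}\bigr)$ gives exactly the claimed formula. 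As a sanity check, at the boundary $\sigma = \tfrac{1}{\eps}$ both sides reduce to $\tfrac{1}{2}\exp(-1/\sigma)$.

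There is no genuine obstacle here; the entire argument is a one-dimensional CDF computation parallel to \Cref{lem:one_dim}. The only point that requires care is confirming that the threshold $\tfrac{1}{2}(1-\sigma\eps)$ sits below the center $1$ so that the left branch of the Laplace CDF applies, and then carrying out the arithmetic in the exponent correctly.
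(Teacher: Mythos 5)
Your proposal is correct and follows essentially the same argument as the paper's own proof: identify $M(1)$ in $d=1$ with a Laplace distribution centered at $1$, reduce membership in $V$ to the threshold condition $y \leq \tfrac{1}{2}(1-\sigma\eps)$, and evaluate the left branch of the Laplace CDF, simplifying the exponent to $\tfrac{1}{2}\left(-\eps - \tfrac{1}{\sigma}\right)$. Your added check that the threshold lies below the center $1$ (so the correct CDF branch applies) is exactly the role played by the condition $1 - \sigma\eps \geq 0$ in the paper's proof.
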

\begin{proof}
    $M(1)$ has the same noise density as the Laplace mechanism $\lap{1,\sigma}$, and $|y-e_1| - |y| \geq \sigma \eps$ if and only if $y \leq \frac{1}{2}(1 - \sigma \eps)$. For $z < 1$, the $\lap{1, \sigma}$ CDF is $F(z) = \frac{1}{2}\exp\left(\frac{z-1}{\sigma}\right)$, so by $1 - \sigma \eps \geq 0$, $\P{y \sim M(1)}{y \in V} = \frac{1}{2}\exp\left(\frac{1}{2}\left[-\eps - \frac{1}{\sigma}\right]\right)$.
\end{proof}
We therefore work under \Cref{assm:d_sigma} for the rest of this section. The remaining analysis reasons about specific spheres $S_{R,1}$.
\begin{definition}
\label{def:U_R}
    For $R > 0$, define $U_R$ to be the fraction of $S_{R,1}$ contained in $V$.
\end{definition}
We start by identifying when $U_R = 0$.
\begin{lemma}
\label{lem:U_R_0}
    If $0 < R < \frac{1+\eps \sigma}{2}$, then $U_{R} = 0$. 
\end{lemma}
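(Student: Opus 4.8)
The plan is to reduce the claim to a one-dimensional comparison between $R - \eps\sigma$ and the distance from the origin to the sphere $S_{R,1}$. The key observation is that membership in $V$ becomes trivial once we restrict to a level set of $M(1)$, because the term $\|y - e_1\|_2$ is then constant.

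First I would simplify the defining condition of $V$ on the sphere. Every $y \in S_{R,1}$ satisfies $\|y - e_1\|_2 = R$ by \Cref{def:spheres}, so $y \in V$ if and only if $\frac{1}{\sigma}\left(R - \|y\|_2\right) \geq \eps$, i.e. $\|y\|_2 \leq R - \eps\sigma$. Consequently $U_R > 0$ can hold only if some point of $S_{R,1}$ lies within distance $R - \eps\sigma$ of the origin, which reduces the whole question to comparing $R - \eps\sigma$ against $\min_{y \in S_{R,1}} \|y\|_2$.

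Next I would compute that minimum explicitly. The center $e_1$ of $S_{R,1}$ is at distance $\|e_1\|_2 = 1$ from the origin, so the nearest point of the sphere to the origin is $(1-R)e_1$, giving $\min_{y \in S_{R,1}} \|y\|_2 = |1-R|$. Here I invoke \Cref{assm:d_sigma}, which gives $\eps\sigma < 1$ and hence $\frac{1+\eps\sigma}{2} < 1$; combined with the hypothesis $R < \frac{1+\eps\sigma}{2}$ this forces $R < 1$, so the minimum is simply $1 - R$.

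Finally I would close the argument by the algebraic equivalence $1 - R > R - \eps\sigma \iff R < \frac{1+\eps\sigma}{2}$, which is exactly the hypothesis. Thus every $y \in S_{R,1}$ satisfies $\|y\|_2 \geq 1 - R > R - \eps\sigma$, so no point of $S_{R,1}$ lies in $V$ and $U_R = 0$. I do not expect a genuine obstacle; the only subtlety worth flagging is the use of $\eps\sigma < 1$ to rule out the regime $R > 1$, where the nearest-point distance would instead be $R - 1$ and the inequality $R-1 > R - \eps\sigma$ would fail. The assumption guarantees $R < 1$, so the clean distance comparison is valid.
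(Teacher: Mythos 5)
Your proof is correct, but it takes a genuinely different and more elementary route than the paper. The paper reuses the cap machinery from the first-term analysis: via \Cref{cor:small_r_high_loss} and \Cref{lem:loss_cap} it determines the largest $e_1$-coordinate attained anywhere in $V$, namely $\frac{1-\eps\sigma}{2}$, and compares it against the smallest $e_1$-coordinate $1-R$ on $S_{R,1}$; the hypothesis $R < \frac{1+\eps\sigma}{2}$ is exactly what makes $1-R$ exceed that maximum. You instead exploit the fact that $\|y - e_1\|_2 \equiv R$ is constant on the level set $S_{R,1}$, so membership in $V$ collapses to the one-dimensional condition $\|y\|_2 \leq R - \eps\sigma$, and the lemma reduces to the distance comparison $\min_{y \in S_{R,1}}\|y\|_2 = 1 - R > R - \eps\sigma$, which is equivalent to the hypothesis; your use of $\eps\sigma < 1$ from \Cref{assm:d_sigma} to force $R < 1$ and rule out the $|1-R| = R-1$ branch is exactly the right care to take. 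Your approach buys two things: it is self-contained, needing none of the earlier cap lemmas, and it gives more than the statement asks, since the same observation $y \in V \cap S_{R,1} \iff \|y\|_2 \leq R - \eps\sigma$, combined with the identity $\|y\|_2^2 = R^2 + 2y_1 - 1$ valid on $S_{R,1}$, immediately yields the cutoff $y_1 \leq \frac{1 + (\eps\sigma)^2 - 2\eps\sigma R}{2}$ and hence would reprove \Cref{lem:lower_bound_cap_exists} and \Cref{lem:e1_cap} with far less algebra than the paper uses. What the paper's route buys is uniformity: it reasons about $V$ the same way the whole section does, as a union of caps on spheres centered at the origin, so the quantities it manipulates are already on hand.
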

\begin{proof}
    Shorthand $\tau = \eps \sigma$ for neatness. By \Cref{cor:small_r_high_loss}, $F_{r,h(r)} = 1$ for $r \leq \frac{1-\tau}{2}$. Let $r_1 = \frac{1-\tau}{2}$ denote the largest radius such that $F_{r,h(r)} = 1$. This shows that the point with the largest $e_1$-coordinate in $V \cap S_{r,0}$ has $e_1$-coordinate $r$ for $0 < r \leq r_1$. For $r > r_1$, the point with the largest $e_1$-coordinate in $V \cap S_{r,0}$ has $e_1$-coordinate $-r + h(r) = -r\tau + \frac{1-\tau^2}{2}$ which is monotonically decreasing as $r$ increases. Overall, the point with the largest $e_1$-coordinate in $V \cap S_{r,0}$ is increasing for $0 < r \leq r_1$ and decreasing for $r > r_1$, so $r_1$ is the maximum $e_1$-coordinate of this point over all radii.
    
    It follows that $0 < R < 1 - r_{1}$ implies $S_{R,1} \cap S_{r,0} = \emptyset$ for all $r > 0$, and since $V \subset \cup_{r \in \mathbb{R}^{+}}S_{r,0}$, we get $U_{R} = 0$.
\end{proof}

It remains to handle the large $R$ case. We will show that, as was the case in the upper bound analysis, each $S_{R,1} \cap V$ is a spherical cap on $S_{R,1}$. The first result proves that $S_{R,1} \cap V$ has this form. This result is not technically necessary for the rest of the argument, but it explains why we attempt to solve for $S_{R,1} \cap V$ as a cap later.

\begin{lemma}
\label{lem:lower_bound_cap_exists}
    For $R \geq \frac{1+\eps \sigma}{2}$, $S_{R,1} \cap V$ is a spherical cap $\hat S_{R,1,H(R)}$.
\end{lemma}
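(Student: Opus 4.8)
The plan is to exploit the fact that on the sphere $S_{R,1}$ the distance to the center is frozen: every $y \in S_{R,1}$ has $\|y - e_1\|_2 = R$. This collapses the two-norm inequality defining $V$ into a single constraint on $\|y\|_2$, which I can then linearize into a threshold on the first coordinate $y_1$ and read off as a cap.

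First I fix $y \in S_{R,1}$ and substitute $\|y - e_1\|_2 = R$, so that $y \in V$ is equivalent to $R - \|y\|_2 \geq \eps\sigma$, i.e. $\|y\|_2 \leq R - \eps\sigma$. Under the hypothesis $R \geq \frac{1+\eps\sigma}{2}$ together with \Cref{assm:d_sigma} (which gives $\eps\sigma < 1$), the right-hand side satisfies $R - \eps\sigma \geq \frac{1-\eps\sigma}{2} > 0$, so both sides are nonnegative and I may square without reversing the inequality: $y \in V \iff \|y\|_2^2 \leq (R - \eps\sigma)^2$. Confirming this nonnegativity before squaring is the one step that genuinely uses the hypothesis, and it is the only place where care is needed; everything else is forced algebra.

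Next I linearize using the sphere identity. Expanding $\|y - e_1\|_2^2 = R^2$ and subtracting from $\|y\|_2^2$ gives $\|y\|_2^2 - \|y-e_1\|_2^2 = 2y_1 - 1$, hence $\|y\|_2^2 = R^2 + 2y_1 - 1$. Plugging this into the squared inequality and cancelling $R^2$ yields the purely linear condition $y_1 \leq \frac{1 + \eps^2\sigma^2}{2} - R\eps\sigma$. Since the cap $\hat S_{R,1,H(R)}$ is defined (with center first-coordinate $c=1$ and radius $R$) precisely by $y_1 \leq 1 - R + H(R)$, matching the two thresholds and solving gives $H(R) = R(1-\eps\sigma) - \frac{1 - \eps^2\sigma^2}{2}$, so that $S_{R,1} \cap V = \hat S_{R,1,H(R)}$.

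Finally I check that this $H(R)$ lies in $[0,2R]$, so that it describes a genuine (nonempty, proper) cap. The bound $H(R) \leq 2R$ rearranges to $-R(1+\eps\sigma) \leq \frac{1-\eps^2\sigma^2}{2}$, which holds for every $R>0$ because the left side is negative while the right side is positive. The bound $H(R) \geq 0$ rearranges, after dividing by $1 - \eps\sigma > 0$, to exactly $R \geq \frac{1+\eps\sigma}{2}$, matching the hypothesis and dovetailing with \Cref{lem:U_R_0}, which handles the complementary regime $R < \frac{1+\eps\sigma}{2}$ where the cap degenerates to the empty set. There is no real obstacle in the argument: the constant-radius simplification plus the single sphere identity do all the work.
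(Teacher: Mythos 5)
Your proof is correct, and it takes a genuinely different and more direct route than the paper. The paper splits the work into two pieces: \Cref{lem:lower_bound_cap_exists} itself is proved by a qualitative symmetry argument (both $V$, as the convex hull of a hyperboloid component per \Cref{lem:large_sigma}, and $S_{R,1}$ are rotationally symmetric about the $e_1$-axis, so their intersection must be a cap), and the height $H(R)$ is only identified afterwards in \Cref{lem:e1_cap}, by invoking the cap characterization of $V$ on spheres centered at the origin (\Cref{lem:loss_cap}), substituting $r' = \sqrt{R^2 + 2y_1 - 1}$, and solving a quadratic inequality --- which forces a root-selection argument and a separate verification of the $2r'$ branch of the $\min$. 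Your observation that $\|y - e_1\|_2 = R$ is constant on $S_{R,1}$ collapses all of this: membership in $V$ becomes the ball condition $\|y\|_2 \leq R - \eps\sigma$, the hypothesis $R \geq \frac{1+\eps\sigma}{2}$ enters exactly once (to guarantee $R - \eps\sigma \geq \frac{1-\eps\sigma}{2} > 0$, so squaring is an equivalence), and the sphere identity $\|y\|_2^2 = R^2 + 2y_1 - 1$ linearizes the condition to $y_1 \leq \frac{1+\eps^2\sigma^2 - 2\eps\sigma R}{2}$, which is exactly the cutoff $X$ of \Cref{lem:e1_cap}; your height $H(R) = R(1-\eps\sigma) - \frac{1-\eps^2\sigma^2}{2}$ and the range check $H(R) \in [0,2R]$ likewise agree with the paper's. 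What your approach buys: it proves \Cref{lem:lower_bound_cap_exists} and \Cref{lem:e1_cap} simultaneously, is self-contained (no dependence on \Cref{lem:loss_cap} or the hyperboloid geometry), and eliminates the case analysis. What the paper's two-step approach buys is mainly expository --- the symmetry argument explains in advance why one should look for a cap --- but logically your single calculation subsumes both lemmas.
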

\begin{proof}
    Shorthand $\tau = \eps \sigma$. Recall from \cref{lem:large_sigma} that $V$ is the convex hull of the $-e_1$ facing component of a hyperboloid with foci $0$ and $e_1$ and with constant difference $\tau$. To see why $S_{R,1} \cap V$ is a spherical cap, observe that both the hyperboloid-bounded region $V$ and $S_{R,1}$ are symmetric around $e_1$, so their intersection must be symmetric around $e_1$ as well. Let $P_{v_1, v_2}$ denote projection onto $\text{span}(v_1, v_2)$. Then $P_{e_1,e_2}(V) \cap P_{e_1,e_2}(S_{R,1})$ is a 1-dimensional spherical cap of some height $H$ in $P_{e_1,e_2}(S_{R,1})$. Since $V \cap S_{R,1}$ is symmetric around $e_1$, the previous sentence also holds if we replace $P_{e_1,e_2}$ with $P_{e_1, v}$ for any $v$ orthogonal to $e_1$. Thus $V \cap S_{R,1} = \cup_{v \perp e_1}P_{e_1,v}(V) \cap P_{e_1,v}(S_{R,1}) = \hat{S}_{R,1,H}$.
\end{proof}

It remains to identify the $H$ referenced in \Cref{lem:lower_bound_cap_exists}.\narxiv{ To do so, we start with an arbitrary $y \in S_{R,1}$ and solve for an $e_1$ coordinate $X$ such that $y_1 \leq X$ if and only if $y_1 \leq -\|y\|_2 + h(\|y\|_2)$ (\Cref{lem:loss_cap}). The bulk of the proof beyond this idea is algebraic manipulation, so it appears in \Cref{subsec:appendix_lower}.}
\begin{restatable}{lemma}{lowerCap}
\label{lem:e1_cap}
    Define $X = \frac{1+(\eps \sigma)^2-2\eps \sigma R}{2}$ and $H(R) = R-1+X$. Then cap $\hat S_{R, 1, H(R)} = S_{R,1} \cap V$.
\end{restatable}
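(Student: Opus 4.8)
The plan is to reduce the set equality $S_{R,1} \cap V = \hat S_{R,1,H(R)}$ to a single scalar inequality on the first coordinate $y_1$. By \Cref{def:spheres}, the cap $\hat S_{R,1,H(R)}$ of the sphere $S_{R,1}$ (whose center $e_1$ has first coordinate $1$) is exactly the set of $y \in S_{R,1}$ with $y_1 \leq 1 - R + H(R)$. Since we are given $H(R) = R - 1 + X$, this threshold is precisely $1 - R + H(R) = X$. So it suffices to prove that, for every $y \in S_{R,1}$, one has $y \in V$ if and only if $y_1 \leq X$.

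The key simplification is that $\|y - e_1\|_2$ is constant on $S_{R,1}$. For $y \in S_{R,1}$ we have $\|y - e_1\|_2 = R$, so the defining inequality of $V$ (\Cref{def:V}), namely $\|y-e_1\|_2 - \|y\|_2 \geq \eps\sigma$, collapses to $\|y\|_2 \leq R - \eps\sigma$. To turn this into a condition on $y_1$ alone, I would use the elementary identity $\|y\|_2^2 - \|y - e_1\|_2^2 = 2y_1 - 1$ (obtained by expanding both squared norms and cancelling the coordinates $y_2, \ldots, y_d$), which on $S_{R,1}$ gives $\|y\|_2^2 = R^2 + 2y_1 - 1$.

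It then remains to square the inequality $\|y\|_2 \leq R - \eps\sigma$. \Cref{assm:d_sigma} gives $\eps\sigma < 1$, and together with $R \geq \frac{1+\eps\sigma}{2}$ this yields $R - \eps\sigma \geq \frac{1-\eps\sigma}{2} > 0$; since $\|y\|_2 \geq 0$ as well, squaring is an equivalence. Substituting $\|y\|_2^2 = R^2 + 2y_1 - 1$ turns $\|y\|_2^2 \leq (R - \eps\sigma)^2$ into $R^2 + 2y_1 - 1 \leq R^2 - 2\eps\sigma R + (\eps\sigma)^2$, which rearranges to exactly $y_1 \leq \frac{1 + (\eps\sigma)^2 - 2\eps\sigma R}{2} = X$, as desired. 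Finally, to confirm that $\hat S_{R,1,H(R)}$ is a genuine cap I would check $H(R) \in [0, 2R]$: writing $H(R) = R(1 - \eps\sigma) - \frac{1 - (\eps\sigma)^2}{2}$, the lower bound $H(R) \geq 0$ is equivalent (after dividing by $1 - \eps\sigma > 0$) to the hypothesis $R \geq \frac{1+\eps\sigma}{2}$, while the upper bound $H(R) \leq 2R$ is immediate since $R(1-\eps\sigma) - 2R = -R(1+\eps\sigma) < 0$.

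There is no serious obstacle here: the argument is essentially the observation that $V$ restricted to a sphere centered at a focus is controlled by a single linear condition on $y_1$, plus routine algebra. The only points requiring care are (i) justifying that the squaring step is an equivalence, which is why one must invoke $\eps\sigma < 1$ and the lower bound on $R$ to guarantee $R - \eps\sigma > 0$, and (ii) matching the resulting threshold $y_1 \leq X$ with the cap-height convention of \Cref{def:spheres} so that the computed $H(R)$ indeed lies in $[0,2R]$. This contrasts with the alternative route through \Cref{lem:loss_cap}, where the characterization $y_1 \leq -\|y\|_2 + h(\|y\|_2)$ couples $y_1$ to $\|y\|_2 = \sqrt{R^2 + 2y_1 - 1}$ and forces a messier nested square root; exploiting $\|y - e_1\|_2 = R$ directly avoids that entirely.
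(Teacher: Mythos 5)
Your proof is correct, and it takes a genuinely different---and cleaner---route than the paper. The paper's proof works through the characterization of $V$ on spheres centered at the \emph{origin} (\Cref{lem:loss_cap}): writing $\|y\|_2 = \sqrt{R^2 + 2y_1 - 1}$, it solves for the largest cutoff $X$ satisfying $X \leq \min\left(-\eps\sigma\sqrt{R^2+2X-1} + \frac{1-(\eps\sigma)^2}{2},\, 2\sqrt{R^2+2X-1}\right)$, which entails solving a quadratic inequality, selecting the correct root, and separately checking the second branch of the min. You avoid all of that by exploiting the fact that $e_1$ is a focus of the hyperboloid bounding $V$: on $S_{R,1}$ the distance $\|y - e_1\|_2 = R$ is constant, so the defining inequality of $V$ (\Cref{def:V}) collapses to the ball condition $\|y\|_2 \leq R - \eps\sigma$; since $R - \eps\sigma \geq \frac{1-\eps\sigma}{2} > 0$ under \Cref{assm:d_sigma} and $R \geq \frac{1+\eps\sigma}{2}$, squaring is an equivalence and gives the linear threshold $y_1 \leq X$ directly. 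Your argument needs only \Cref{def:V} rather than \Cref{lem:loss_cap} and eliminates the root-selection case analysis entirely. Two small points: first, like the paper's own proof, you implicitly use the hypothesis $R \geq \frac{1+\eps\sigma}{2}$ inherited from \Cref{lem:lower_bound_cap_exists}, which is fine but worth stating explicitly since the displayed lemma omits it. Second, the paper's proof establishes the stronger containment $H(R) \in [0, R]$, which is cited verbatim in the proof of \Cref{lem:U_R_monotonic} (it is what licenses applying \Cref{lem:cap_fraction} in its $h \leq r$ branch), whereas you only verify $H(R) \in [0, 2R]$; the stronger bound follows from your same algebra, namely $H(R) - R = -\eps\sigma R - \frac{1 - (\eps\sigma)^2}{2} < 0$, so you should record it to keep the downstream argument intact.
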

\arxiv{\begin{proof}
    To verify this, we start with an arbitrary $y \in S_{R,1}$ and attempt to determine a cutoff $X \in \mathbb{R}$ such that $y \in V$ iff $y_1 \leq X$. For any two points $y,y' \in S_{R,1}$ such that $y_1 = y_{1}'$, it is true that $y \in V$ iff $y' \in V$ since $V$ is spherically symmetric around $e_1$. If $y' = y_{1}e_{1} + v$ for some $v$ orthogonal to $e_1$, then by $S_{R,1}$'s spherical symmetry around $e_1$, the point $y = y_{1}e_{1} + |v|e_2$ is also in $S_{R,1}$. Therefore, our goal is to find the minimum cutoff $X$ for the point $y = (y_1, y_2, 0,...,0)$ such that $y \in V$ iff $y_{1} \leq X$.
    
    We know $y \in S_{r',0}$ for some $r' > 0$. Since $y_1^2 + y_2^2 = r'^2$, and $y \in S_{R,1}$ implies $(y_1-1)^2 + y_2^2 = R^2$, then combining these yields $r' = \sqrt{R^2+2y_1-1}$. Thus we have $y \in V$ if and only if $y_1 \leq -r' + h(r')$. By \Cref{lem:loss_cap}, $-r' + h(r') = \min(-\tau r' + \frac{1-\tau^2}{2}, 2r')$. We have $-\tau r' + \frac{1-\tau^2}{2} = -\tau\sqrt{R^2+2y_1-1} + \frac{1-\tau^2}{2}$,
    so we solve for the largest $X$ where $X \leq  \min(-\tau\sqrt{R^2+2X-1} + \frac{1-\tau^2}{2}, 2\sqrt{R^2+2X-1})$.
    
    Solving for $X$ under the first constraint yields 
    \begin{align}
        \left(X - \frac{1-\tau^2}{2}\right)^2 \geq&\ \tau^2(R^2+2X-1) \nonumber \\
        X^2 - X(1+\tau^2) + \frac{\tau^4-2\tau^2+1 - 4\tau^2R^2 + 4\tau^2}{4} \geq&\ 0 \nonumber \\
        X^2 - X(1+\tau^2) + \frac{\tau^4+2\tau^2+1 - 4\tau^2R^2}{4} \geq&\ 0 \label{eq:X_inequality}.
    \end{align}
    The roots of the LHS are given by 
    \begin{equation*}
        X =\frac{1 + \tau^2 \pm \sqrt{(1+\tau^2)^2 - ([1+\tau^2]^2 - 4\tau^2R^2)}}{2} \\
        =\frac{1 + \tau^2 \pm 2\tau R}{2}.
    \end{equation*}
    Let $x_1 = \frac{1+\tau^2 -2\tau R}{2}$ and $x_2 = \frac{1+\tau^2 + 2\tau R}{2}$. As the LHS of \Cref{eq:X_inequality} is a convex parabola, the inequality is satisfied on the intervals $(-\infty, x_1] \cup [x_2, \infty)$. But the first constraint on $X$ also implies the weaker inequality $X < \frac{1-\tau^2}{2}$ so $X \notin [x_2, \infty)$. Then $x_1$ is the largest value that satisfies the first constraint.
    
    For any $X \in (x_1, x_2)$, we have $X > -\tau\sqrt{R^2+2X-1} + \frac{1-\tau^2}{2} \geq \min(-\tau\sqrt{R^2+2X-1} + \frac{1-\tau^2}{2}, 2\sqrt{R^2+2X-1})$. So if we can show that $x_1 \leq 2\sqrt{R^2+2x_{1}-1}$, then $x_1$ will indeed be the desired cutoff. We actually prove a stronger inequality
    \begin{align*}
        x_1 \leq&\ \sqrt{R^2+2x_{1}-1} \\
        \frac{1+\tau^2 - 2\tau R}{2} \leq&\ R - \tau \\
        (1+\tau)^2 \leq&\ 2R(1 + \tau) \\
        \frac{1+\tau}{2} \leq&\ R
    \end{align*}
    which follows from our starting assumption on $R$. So $X = \frac{1+\tau^2 -2\tau R}{2}$ is the desired cutoff. This leads to a cap on $S_{R,1}$ of height
    \begin{equation*}
        H(R) = X-(1-R) = \frac{1+\tau^2 -2\tau R}{2} - 1 + R = R(1-\tau) - \frac{1-\tau^2}{2}.
    \end{equation*}
    The last step is verifying that this is a valid height lying in $[0,2R]$. The lower bound follows from $R(1-\tau) \geq \frac{1-\tau^2}{2}$ rearranging into the starting assumption $R \geq \frac{1+\tau}{2}$. We prove a stronger upper bound of $R$ by rearranging
    \begin{align*}
        R(1-\tau) - \frac{1-\tau^2}{2} \leq&\ R \\
        -\frac{1-\tau^2}{2} \leq& \tau R
    \end{align*}
    which uses $0 < \tau <1$ and $R > 0$.
\end{proof}}

\begin{lemma}
\label{lem:U_R_monotonic}
    If $R \geq \frac{1+\eps \sigma}{2}$, then $U_R$ is monotone increasing in $R$.
\end{lemma}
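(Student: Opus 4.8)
The plan is to reduce the claim to the monotonicity of a single scalar quantity, exactly as in the proof of \Cref{lem:F_monotonic}, while exploiting the fact that the lower-bound height never exceeds the radius, which removes the case split. Shorthand $\tau = \eps\sigma$ and write $c = \frac{1-\tau^2}{2}$. By \Cref{lem:e1_cap}, for $R \geq \frac{1+\tau}{2}$ we have $S_{R,1} \cap V = \hat S_{R,1,H(R)}$ with $H(R) = R(1-\tau) - c$, so that $U_R = F_{R,H(R)}$ (the cap fraction depends only on radius and height, so \Cref{lem:cap_fraction} applies verbatim to the sphere centered at $e_1$).

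First I would observe that $H(R) \leq R$ for every $R > 0$, since $H(R) - R = -\tau R - c < 0$ (using $\tau > 0$ and $c > 0$, the latter from \Cref{assm:d_sigma}). Consequently only the first branch of \Cref{lem:cap_fraction} is ever active, and
\[
U_R = \tfrac{1}{2}\, I_{g(R)}\!\left(\tfrac{d-1}{2}, \tfrac{1}{2}\right), \qquad g(R) = \frac{2R\,H(R) - H(R)^2}{R^2}.
\]
Because $I_x(a,b)$ is increasing in $x$ on its domain, and $g(R) = H(R)(2R-H(R))/R^2 \geq 0$ whenever $H(R) \in [0,2R]$ (which holds by \Cref{lem:e1_cap}), it suffices to show that $g(R)$ is increasing in $R$ on the relevant range.

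Next I would carry out the one short computation: substituting $H(R) = (1-\tau)R - c$ and expanding the numerator gives $2R\,H(R) - H(R)^2 = (1-\tau^2)R^2 - 2c\tau R - c^2$, hence
\[
g(R) = (1-\tau^2)\left[1 - \frac{\tau}{R} - \frac{1-\tau^2}{4R^2}\right].
\]
Differentiating, $g'(R) = (1-\tau^2)\big[\tau/R^2 + (1-\tau^2)/(2R^3)\big]$, and every factor is positive because \Cref{assm:d_sigma} gives $0 < \tau < 1$ and hence $1 - \tau^2 > 0$. Thus $g' > 0$, so $g$ is increasing, and therefore so is $U_R$.

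The only real subtlety — and the step I would flag — is the first observation that $H(R) \leq R$ always holds. This is precisely what distinguishes the present lemma from \Cref{lem:F_monotonic}: there the height $h(r)$ could exceed $r$, forcing the reflected identity $F_{r,h(r)} = 1 - F_{r,2r-h(r)}$ together with a split into two cases, whereas here the sign of the constant term in the height is flipped, so no case analysis is needed and the derivative is positive outright. Everything else is the same routine algebra deferred to the appendix for the analogous upper-bound result.
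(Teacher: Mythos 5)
Your proof is correct and follows essentially the same route as the paper's: express $U_R$ via \Cref{lem:e1_cap} and \Cref{lem:cap_fraction} (using $H(R) \leq R$ so only the first branch applies), then show the argument of the regularized incomplete beta function is increasing in $R$ by direct differentiation under $0 < \tau < 1$. The only difference is cosmetic algebra --- you factor out $(1-\tau^2)$ where the paper writes the same quantity as $1 - \left(\frac{\tau^2 - 2\tau R - 1}{2R}\right)^2$ --- and both yield the same positive derivative.
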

\begin{proof}
    Shorthand $\tau = \eps \sigma$. \Cref{lem:U_R_0} established that $U_R = 0$ for $R < \frac{1+\tau}{2}$, so it remains to show that $U_R$ is increasing in $R$ for $R \geq \frac{1+\tau}{2}$. The proof of \Cref{lem:e1_cap} established that $H(R) \in [0,R]$, so we apply \Cref{lem:cap_fraction} to get that $\hat S_{R, 1, H(R)}$ occupies a fraction of $S_{R,1}$ given by $\frac{1}{2}I_{(2RH(R) - H(R)^2)/R^2}\left(\frac{d-1}{2}, \frac{1}{2}\right)$. By the same logic used in the proof of \Cref{lem:F_monotonic}, we can show that $U_{R}$ is monotone increasing by verifying that the expression in the $I$ subscript is nonnegative and increasing in $R$.
    
    The first condition follows from the aforementioned result $H(R) \leq R$. For the second condition,
    \arxiv{\begin{align*}
        \frac{2R[R-1+X] - [R-1+X]^2}{R^2} =&\ \frac{2R^2 - 2R + 2XR - [R^2 - 2R + 2XR + 1 -2X + X^2]}{R^2} \\
        =&\ \frac{R^2 - 1 + 2X - X^2}{R^2} \\
        =&\ 1 - \left(\frac{X-1}{R}\right)^2 \\
        =&\ 1 - \left(\frac{\tau^2 - 2\tau R - 1}{2R}\right)^2.
    \end{align*}}
    \narxiv{\begin{align*}
        &\frac{2R[R-1+X] - [R-1+X]^2}{R^2} \\
        =&\ \frac{R^2 - 1 + 2X - X^2}{R^2} \\
        =&\ 1 - \left(\frac{X-1}{R}\right)^2 \\
        =&\ 1 - \left(\frac{\tau^2 - 2\tau R - 1}{2R}\right)^2.
    \end{align*}}
    We wanted to prove that this expression is increasing in $R$, so we show that the second term is decreasing in $R$. Taking its derivative with respect to $R$ yields
    \arxiv{\begin{equation*}
        2\left(\frac{\tau^2 - 2\tau R - 1}{2R}\right) \cdot \frac{2R(-2\tau) - 2(\tau^2 - 2\tau R - 1)}{4R^2} = \left(\frac{\tau^2 - 2\tau R - 1}{R}\right) \cdot \frac{1-\tau^2}{2R^2}.
    \end{equation*}}
    \narxiv{\begin{align*}
        &2\left(\frac{\tau^2 - 2\tau R - 1}{2R}\right) \cdot \frac{2R(-2\tau) - 2(\tau^2 - 2\tau R - 1)}{4R^2} \\
        =&\ \left(\frac{\tau^2 - 2\tau R - 1}{R}\right) \cdot \frac{1-\tau^2}{2R^2}.
    \end{align*}}
    Because $R > 0$ and $0 < \tau < 1$, the numerator of the first term is negative, and the remaining terms are positive, so the entire quantity is negative.
\end{proof}

Since we want to lower bound the mass on these $U_R$, we use \Cref{lem:r_bound} and employ a left Riemann sum in which the $j$th approximating rectangle has base length as the $j$th interval on the grid $\left[\frac{\gamma(d, R_{1}/\sigma)}{\Gamma(d)},...,\frac{\gamma(d, R_{n_{R}}/\sigma)}{\Gamma(d)},\infty\right]$ and has height $F_{R_{j},H(R_{j})}$. The proof of the following result uses similar logic as the proof of \Cref{lem:upper_bound}.

\begin{lemma}
\label{lem:lower_bound}
    Suppose \Cref{assm:d_sigma} holds. Let $R_1 < \ldots < R_{n_R}$ where $R_1 = \frac{1+\tau}{2}$. Then for $y \sim M(1)$,
    \arxiv{\begin{equation*}
        \P{}{y \in V} \geq \sum_{j=1}^{n_R-1} \frac{\gamma(d, R_{j+1}/\sigma) - \gamma(d,R_j/\sigma)}{\Gamma(d)}F_{R_j, H(R_j)} + \frac{\Gamma(d, R_{n_{R}}/\sigma)}{\Gamma(d)}F_{R_{n_{R}},H(R_{n_R})}
    \end{equation*}}
    \narxiv{\begin{align*}
        \P{}{y \in V} \geq&\ \sum_{j=1}^{n_R-1} \frac{\gamma(d, R_{j+1}/\sigma) - \gamma(d,R_j/\sigma)}{\Gamma(d)}F_{R_j, H(R_j)} \\
        &+ \frac{\Gamma(d, R_{n_{R}}/\sigma)}{\Gamma(d)}F_{R_{n_{R}},H(R_{n_R})}
    \end{align*}}
    where we reused the definition of $F$ from \Cref{def:F}.
\end{lemma}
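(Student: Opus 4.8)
The plan is to mirror the proof of \Cref{lem:upper_bound}, decomposing the mass that $M(1)$ places on $V$ according to the radius $R = \|y - e_1\|_2$ and then replacing the resulting radial integral with a Riemann sum that \emph{underestimates} it. Since the $\ell_2$ mechanism is spherically symmetric about its center $e_1$, conditioning on $\|y - e_1\|_2 = R$ makes $y$ uniform on $S_{R,1}$, so the conditional probability that $y \in V$ is exactly the cap fraction $U_R$ (\Cref{def:U_R}). Writing $\mu$ for the law of $R$ under $M(1)$, the law of total expectation gives $\P{y \sim M(1)}{y \in V} = \int_0^\infty U_R \, d\mu(R)$.

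First I would record the radial distribution. \Cref{lem:r_bound}, together with the translation invariance of the $\ell_2$ mechanism, gives $\P{y \sim M(1)}{\|y - e_1\|_2 \leq R} = \gamma(d, R/\sigma)/\Gamma(d)$. Hence the $\mu$-mass of an annulus $[R_j, R_{j+1})$ is $[\gamma(d, R_{j+1}/\sigma) - \gamma(d, R_j/\sigma)]/\Gamma(d)$, and the tail mass beyond $R_{n_R}$ is $1 - \gamma(d, R_{n_R}/\sigma)/\Gamma(d) = \Gamma(d, R_{n_R}/\sigma)/\Gamma(d)$ by \Cref{def:gammas}.

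Next I would bound the integrand. By \Cref{lem:U_R_0}, $U_R = 0$ for $R < R_1 = \frac{1+\tau}{2}$, so the contribution of $[0, R_1)$ is exactly zero and can be dropped without loss. On each remaining annulus $[R_j, R_{j+1})$, \Cref{lem:U_R_monotonic} gives $U_R \geq U_{R_j}$, and on the tail $[R_{n_R}, \infty)$ it gives $U_R \geq U_{R_{n_R}}$. Combining with \Cref{lem:e1_cap} and \Cref{lem:cap_fraction}, which identify $U_{R_j}$ with the cap fraction $F_{R_j, H(R_j)}$ (cap fractions depend only on radius and height, not on the center), these pointwise lower bounds convert the integral into the claimed left Riemann sum.

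The only point requiring care is the direction of the Riemann-sum inequality. In \Cref{lem:upper_bound} the integrand $F_{r,h(r)}$ was \emph{decreasing}, so evaluating at left endpoints overestimated the integral and produced an upper bound; here $U_R$ is \emph{increasing}, so evaluating at left endpoints underestimates it and produces the desired lower bound. The fact that $U_R$ vanishes below $R_1$ is what lets the sum start at $R_1$ while remaining a valid lower bound, and the tail term is legitimate precisely because $U_R \geq U_{R_{n_R}}$ throughout $[R_{n_R}, \infty)$.
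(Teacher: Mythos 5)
Your proposal is correct and follows essentially the same route as the paper: decompose $\P{y \sim M(1)}{y \in V}$ radially via \Cref{lem:r_bound}, drop the region below $R_1$ using \Cref{lem:U_R_0}, identify $U_{R_j} = F_{R_j, H(R_j)}$ via \Cref{lem:e1_cap}, and use the monotone increase of $U_R$ (\Cref{lem:U_R_monotonic}) so that the left Riemann sum with the tail term is a valid under-estimate. Your explicit remark about why left endpoints give a lower bound here (increasing integrand) versus an upper bound in \Cref{lem:upper_bound} (decreasing integrand) is exactly the ``similar logic'' the paper invokes.
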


\Cref{alg:term_2_lb} provides overall lower bound pseudocode.

\begin{algorithm}
    \caption{Term2LowerBound}
    \label{alg:term_2_lb}
    \begin{algorithmic}[1]
    \STATE {\bfseries Input:} Dimension $d$; scale parameter $\sigma$; privacy parameter $\eps$; largest radius $R^*$; number of radii $n_R$
    \IF{$\sigma \geq 1/\eps$}
        \STATE Return 0 (\Cref{cor:large_sigma_lb})
    \ENDIF
    \IF{$d=1$}
        \STATE Return $\frac{1}{2}\exp\left(\frac{1}{2}\left[-\eps - \frac{1}{\sigma}\right]\right)$ (\Cref{lem:one_dim_lb})
    \ENDIF
    \STATE Define $R_1 \gets \frac{1+\tau}{2}$, $R_{n_R} \gets R^*$, and $R_2, \ldots, R_{n_R-1}$ regularly spaced between $R_1$ and $R_{n_R}$
    \STATE Return lower bound from \Cref{lem:lower_bound}
    \end{algorithmic}
\end{algorithm}

\subsubsection{Overall Algorithm}
\label{subsec:overall}
\Cref{alg:term_1_ub} upper bounds the first term in the inequality in \Cref{lem:approx_dp} and \Cref{alg:term_2_lb} lower bounds the second term. This upper bounds the LHS of the inequality, so if it is at most $\delta$, then the mechanism is $(\eps, \delta)$-DP.

The last step is choosing $n_r$, $r^*$, $n_R$, and $R^*$. Our experiments suggests that setting $n_r = n_R = 1000$ yields a reasonably tight approximation for $d \leq 100$ (see \Cref{subsec:experiments_privacy}); larger values should only be tighter, at the cost of speed. We choose $r^*$ using \Cref{lem:r_bound} so $\P{y \sim M(0)}{\|y\|_2 > r^*} = \frac{\delta}{100}$; in the context of \Cref{lem:upper_bound}, we use $F_{r_{n_r}, h(r_{n_r})}$ to upper bound the cap fraction for all $S_{r,0}$ with $r \geq r_{n_r}$, so we choose $r^*$ to make the effect of this approximation negligible. By the same logic, we use $R^* = r^*$.

\Cref{alg:overall} collects the entire process into pseudocode.

\begin{algorithm}
    \caption{CheckApproximateDP}
    \label{alg:overall}
    \begin{algorithmic}[1]
    \STATE {\bfseries Input:} Dimension $d$; scale parameter $\sigma$; privacy parameters $\eps$, $\delta$; numbers of radii $n_r$, $n_R$
    \STATE Compute $r^*$ such that $\P{y \sim M(0)}{\|y\|_2 > r^*} = \frac{\delta}{100}$ (\Cref{lem:r_bound})
    \STATE $T_1 \gets \text{Term1UpperBound}(d, \sigma, \eps, r^*, n_r)$
    \STATE $T_2 \gets \text{Term2LowerBound}(d, \sigma, \eps, r^*, n_R)$
    \STATE Return $(T_1 - e^\eps T_2 \leq \delta)$
    \end{algorithmic}
\end{algorithm}
\subsection{Parallel Sampler}
\label{subsec:l2_sampler}
The sampler described here is a simple consequence of \Cref{lem:k_norm} and well-known statistical facts, and similar samplers have appeared for related mechanisms \cite{YRUF14, SU16}. We collect the relevant information here, with proofs in \Cref{subsec:appendix_sampler} for completeness.

By~\Cref{lem:k_norm}, we sample $rz$ where $r \sim \gammad{d+1}{\sigma}$ and $z \sim_U B_2^d$. In a parallel setting, suppose we have a worker for each of $d$ coordinates and a central manager. We first sample $r$.

\begin{restatable}{lemma}{gammaSample}
\label{lem:gamma_sample}
    Let $U_1, \ldots, U_{d+1} \sim_{iid} U(0,1)$ be uniform random samples. Then $-\sigma\sum_{i=1}^{d+1} \log(U_i) \sim \gammad{d+1}{\sigma}$.
\end{restatable}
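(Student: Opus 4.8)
The plan is to prove the identity via moment generating functions (MGFs), which converts the distributional claim into a short algebraic one. Recall that $\gammad{k}{\sigma}$ (shape $k$, scale $\sigma$) has MGF $(1-\sigma s)^{-k}$ for $s < 1/\sigma$, and that an MGF that is finite on a neighborhood of the origin uniquely determines the distribution. It therefore suffices to compute the MGF of $-\sigma\sum_{i=1}^{d+1}\log(U_i)$ and verify that it equals $(1-\sigma s)^{-(d+1)}$.

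First I would handle a single term. For $U \sim U(0,1)$ and $s < 1$,
\[
    \E{}{e^{-s\log U}} = \E{}{U^{-s}} = \int_0^1 u^{-s}\,du = \frac{1}{1-s},
\]
which is the MGF of $\expo{1} = \gammad{1}{1}$; so $-\log U$ is a standard (unit-rate) exponential. Equivalently, one can read this off the CDF, since $\P{}{-\log U \le t} = \P{}{U \ge e^{-t}} = 1 - e^{-t}$ for $t \ge 0$.

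Next, by independence the MGF of the sum factors, giving $\E{}{e^{-s\sum_i \log U_i}} = \prod_{i=1}^{d+1}(1-s)^{-1} = (1-s)^{-(d+1)}$, the MGF of $\gammad{d+1}{1}$; this is the standard fact that independent Gammas with a common scale add their shape parameters. Finally, multiplying the random variable by $\sigma$ replaces $s$ by $\sigma s$ in the MGF, yielding $(1 - \sigma s)^{-(d+1)}$, which is exactly the MGF of $\gammad{d+1}{\sigma}$, completing the argument.

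There is no genuine obstacle here, as this is a textbook fact; the only points requiring care are bookkeeping ones. I would make sure to fix the shape/scale (rather than shape/rate) convention so that $\sigma$ enters the MGF as $(1-\sigma s)^{-(d+1)}$, confirm that all the MGFs are finite on a common neighborhood of $0$ (namely $s < 1/\sigma$) so the uniqueness theorem legitimately applies, and track the sign carefully so that it is $-\sigma\log(U_i)$ — the positive quantity — that is Gamma-distributed.
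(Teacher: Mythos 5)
Your proof is correct, but it follows a genuinely different route from the paper's. The paper argues structurally: it first shows $-\log(U) \sim \expo{1}$ via the inverse-CDF transform (using $f^{-1}(t) = -\log(1-t)$ and $U \sim 1-U$), and then identifies $\gammad{d+1}{1}$ as the time of the $(d+1)$th arrival of a rate-1 Poisson process, so that it decomposes as a sum of $d+1$ i.i.d.\ $\expo{1}$ interarrival times; scaling by $\sigma$ finishes. You instead establish the exponential fact by the one-line computation $\E{}{U^{-s}} = (1-s)^{-1}$ and then let moment generating functions do the work: independence factors the MGF into $(1-s)^{-(d+1)}$, multiplying the random variable by $\sigma$ substitutes $s \mapsto \sigma s$, and MGF uniqueness (legitimately applicable, since all MGFs involved are finite on $s < 1/\sigma$) identifies the law as $\gammad{d+1}{\sigma}$. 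The trade-off is which standard fact each argument imports as a black box: yours needs the uniqueness theorem for MGFs, while the paper's needs the Poisson-process characterization of the Gamma distribution (equivalently, that i.i.d.\ exponentials with a common scale sum to a Gamma). Your version is arguably more self-contained computationally, and your explicit care with the shape/scale convention and the domain of finiteness covers the only spots where a slip could realistically occur.
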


By \Cref{lem:gamma_sample}, each worker $i$ can sample $\log(U_i)$, and the manager can add the combined sum to their own sample and scale the result by $-\sigma$ to obtain $r$. It remains to sample $z$.

\begin{restatable}{lemma}{ballSample}
\label{lem:l2_sample}
    Let $X_1, \ldots, X_d \sim_{iid} N(0,1)$, and let $Y \sim U(0,1)$ be a uniform sample from $[0,1]$. Then $Y^{1/d} \cdot \frac{(X_1, \ldots, X_d)}{\sqrt{\sum_{i=1}^d X_i^2}}$ is a uniform sample from $B_2^d$. 
\end{restatable}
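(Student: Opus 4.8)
The plan is to decompose the genuinely uniform distribution on $B_2^d$ into independent direction and radius components and to match each separately against the proposed sample. First I would show that the normalized vector $\Theta = (X_1,\ldots,X_d)/\sqrt{\sum_{i=1}^d X_i^2}$ (which is almost surely well defined, since the Gaussian vector is nonzero with probability $1$) is uniform on the unit sphere $S^{d-1}$. This follows from the rotational invariance of the isotropic Gaussian: the joint density of $(X_1,\ldots,X_d)$ is proportional to $\exp(-\tfrac12\sum_i x_i^2)$, a function of $\|x\|_2$ alone, so $QX \stackrel{d}{=} X$ for every orthogonal $Q$. Hence the law of $\Theta$ is invariant under all rotations of $S^{d-1}$, and the unique rotation-invariant probability measure on the sphere is the uniform one. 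Because $Y$ is drawn independently of the $X_i$ and $\Theta$ is a function of the $X_i$ alone, the direction $\Theta$ and the scalar $Y^{1/d}$ are independent.

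Next I would identify the radial law of the true uniform distribution on $B_2^d$. If $Z$ is uniform on the unit ball, scaling of Lebesgue measure gives $\P{}{\|Z\|_2 \le r} = \mathrm{vol}(rB_2^d)/\mathrm{vol}(B_2^d) = r^d$ for $r \in [0,1]$, so the radius I output must have CDF $r \mapsto r^d$. I then verify that $Y^{1/d}$ does: $\P{}{Y^{1/d} \le r} = \P{}{Y \le r^d} = r^d$, using $Y \sim U(0,1)$ and $r^d \in [0,1]$.

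Finally I would combine the two facts. A radially symmetric distribution on $B_2^d$ is exactly the law of $R\Theta$ where $\Theta$ is uniform on $S^{d-1}$, $R$ is distributed as the radius, and $R$ is independent of $\Theta$; applied to the uniform case this says the uniform law on the ball factors as the product of the uniform measure on $S^{d-1}$ and the radial law with CDF $r^d$. The candidate sample $Y^{1/d}\Theta$ has independent components with precisely these two marginals, hence has the uniform law on $B_2^d$.

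I expect the only real obstacle to be making this factorization rigorous without grinding through the spherical-coordinate Jacobian. The clean route is to verify the pushforward directly: for a bounded test function $g$, compute $\E{}{g(Y^{1/d}\Theta)}$ by conditioning on $\Theta$ and integrating the radius against its density $d\,r^{d-1}$ on $[0,1]$, then compare with $\frac{1}{\mathrm{vol}(B_2^d)}\int_{B_2^d} g$ after switching the latter to polar coordinates; the surface element integrates against the uniform measure on $S^{d-1}$ and the radial Jacobian $r^{d-1}$ matches the density $d\,r^{d-1}$ up to the normalizing constant, so the two expectations agree. Everything beyond this is routine.
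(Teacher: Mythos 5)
Your proof is correct and follows essentially the same route as the paper's: uniform direction on the sphere from the rotational invariance of the isotropic Gaussian, plus matching the radial CDF $r \mapsto r^d$ of the uniform ball distribution against that of $Y^{1/d}$. Your explicit treatment of independence and the direction--radius factorization is in fact more careful than the paper's, which leaves that final step implicit.
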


To apply~\cref{lem:l2_sample}, each worker samples a standard Gaussian and reports its square in the same combine used to compute $r$ in \Cref{lem:gamma_sample}. The manager samples $Y$ and publishes it along with $r$ and the sum of squares. At this point, each worker can compute their coordinate of $rz$.
\section{Experiments}
\label{sec:experiments}
This section discusses experiments evaluating the tightness of our privacy analysis (\Cref{subsec:experiments_privacy}) as well as the $\ell_2$  mechanism's error (\Cref{subsec:experiments_error}) and speed (\Cref{subsec:experiments_speed}). All experiments use the $\ell_2$ mechanism with $n_r = n_R =1000$. Experiment code may be found \narxiv{in the Supplement}\arxiv{on Github~\cite{G25}}.

\subsection{Privacy}
\label{subsec:experiments_privacy}
Our first experiments attempt to measure the tightness of our privacy analysis. To do so, we compare two methods for estimating $\P{}{\ell_{M,X,X'} \geq \eps} - e^\eps \P{}{\ell_{M,X',X}  \leq -\eps}$, the quantity that must be upper bounded by $\delta$ for $M$ to satisfy $(\eps, \delta)$-DP (\Cref{lem:approx_dp}).

We fix $\eps = 1$, $\delta = 0.01$, and vary $d = 1, 2, \ldots, 100$. At each $d$, the first method empirically estimates the smallest $\sigma$ such that  \begin{equation}
\label{eq:empirical_privacy}
    \P{}{\ell_{M,X,X'} \geq 1} - e \cdot \P{}{\ell_{M,X',X}  \leq -1} \leq \delta,
\end{equation}
where $M$ is the $\ell_2$ mechanism with parameter $\sigma$, and $M(X)$ is centered at 0 while $M(X')$ is centered at 1. By the analysis of \Cref{sec:l2}, \Cref{eq:empirical_privacy} is equivalent to
\arxiv{\begin{equation*}
    \P{y \sim M(0)}{\frac{1}{\sigma}(\|e_1-y\|_2 - \|y\|_2) \geq \eps} - e \cdot \P{y \sim M(1)}{\frac{1}{\sigma}(\|e_1-y\|_2 - \|y\|_2) \geq \eps} \leq \delta.
\end{equation*}}
\narxiv{\begin{align*}
    &\P{y \sim M(0)}{\frac{1}{\sigma}(\|e_1-y\|_2 - \|y\|_2) \geq \eps} \\
    -&\ e \cdot \P{y \sim M(1)}{\frac{1}{\sigma}(\|e_1-y\|_2 - \|y\|_2) \geq \eps} \leq \delta.
\end{align*}}
We can estimate the value of this expression by drawing $n$ samples from $M(0)$ and $n$ samples from $M(1)$, counting the fraction $c_1$ satisfying the first inequality and the fraction $c_2$ satisfying the second inequality, and then computing $c_1 - e \cdot c_2$. Binary searching over $\sigma$ to find the smallest value where this computation is upper bounded by $\delta$ leads to an empirical estimate of the minimum $\sigma$ yielding a $(1, 0.01)$-DP $\ell_2$ mechanism.

The second method computes its $\sigma$ using the analysis of \Cref{sec:l2}. As shown in \Cref{fig:empirical_privacy}, our algorithm closely tracks the empirical values. This provides empirical evidence that the resulting privacy analysis is both sound and (with the chosen $n_r = n_R = 1000$ and range for $d$) tight.

\begin{figure}[H]
        \centering
        \includegraphics[scale=0.55]{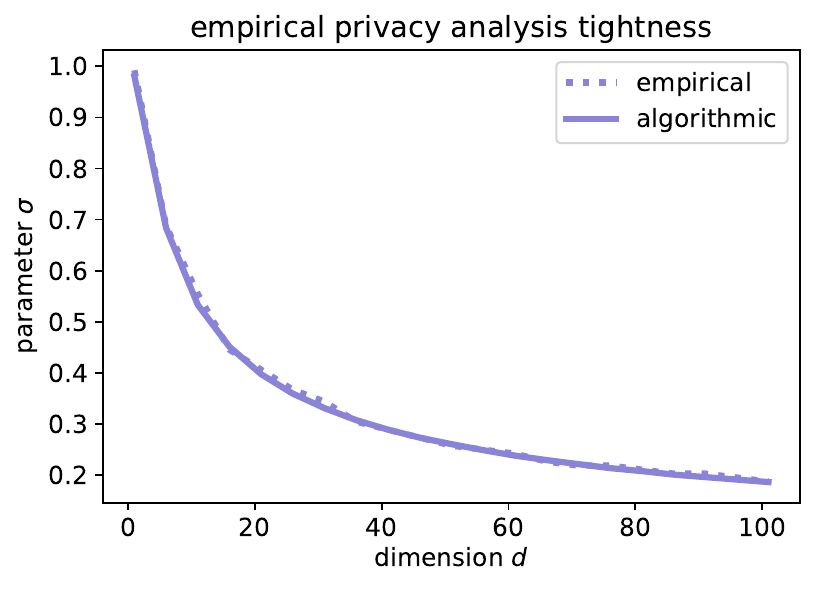}
        \caption{A comparison of empirical (dotted) and algorithmic (solid) estimates of privacy loss. At each $d$,  the empirical method uses $n = 1000 / \delta = 10^5$ samples.}
        \label{fig:empirical_privacy}
\end{figure}

\subsection{Error}
\label{subsec:experiments_error}
We first derive some basic results about the mean squared $\ell_2$ error of the Laplace, $\ell_2$, and Gaussian mechanisms used in our experiments. The Laplace and $\ell_2$ results are corollaries of the following lemma. The lemma is an extension of a previous result about the mean squared $\ell_2$ norm of a sample from an $\ell_p$ ball~\cite{JRY25} and is proved in \Cref{subsec:appendix_experiments}.

\begin{restatable}{lemma}{expectedKNorm}
\label{lem:expected_k_norm}
    The mean squared $\ell_2$ error of the $d$-dimensional $\ell_p$ mechanism with parameter $\sigma$ is
    \begin{equation*}
        (d\sigma)^2(d+1)\left(\frac{\Gamma(\frac{d}{p}) \Gamma(\frac{3}{p})}{\Gamma(\frac{1}{p}) \Gamma(\frac{d+2}{p})}\right).
    \end{equation*}
\end{restatable}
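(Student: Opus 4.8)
The plan is to read the noise distribution off the $K$-norm sampler in \Cref{lem:k_norm}: the $\ell_p$ mechanism outputs $T(X) + rz$, where $r \sim \gammad{d+1}{\sigma}$ and $z$ is an independent uniform draw from the unit $\ell_p$ ball $B_p^d$. The error is exactly the noise $rz$, so by independence the mean squared $\ell_2$ error factors as
\[
    \mathbb{E}\!\left[\|rz\|_2^2\right] = \mathbb{E}[r^2]\,\mathbb{E}\!\left[\|z\|_2^2\right].
\]
The radial factor is immediate from the Gamma moments: with shape $d+1$ and scale $\sigma$, $\mathbb{E}[r^2] = \mathrm{Var}(r) + \mathbb{E}[r]^2 = (d+1)\sigma^2 + (d+1)^2\sigma^2 = (d+1)(d+2)\sigma^2$. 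It remains to compute the angular factor $\mathbb{E}[\|z\|_2^2]$.

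This is where the real work lies, and it is the step that extends the $\ell_2$-ball computation of \cite{JRY25} to general $p$. Using the invariance of $B_p^d$ under coordinate permutations and sign flips, I would write $\mathbb{E}[\|z\|_2^2] = d\,\mathbb{E}[z_1^2]$ and evaluate $\mathbb{E}[z_1^2]$ as a ratio of integrals over $B_p^d$. The key tool is the Liouville--Dirichlet formula obtained via the substitution $t_i = |z_i|^p$: for exponents $a_i > 0$,
\[
    \int_{B_p^d} \prod_{i=1}^d |z_i|^{a_i - 1}\,dz = \frac{2^d}{p^d}\cdot\frac{\prod_{i=1}^d \Gamma(a_i/p)}{\Gamma\!\left(1 + \tfrac{1}{p}\sum_{i=1}^d a_i\right)},
\]
where the $2^d$ accounts for the sign symmetry and the $p^{-d}$ for the Jacobian. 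Setting all $a_i = 1$ recovers the volume of $B_p^d$, while $a_1 = 3$ with the remaining $a_i = 1$ gives $\int_{B_p^d} z_1^2\,dz$. Dividing and applying $\Gamma(1+x) = x\,\Gamma(x)$ to the arguments $1 + d/p$ and $1 + (d+2)/p$ yields
\[
    \mathbb{E}[z_1^2] = \frac{\Gamma(3/p)}{\Gamma(1/p)}\cdot\frac{\Gamma(1+d/p)}{\Gamma(1+(d+2)/p)} = \frac{d}{d+2}\cdot\frac{\Gamma(d/p)\,\Gamma(3/p)}{\Gamma(1/p)\,\Gamma((d+2)/p)},
\]
so that $\mathbb{E}[\|z\|_2^2] = \frac{d^2}{d+2}\cdot\frac{\Gamma(d/p)\Gamma(3/p)}{\Gamma(1/p)\Gamma((d+2)/p)}$.

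Multiplying the two factors finishes the argument: the $(d+2)$ from $\mathbb{E}[r^2]$ cancels the $(d+2)$ in the denominator of $\mathbb{E}[\|z\|_2^2]$, leaving $(d+1)d^2\sigma^2 = (d\sigma)^2(d+1)$ times the Gamma ratio, exactly the claimed expression. The main obstacle is the angular computation: getting the Dirichlet prefactor right and then tracking the Gamma arguments carefully through the division so that the $\Gamma(1+\cdot)$ shifts collapse to the clean ratio of $\Gamma(d/p)$ and $\Gamma((d+2)/p)$. Everything else is routine bookkeeping.
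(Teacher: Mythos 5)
Your proof is correct, and it diverges from the paper's in a meaningful way. Both arguments share the same skeleton: use the $K$-norm sampler of \Cref{lem:k_norm} to write the noise as $rz$ with $r \sim \gammad{d+1}{\sigma}$ independent of $z$ uniform on $B_p^d$, and then split the second moment into a radial factor and an angular factor. The paper does this split by conditioning on $r$ and integrating $\mathbb{E}_2^2(rB_p^d)$ against the Gamma density, so the radial second moment emerges from the integral $\int_0^\infty r^{d+2}e^{-r/\sigma}\,dr = \Gamma(d+3)\sigma^{d+3}$; you instead invoke independence directly and read off $\mathbb{E}[r^2] = (d+1)(d+2)\sigma^2$ from standard Gamma moments, which is the same computation in lighter notation. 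The genuine difference is the angular factor: the paper imports $\mathbb{E}_2^2(B_p^d)$ wholesale from \cite{JRY25} (restated as \Cref{lem:expected_squared_norm}), whereas you rederive it from scratch via the symmetry reduction $\mathbb{E}[\|z\|_2^2] = d\,\mathbb{E}[z_1^2]$ and the Liouville--Dirichlet integral over $B_p^d$ (whose prefactor $2^d/p^d$ and Gamma-argument bookkeeping you handle correctly; the ratio indeed collapses to $\tfrac{d}{d+2}\cdot\tfrac{\Gamma(d/p)\Gamma(3/p)}{\Gamma(1/p)\Gamma((d+2)/p)}$, consistent with the cited lemma after simplifying $\tfrac{d}{3}\cdot\tfrac{3d}{d+2} = \tfrac{d^2}{d+2}$). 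What the paper's route buys is brevity, at the cost of an external dependency; what your route buys is a self-contained proof that subsumes \Cref{lem:expected_squared_norm} as a special case, and the Dirichlet formula you use is strictly more general, yielding all mixed moments $\mathbb{E}\bigl[\prod_i |z_i|^{a_i-1}\bigr]$ rather than only the squared norm.
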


Since a $d$-dimensional statistic with $\ell_2$ sensitivity 1 has $\ell_1$ sensitivity $\sqrt{d}$, substituting $p=1$ and parameter $\sigma\sqrt{d}$ into \Cref{lem:expected_k_norm} yields the following result for the Laplace mechanism.

\begin{corollary}
    The Laplace mechanism with parameter $\sigma\sqrt{d}$ has mean squared $\ell_2$ error $2d^2\sigma^2$.
\end{corollary}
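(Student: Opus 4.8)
The plan is to treat this as a direct specialization of \Cref{lem:expected_k_norm}. The Laplace mechanism is exactly the $\ell_1$ mechanism (i.e., the $K$-norm mechanism with norm $\|\cdot\|_1$), and, as noted in the text immediately before the corollary, a $d$-dimensional statistic with $\ell_2$ sensitivity $1$ has $\ell_1$ sensitivity $\sqrt{d}$. Hence the relevant instance is the $\ell_1$ mechanism with scale parameter $\sigma\sqrt{d}$, and the whole proof reduces to substituting $p = 1$ and replacing the scale parameter in \Cref{lem:expected_k_norm} by $\sigma\sqrt{d}$, then simplifying.

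Concretely, I would split the resulting expression into two factors and handle each in turn. First, the prefactor becomes $(d \cdot \sigma\sqrt{d})^2(d+1) = d^3\sigma^2(d+1)$. Second, the Gamma ratio at $p=1$ is $\frac{\Gamma(d)\,\Gamma(3)}{\Gamma(1)\,\Gamma(d+2)}$, which I would simplify using $\Gamma(3)=2$, $\Gamma(1)=1$, and the recursion $\Gamma(d+2) = (d+1)\,d\,\Gamma(d)$; this collapses the ratio to $\frac{2}{d(d+1)}$. Multiplying the two pieces gives $d^3\sigma^2(d+1)\cdot\frac{2}{d(d+1)} = 2d^2\sigma^2$, matching the claimed value.

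There is no genuine obstacle here, since the argument is pure substitution followed by arithmetic. The only two points requiring care are (i) justifying the $\sqrt{d}$ rescaling via the $\ell_2$-to-$\ell_1$ sensitivity conversion, so that the correct instance of \Cref{lem:expected_k_norm} is invoked, and (ii) applying the Gamma recursion so that the $(d+1)$ and one factor of $d$ cancel cleanly against the prefactor. Both are routine.
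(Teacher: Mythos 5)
Your proposal is correct and follows exactly the paper's own route: the paper likewise obtains this corollary by substituting $p=1$ and parameter $\sigma\sqrt{d}$ into \Cref{lem:expected_k_norm}, justified by the $\ell_2$-to-$\ell_1$ sensitivity conversion $\Delta_1 = \sqrt{d}\,\Delta_2$. Your Gamma-function simplification $\frac{\Gamma(d)\Gamma(3)}{\Gamma(1)\Gamma(d+2)} = \frac{2}{d(d+1)}$ is the arithmetic the paper leaves implicit, and it checks out.
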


The $\ell_2$ mechanism uses $p=2$ and parameter $\sigma$.

\begin{corollary}
    The $\ell_2$ mechanism with parameter $\sigma$ has mean squared $\ell_2$ error $d(d+1)\sigma^2$.
\end{corollary}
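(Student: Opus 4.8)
The plan is to apply \Cref{lem:expected_k_norm} directly with $p=2$ and scale parameter $\sigma$, and then simplify the ratio of Gamma functions using the standard recurrence $\Gamma(z+1) = z\Gamma(z)$ together with the value $\Gamma(\tfrac{1}{2}) = \sqrt{\pi}$. Substituting $p=2$ into the lemma's formula gives the mean squared $\ell_2$ error
\begin{equation*}
    (d\sigma)^2(d+1)\left(\frac{\Gamma(\tfrac{d}{2}) \Gamma(\tfrac{3}{2})}{\Gamma(\tfrac{1}{2}) \Gamma(\tfrac{d+2}{2})}\right),
\end{equation*}
so the entire task reduces to showing that the parenthesized Gamma ratio equals $\tfrac{1}{d}$.

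First I would handle the two ``small'' Gamma factors: since $\Gamma(\tfrac{3}{2}) = \tfrac{1}{2}\Gamma(\tfrac{1}{2})$ by the recurrence, the factor $\Gamma(\tfrac{3}{2})/\Gamma(\tfrac{1}{2})$ collapses to $\tfrac{1}{2}$, with no need to ever evaluate $\sqrt{\pi}$ explicitly. Next I would treat the two ``large'' Gamma factors: writing $\tfrac{d+2}{2} = \tfrac{d}{2}+1$ and applying the recurrence once more yields $\Gamma(\tfrac{d+2}{2}) = \tfrac{d}{2}\Gamma(\tfrac{d}{2})$, so $\Gamma(\tfrac{d}{2})/\Gamma(\tfrac{d+2}{2}) = \tfrac{2}{d}$. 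Multiplying these two simplifications gives $\tfrac{1}{2}\cdot\tfrac{2}{d} = \tfrac{1}{d}$ for the Gamma ratio.

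Finally I would combine everything: the prefactor $(d\sigma)^2(d+1)$ times $\tfrac{1}{d}$ equals $d^2\sigma^2(d+1)/d = d(d+1)\sigma^2$, which is the claimed expression. There is no real obstacle here, as the result is an immediate specialization of \Cref{lem:expected_k_norm}; the only point requiring any care is applying the Gamma recurrence in the correct direction so that the $d$-dependent factors cancel cleanly, and confirming that the identity $\Gamma(\tfrac{d+2}{2}) = \tfrac{d}{2}\Gamma(\tfrac{d}{2})$ holds for all real $d \geq 1$ (which it does, since the recurrence is valid on all of the positive reals, not merely the integers).
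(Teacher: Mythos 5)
Your proof is correct and matches the paper's approach exactly: the paper presents this as an immediate corollary of \Cref{lem:expected_k_norm} with $p=2$, leaving the Gamma-function simplification implicit, and your calculation $\Gamma(\tfrac{3}{2})/\Gamma(\tfrac{1}{2}) = \tfrac{1}{2}$ and $\Gamma(\tfrac{d}{2})/\Gamma(\tfrac{d+2}{2}) = \tfrac{2}{d}$ is precisely the omitted arithmetic.
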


A similar result for the Gaussian mechanism is easy to prove directly (see \Cref{subsec:appendix_experiments} for proof).

\begin{restatable}{lemma}{gaussianExpectedSquaredNorm}
    The Gaussian mechanism with parameter $\sigma$ has mean squared $\ell_2$ error $d\sigma^2$.
\end{restatable}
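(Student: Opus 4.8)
The plan is to compute the mean squared $\ell_2$ error directly as the expected squared norm of the injected noise. The Gaussian mechanism outputs $Y = T(X) + Z$, so the error is $\E{}{\|Y - T(X)\|_2^2} = \E{}{\|Z\|_2^2}$, and it remains only to evaluate the second moment of the norm of the noise vector $Z$.

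First I would record the distribution of $Z$. Under the convention that $\sigma$ is the coordinatewise standard deviation, $Z$ is a spherical Gaussian with $Z \sim N(0, \sigma^2 I_d)$, so that the coordinates $Z_1, \ldots, Z_d$ are i.i.d.\ $N(0, \sigma^2)$. Then I would expand the squared norm coordinatewise and apply linearity of expectation together with the fact that each coordinate has mean zero:
\[
    \E{}{\|Z\|_2^2} = \sum_{i=1}^d \E{}{Z_i^2} = \sum_{i=1}^d \left(\var{Z_i} + (\E{}{Z_i})^2\right) = \sum_{i=1}^d \sigma^2 = d\sigma^2.
\]
Equivalently, one can observe that $\|Z\|_2^2 / \sigma^2 \sim \chi^2_d$, a chi-squared random variable with $d$ degrees of freedom and mean $d$, which immediately gives $\E{}{\|Z\|_2^2} = d\sigma^2$.

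There is no substantive obstacle here, as the computation is routine once the noise distribution is identified. The only point requiring care is the normalization convention for $\sigma$: the introduction writes the Gaussian density as proportional to $\exp(-[\|y - T(X)\|_2 / \sigma]^2)$, which would place variance $\sigma^2/2$ on each coordinate and yield error $d\sigma^2/2$. Matching the stated error $d\sigma^2$ fixes the convention that $\sigma$ denotes the per-coordinate standard deviation, i.e.\ density proportional to $\exp(-\|y - T(X)\|_2^2 / (2\sigma^2))$, which I would state explicitly at the outset to keep the result consistent with the standard parametrization of the Gaussian mechanism.
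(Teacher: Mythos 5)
Your proof is correct and follows essentially the same route as the paper's: expand $\|Z\|_2^2$ coordinatewise, apply linearity of expectation, and use $\E{}{Z_i^2} = \var{Z_i} + (\E{}{Z_i})^2 = \sigma^2$ for each i.i.d.\ $N(0,\sigma^2)$ coordinate. Your added remark pinning down the parametrization convention (per-coordinate standard deviation $\sigma$, not the $\exp(-[\|y - T(X)\|_2/\sigma]^2)$ normalization from the introduction) is a worthwhile clarification that the paper leaves implicit.
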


For a range of $d$, we solve for the smallest possible $\sigma$ for each mechanism to achieve $(\eps, \delta)$-DP and plot the mean squared $\ell_2$ error according to the preceding results. The Laplace mechanism uses $\sigma = \sqrt{d}/(\eps+\delta)$\footnote{Note that the smallest possible $\sigma$ for which the Laplace mechanism is $(\eps, \delta)$-DP is provably negligibly smaller than the one used here. See \Cref{subsec:appendix_experiments} for details.}, the Gaussian mechanism binary searches over $\sigma$ as described by~\citet{BW18}, and the $\ell_2$ mechanism binary searches over $\sigma$ using the algorithms from \Cref{sec:l2}. Throughout, binary searches use tolerance $0.001$ and we use $(1,10^{-5})$-DP.

This produces the left plot in \Cref{fig:intro} in the introduction. The Laplace mechanism obtains lower error than the analytic Gaussian mechanism for small $d$, the analytic Gaussian mechanism obtains lower error than the Laplace mechanism for larger $d$, and the $\ell_2$ mechanism dominates both. The gap between the $\ell_2$ mechanism and the better of the Laplace mechanism and analytic Gaussian mechanism is 0 at $d=1$ (when the Laplace and $\ell_2$ mechanism are identical) and peaks at $50\%$ at $d=7$ before gradually shrinking, to $5\%$ at $d=100$ and $<1\%$ at $d = 500$ (not pictured).

Analogous plots for a high-privacy regime of $(0.1, 10^{-7})$-DP and a low-privacy regime of $(10, 10^{-3})$-DP are essentially the same.

\subsection{Speed}
\label{subsec:experiments_speed}
The last set of experiments evaluates the speed of the $\ell_2$ mechanism, as executed on a typical personal computer. This runtime is split into two operations: the time to compute $\sigma$ and the time to sample the mechanism.\arxiv{ Results for both experiments appear in \Cref{fig:time}.}

The largest gap appears in the time to compute $\sigma$ \narxiv{(\Cref{fig:sigma_time})}\arxiv{(left plot)}. The Laplace computation is $\approx 100$x faster than the Gaussian computation, which is $\approx 100$x faster than the $\ell_2$ computation. This may be expected, as the Laplace computation is a single arithmetic expression, the Gaussian computation is a binary search over the standard normal CDF, and the $\ell_2$ computation is a binary search where each evaluation iterates over $n_r + n_R = 2000$ radii. Nonetheless, we note that the $\ell_2$ computation still runs in $\approx 0.1$ seconds, this time does not increase with $d$, and the calculation only needs to be performed once for each setting of $(\eps, \delta, d)$.

\narxiv{\begin{figure}[h]
        \centering
        \includegraphics[scale=0.5]{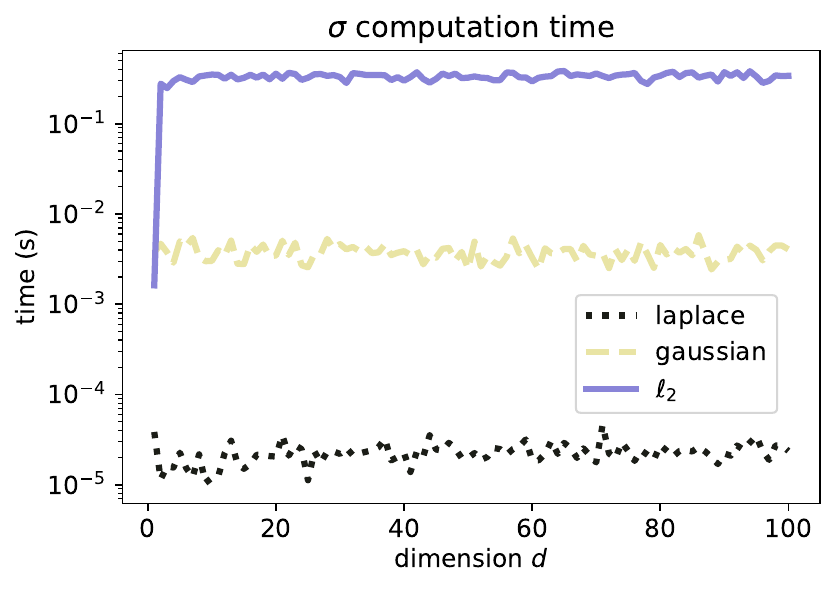}
        \caption{A plot of time in seconds to compute the minimum $\sigma$ to achieve $(1, 10^{-5})$-DP. The $\ell_2$ mechanism line jumps after $d=1$ because that case uses \Cref{lem:one_dim} instead of approximating the spherical cap region.}
        \label{fig:sigma_time}
\end{figure}}

The time to draw 1000 mechanism samples is less varied \narxiv{(\Cref{fig:sample_time})}\arxiv{(right plot)}. The $\ell_2$ mechanism is again slowest, but it is within a factor of two of the other mechanisms, and no mechanism takes more than $\approx 0.01$ seconds.

\narxiv{\begin{figure}[h]
        \centering
        \includegraphics[scale=0.5]{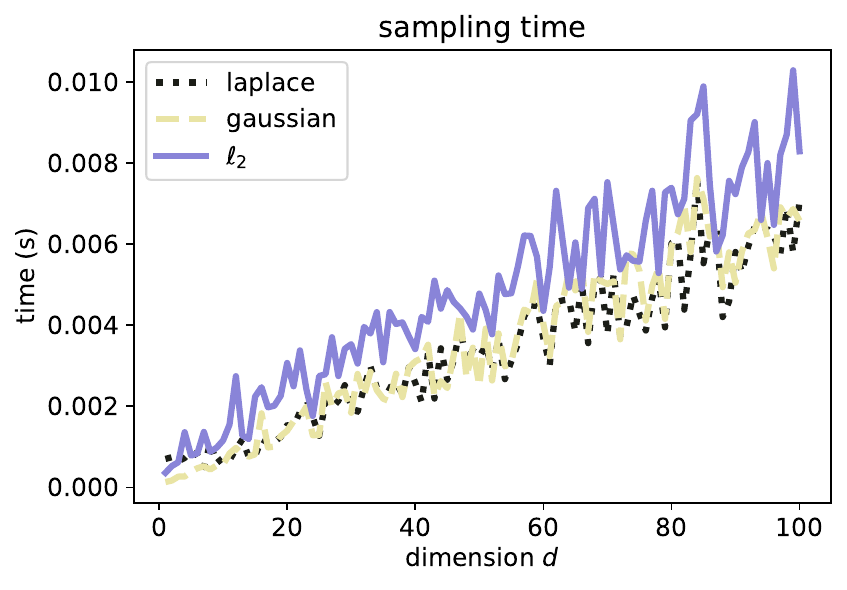}
        \caption{This plot uses the same setup as \Cref{fig:sigma_time} but records sampling time.}
        \label{fig:sample_time}
\end{figure}}

\arxiv{\begin{figure*}[h]
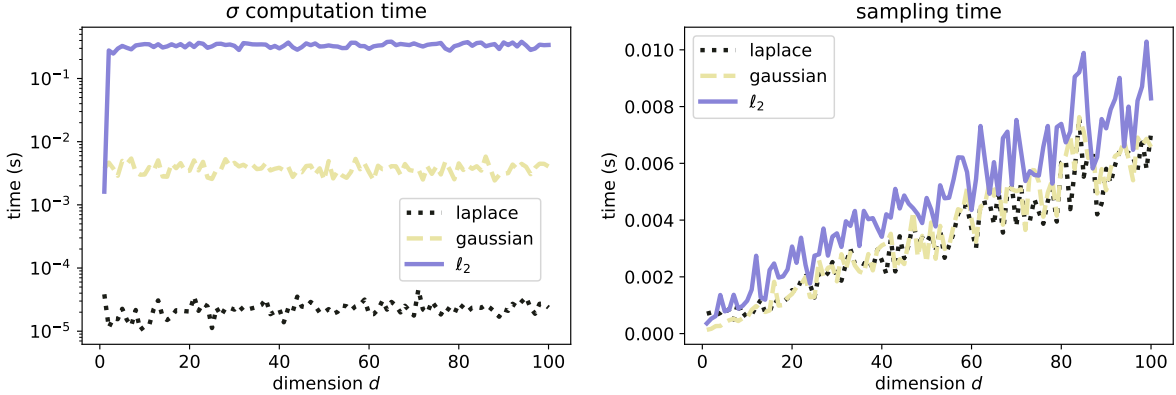

        \centering
        \includegraphics[scale=0.55]{images/sigma_times.pdf}
        \includegraphics[scale=0.55]{images/sample_times.pdf}
        \caption{\textbf{Left}: a plot of mean time to compute the minimum $\sigma$ to achieve $(1, 10^{-5})$-DP for the $\ell_2$ (solid purple) and analytic Gaussian mechanism (dotted black) mechanisms. Time is measured in seconds, across 100 trials for each $d$. Note that the analytic Gaussian mechanism computation is dimension-independent, so the time is constant; the $\ell_2$ mechanism jumps after $d=1$ because that case uses \Cref{lem:large_sigma} instead of approximating the spherical cap region. \textbf{Right}: a similar plot for drawing 1000 samples from the mechanism.}
        \label{fig:time}
\end{figure*}}
\section{Discussion}
\label{sec:discussion}
We conclude with some questions raised by the $\ell_2$ mechanism. Throughout, we take the Gaussian mechanism as a familiar baseline.

\textbf{Privacy.} As mentioned previously, since the $\ell_2$ mechanism is an instance of the $K$-norm mechanism, it also satisfies pure DP (right plot in \Cref{fig:intro}). In contrast, the $(\eps, \delta)$-DP Gaussian mechanism does not satisfy $\eps'$-DP for any $\eps' < \infty$. However, since the Gaussian mechanism's privacy loss random variable follows a Gaussian distribution, it admits easy privacy analyses under notions like concentrated~\cite{BS16}, Renyi~\cite{M17}, and Gaussian~\cite{DRS22} DP. While the $\ell_2$ mechanism's pure DP guarantee may also be ported to guarantees for these other privacy notions, they are looser than direct analyses. This affects the guarantees obtained when using, for example, amplification by subsampling of a mechanism satisfying RDP, as done tightly for the Gaussian mechanism~\cite{ACGMM+16, WBK19}.

\textbf{Sampling.} The marginals of the Gaussian mechanism are (one-dimensional) Gaussians, so it is easy to sample in parallel in one map and one combine. The $\ell_2$ mechanism can also be sampled in parallel, at the cost of an additional map and combine step (\Cref{subsec:l2_sampler}). The Gaussian mechanism also offers a discrete analogue with discrete sampling~\cite{CKS20}, but a discrete analogue of the $\ell_2$ mechanism is not known.

\newpage

\bibliographystyle{plainnat}
\bibliography{references}

\newpage

\section{Appendix}

\subsection{Omitted Proofs From Upper Bound}
\label{subsec:appendix_upper}

\normDiffMonotonicity*
\begin{proof}
By the law of cosines,
        \begin{align*}
            \|y-e_1\|_2 =&\ \sqrt{\|y\|_2^2 + \|e_1\|_2^2 - 2\|y\|_2\|e_1\|_2\cos(\theta)} \\
            =&\ \sqrt{r^2 + 1 - 2r\cos(\theta)},
        \end{align*}
    so as $y$ moves clockwise through $S_{r,0} \cap H$ from $-e_1$ to $e_1$, $\theta$ decreases from $\pi$ to $0$, $\cos(\theta)$ grows from $-1$ to 1, and $\|y-e_1\|_2$ shrinks from $r+1$ to $|r-1|$. Since $\|y\|_2 = r$ remains constant, $\|y-e_1\|_2 - \|y\|_2$ decreases as $\theta$ decreases. The same conclusion holds if we choose $y$ in the lower half plane and consider the analogous counterclockwise angle.
\end{proof}

\expressionOfh*
\begin{proof}
    \arxiv{\begin{equation*}
        \|p - e_1\|_2  = \sqrt{(r+1-h(r))^2 + 2h(r)r-h(r)^2} = \sqrt{(r+1)^2 - 2h(r)}
    \end{equation*}}
    \narxiv{\begin{align*}
        \|p - e_1\|_2  =& \sqrt{(r+1-h(r))^2 + 2h(r)r-h(r)^2} \\
        =&\ \sqrt{(r+1)^2 - 2h(r)}
    \end{align*}}
    and
    \begin{flalign*}
        \eps =& \frac{1}{\sigma}(\|p - e_1\|_2 - \|p\|_2) \\
        \eps =& \frac{1}{\sigma}(\sqrt{(r+1)^2 - 2h(r)} - r) \\
        (\sigma\eps + r)^2 =& (r+1)^2 - 2h(r) \\
        2h(r) =& 2r + 1 -(\sigma\eps)^2 - 2r\sigma\eps \\
        h(r) =& r(1-\eps\sigma) + \frac{1-\eps^2\sigma^2}{2}
    \end{flalign*}
\end{proof}

\radiusRangeForValidh*
\begin{proof}
The upper constraint of $r(1-\eps \sigma)  + \frac{1 - \eps^2\sigma^2}{2} \leq 2r$ is equivalent to $r \geq \frac{1-\eps\sigma}{2}$ as follows
    \begin{flalign*}
    r(1-\eps \sigma)  + \frac{1 - \eps^2\sigma^2}{2} &\leq 2r \\
    \frac{1 - \eps^2\sigma^2}{2} &\leq r(1 + \eps\sigma) \\
    \frac{1-\eps\sigma}{2} &\leq r
    \end{flalign*}
    The lower constraint of $r(1-\eps \sigma)  + \frac{1 - \eps^2\sigma^2}{2} \geq 0$ is satisfied for any $r$ since $1-\eps \sigma \geq 0$ implies
    \begin{equation*}
        r(1-\eps \sigma) + \frac{1- \eps^2 \sigma^2}{2} = (1-\eps \sigma)\left(r + \frac{1+\eps \sigma}{2}\right) \geq 0
    \end{equation*}
\end{proof}

\rBound*
\begin{proof}
    The density for $Y$ is $f(y) \propto \exp\left(-\|y\|_2/\sigma\right)$, so we compute the distribution's normalization factor $Z$. We use two facts. First, a $(d-1)$-sphere, i.e., a sphere in $\mathbb{R}^d$, with radius $s$ has surface area $\frac{2\pi^{d/2}}{\Gamma(d/2)} \cdot s^{d-1}$. Second, by $u$-substitution with $u=s/\sigma$, 
    \begin{equation*}
        \int_0^\infty e^{-s/\sigma}s^{d-1}ds = \int_0^\infty e^{-u} \cdot u^{d-1}\sigma^d du = \Gamma(d)\sigma^d
    \end{equation*}
    since $\Gamma(z) = \int_0^\infty e^{-t}t^{z-1}dt$. We compute the integral $Z$ using hyperspherical coordinates $s, \theta_1,...,\theta_{d-1}$ where $s \geq 0$, $\theta_1 \in [0, 2\pi]$, and $\theta_j \in [0, \pi]$ for $2 \leq j \leq d-1$. Let
    \begin{equation*}
        V(\theta_{1},...,\theta_{d-1}) = \sin^{d-2}(\theta_1)\sin^{d-3}(\theta_2)...\sin(\theta_{d-1})
    \end{equation*}
    be the angle dependent terms of the hyperspherical volume element. Then
    \begin{align*}
        Z &= \int_{0}^{\infty}\int_{0}^{2\pi}\int_{0}^{\pi}...\int_{0}^{\pi}e^{-s/\sigma}s^{d-1}V(\theta_{1},...,\theta_{d-1})\partial_{d-1} \ldots \partial_1 \partial s \\
        &= \frac{2\pi^{d/2}}{\Gamma(d/2)} \int_0^\infty e^{-s/\sigma}s^{d-1}\partial s \\
        &= \frac{2\pi^{d/2}\sigma^d}{\Gamma(d/2)} \cdot \Gamma(d).
    \end{align*}
    This gives
    \begin{align*}
        \P{Y}{\|y\| \leq r} =&\ \frac{1}{Z} \cdot \frac{2\pi^{d/2}}{\Gamma(d/2)} \int_0^r e^{-s/\sigma} s^{d-1}ds \\
        =&\ \frac{1}{Z} \cdot \frac{2\pi^{d/2}\sigma^d}{\Gamma(d/2)} \cdot \gamma(d, r/\sigma) \\
        =&\ \frac{\gamma(d, r/\sigma)}{\Gamma(d)}.
    \end{align*}
\end{proof}

\narxiv{\FMonotonic*
\begin{proof}
    Shorthand $\tau = \eps \sigma$. By \Cref{cor:small_r_high_loss}, $F_{r,h(r)} = 1$ for $r \leq \frac{1-\tau}{2}$. Suppose $r > \frac{1-\tau}{2}$. Then by \Cref{lem:loss_cap}, $h(r) = r(1-\tau) + \frac{1-\tau^2}{2}$.
    
    \underline{Case 1}: $r < \frac{1-\tau^2}{2\tau}$. Then $h(r) > r$, and $F_{r,h(r)} = 1 - F_{r,2r-h(r)}$. Since we want to prove that $F_{r,h(r)}$ decreases with $r$, it suffices to show that $F_{r,2r-h(r)}$ increases with $r$. By \Cref{lem:cap_fraction},
    \begin{equation*}
        F_{r,2r-h(r)} = \frac{1}{2}I_{(2r[2r-h(r)] - [2r-h(r)]^2)/r^2}\left(\frac{d-1}{2}, \frac{1}{2}\right).
    \end{equation*}
    We expand the subscript for $I$
     \begin{align}
        \frac{2r(2r-h(r)) - (2r-h(r))^2}{r^2} =&\ \frac{4r^2 - 2rh(r) - (4r^2 - 4rh(r) + h(r)^2)}{r^2} \nonumber \\
        =&\ \frac{2rh(r) - h(r)^2}{r^2} \label{eq:h_middle}.
    \end{align}
    Since $h(r) \in [0,2r]$ (\Cref{lem:loss_cap}), $2rh(r) - h(r)^2 \geq 0$. Because $I_x(a, b)$ increases with $x$ for $x \geq 0$, it is enough to show that $[2rh(r) - h(r)^2]/r^2$ increases with $r$. Expanding yields
    \begin{equation*}
        \frac{2rh(r) - h(r)^2}{r^2} = \frac{2r(1-\tau) + (1-\tau^2)}{r} - \frac{r^2(1-\tau)^2 + r(1-\tau)(1-\tau^2) + \frac{(1-\tau^2)^2}{4}}{r^2}.
    \end{equation*}
    We drop terms that don't depend on $r$ to get
    \begin{equation*}
        \frac{1-\tau^2}{r} - \frac{(1-\tau)(1-\tau^2)}{r} - \frac{(1-\tau^2)^2}{4r^2} = (1-\tau^2)\left[\frac{\tau}{r} - \frac{(1-\tau^2)}{4r^2}\right].
    \end{equation*}
    Differentiating the second term with respect to $r$ gives $\frac{1-2\tau r-\tau^2}{2r^3}$, and this is positive exactly when $r < \frac{1-\tau^2}{2\tau}$.
    
    \underline{Case 2}: $r \geq \frac{1-\tau^2}{2\tau}$. Then $h(r) \leq r$, and
    \begin{equation*}
        F_{r,h(r)} = \frac{1}{2}I_{(2rh(r)-h(r)^2)/r^2}\left(\frac{d-1}{2}, \frac{1}{2}\right).
    \end{equation*}
    By similar logic, it suffices to show that $(2rh(r)-h(r)^2)/r^2$ is nonincreasing in $r$ for $r \geq \frac{1-\tau^2}{2\tau}$. This follows from the analysis of the previous case.
\end{proof}}

\subsection{Omitted Proofs From Lower Bound}
\label{subsec:appendix_lower}
\lowerCap*
\narxiv{\begin{proof}
    Shorthand $\tau = \eps \sigma$ for neatness. To verify the claim, we start with an arbitrary $y \in S_{R,1}$ and attempt to determine a cutoff $X \in \mathbb{R}$ such that $y \in V$ iff $y_1 \leq X$. For any two points $y,y' \in S_{R,1}$ such that $y_1 = y_{1}'$, it is true that $y \in V$ iff $y' \in V$ since $V$ is spherically symmetric around $e_1$. If $y' = y_{1}e_{1} + v$ for some $v$ orthogonal to $e_1$, then by $S_{R,1}$'s spherical symmetry around $e_1$, the point $y = y_{1}e_{1} + |v|e_{2}$ is also in $S_{R,1}$. Therefore, our goal is to find the minimum cutoff $X$ for the point $y = (y_1, y_2, 0,...,0)$ such that $y \in V$ iff $y_{1} \leq X$.
    
    We know $y \in S_{r',0}$ for some $r' > 0$. Since $y_1^2 + y_2^2 = r'^2$, and $y \in S_{R,1}$ implies $(y_1-1)^2 + y_2^2 = R^2$, then combining these yields $r' = \sqrt{R^2+2y_1-1}$. Thus we have $y \in V$ if and only if $y_1 \leq -r' + h(r')$. By \Cref{lem:loss_cap}, $-r' + h(r') = \min(-\tau r' + \frac{1-\tau^2}{2}, 2r')$. We have $-\tau r' + \frac{1-\tau^2}{2} = -\tau\sqrt{R^2+2y_1-1} + \frac{1-\tau^2}{2}$,
    so we solve for the largest $X$ where $X \leq  \min(-\tau\sqrt{R^2+2X-1} + \frac{1-\tau^2}{2}, 2\sqrt{R^2+2X-1})$.
    
    Solving for $X$ under the first constraint yields 
    \begin{align}
        \left(X - \frac{1-\tau^2}{2}\right)^2 \geq&\ \tau^2(R^2+2X-1) \nonumber \\
        X^2 - X(1+\tau^2) + \frac{\tau^4-2\tau^2+1 - 4\tau^2R^2 + 4\tau^2}{4} \geq&\ 0 \nonumber \\
        X^2 - X(1+\tau^2) + \frac{\tau^4+2\tau^2+1 - 4\tau^2R^2}{4} \geq&\ 0 \label{eq:X_inequality}.
    \end{align}
    The roots of the LHS are given by 
    \begin{equation*}
        X =\frac{1 + \tau^2 \pm \sqrt{(1+\tau^2)^2 - ([1+\tau^2]^2 - 4\tau^2R^2)}}{2} \\
        =\frac{1 + \tau^2 \pm 2\tau R}{2}.
    \end{equation*}
    Let $x_1 = \frac{1+\tau^2 -2\tau R}{2}$ and $x_2 = \frac{1+\tau^2 + 2\tau R}{2}$. As the LHS of \Cref{eq:X_inequality} is a convex parabola, the inequality is satisfied on the intervals $(-\infty, x_1] \cup [x_2, \infty)$. But the first constraint on $X$ also implies the weaker inequality $X < \frac{1-\tau^2}{2}$ so $X \notin [x_2, \infty)$. Then $x_1$ is the largest value that satisfies the first constraint.
    
    For any $X \in (x_1, x_2)$, we have $X > -\tau\sqrt{R^2+2X-1} + \frac{1-\tau^2}{2} \geq \min(-\tau\sqrt{R^2+2X-1} + \frac{1-\tau^2}{2}, 2\sqrt{R^2+2X-1})$. So if we can show that $x_1 \leq 2\sqrt{R^2+2x_{1}-1}$, then $x_1$ will indeed be the desired cutoff. We actually prove a stronger inequality
    \begin{align*}
        x_1 \leq&\ \sqrt{R^2+2x_{1}-1} \\
        \frac{1+\tau^2 - 2\tau R}{2} \leq&\ R - \tau \\
        (1+\tau)^2 \leq&\ 2R(1 + \tau) \\
        \frac{1+\tau}{2} \leq&\ R
    \end{align*}
    which follows from our starting assumption on $R$. So $X = \frac{1+\tau^2 -2\tau R}{2}$ is the desired cutoff. This leads to a cap on $S_{R,1}$ of height
    \begin{equation*}
        H(R) = X-(1-R) = \frac{1+\tau^2 -2\tau R}{2} - 1 + R = R(1-\tau) - \frac{1-\tau^2}{2}.
    \end{equation*}
    The last step is verifying that this is a valid height lying in $[0,2R]$. The lower bound follows from $R(1-\tau) \geq \frac{1-\tau^2}{2}$ rearranging into the starting assumption $R \geq \frac{1+\tau}{2}$. We prove a stronger upper bound of $R$ by rearranging
    \begin{align*}
        R(1-\tau) - \frac{1-\tau^2}{2} \leq&\ R \\
        -\frac{1-\tau^2}{2} \leq& \tau R
    \end{align*}
    which uses $0 < \tau <1$ and $R > 0$.
\end{proof}}

\subsection{Omitted Proofs From Sampler}
\label{subsec:appendix_sampler}
\gammaSample*
\begin{proof}
    We first show that $-\log(U(0,1)) \sim \expo{1}$, an exponential random variable. Let $f$ be the CDF of $\expo{1}$. Then $\mathbb{P}[f^{-1}(U) \leq t] = \mathbb{P}[U \leq f(t)] = f(t)$ so $f^{-1}(U) \sim \expo{1}$. Since $f^{-1}(t) = -\log(1-t)$ for $0 \leq t \leq 1$, and $U \sim (1 - U)$, we get $-\log(U) \sim \expo{1}$. 

    Note that $\expo{1}$ corresponds to a random variable that measures the time required for the first arrival from a Poisson process with rate 1. Moreover, $\gammad{d+1}{\sigma} \sim \sigma\gammad{d+1}{1}$ and $\gammad{d+1}{1}$ corresponds to a random variable that measures the time of the $(d+1)$th arrival of a Poisson process with rate 1. The random variable of the $(d+1)$th arrival is equal to the sum of the random variables of interarrival times for the first $(d+1)$ arrivals. Since a Poisson process has stationary increments, each of these interarrival times are i.i.d. as $\expo{1}$. It follows that $\gammad{d+1}{1} \sim \sum_{i=1}^{d+1}E_{i} \sim -\sum_{i=1}^{d+1}\log(U_i)$ where $E_{i} \sim \expo{1}$.
\end{proof}

\ballSample*
\begin{proof}
    The term $\frac{(X_1, \ldots, X_d)}{\sqrt{\sum_{i=1}^d X_i^2}}$ is a  normalized draw from a $d$-dimensional multivariate Gaussian with an identity covariance matrix. As this distribution is spherically symmetric, normalizing the draw to have unit length produces a uniform draw from the unit sphere. Define the function $f$ to be the CDF of the random variable of the $\ell_2$ norm of a uniform sample from $B_{2}^{d}$. Then $f(r) = r^{d}$. We show that $f$ is also the CDF of $Y^{1/d}$. We have $\mathbb{P}[Y^{1/d} \leq r] = \mathbb{P}[U(0,1)^{1/d} \leq r] = \mathbb{P}[U(0,1) \leq r^{d}] = r^{d}$, and the lemma follows.
\end{proof}

\subsection{Omitted Proofs From Experiments}
\label{subsec:appendix_experiments}
The following result about the expected squared $\ell_2$ norm of $\ell_p$ balls will be useful.

\begin{lemma}[\cite{JRY25}]
\label{lem:expected_squared_norm}
    Let $\mathbb{E}_{2}^{2}(X)$ denote the expected squared $\ell_2$ norm of a uniform sample from $X$, and let $rB_p^d$ denote the $d$-dimensional $\ell_p$ ball of radius $r$. Then $\mathbb{E}_{2}^{2}(rB_{p}^{d}) = r^2 \cdot \frac{d}{3}\left(\frac{3d}{d+2}\right)\left(\frac{\Gamma(\frac{d}{p}) \Gamma(\frac{3}{p})}{\Gamma(\frac{1}{p}) \Gamma(\frac{d+2}{p})}\right)$.
\end{lemma}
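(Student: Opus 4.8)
The plan is to reduce the claim to the second moment of a single coordinate on the unit ball and then evaluate that moment with the classical Dirichlet integral for the $\ell_p$ ball. First, the substitution $x \mapsto rx$ shows that a uniform draw from $rB_p^d$ is $r$ times a uniform draw from $B_p^d$, so $\mathbb{E}_2^2(rB_p^d) = r^2\,\mathbb{E}_2^2(B_p^d)$ and it suffices to handle $r = 1$. Writing $X$ for a uniform sample from $B_p^d$, the coordinate symmetry of the $\ell_p$ ball gives $\mathbb{E}_2^2(B_p^d) = \mathbb{E}\left[\sum_{i=1}^d X_i^2\right] = d\,\mathbb{E}[X_1^2]$, reducing everything to $\mathbb{E}[X_1^2] = \int_{B_p^d} x_1^2\,dx \,/\, \mathrm{Vol}(B_p^d)$.

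Next I would evaluate both integrals using the Dirichlet (Liouville) formula: for exponents $a_1, \ldots, a_d > 0$, the substitution $u_i = x_i^p$ turns the orthant slice of the ball into a simplex and yields $\int_{\{x_i \geq 0,\ \sum_i x_i^p \leq 1\}} \prod_{i=1}^d x_i^{a_i - 1}\,dx = \frac{\prod_{i=1}^d \frac{1}{p}\Gamma(a_i/p)}{\Gamma\!\left(1 + \frac{1}{p}\sum_{i=1}^d a_i\right)}$. Both $x_1^2$ and the constant $1$ are even in every coordinate, so integrating over all of $B_p^d$ simply multiplies the orthant integral by $2^d$. Applying this with $a_1 = 3,\ a_2 = \cdots = a_d = 1$ for the numerator and $a_1 = \cdots = a_d = 1$ for the volume, and taking the ratio, the $2^d$ and the $\left(\tfrac{1}{p}\Gamma(1/p)\right)^{d-1}$ factors cancel, leaving $\mathbb{E}[X_1^2] = \frac{\Gamma(3/p)}{\Gamma(1/p)}\cdot\frac{\Gamma(1 + d/p)}{\Gamma(1 + (d+2)/p)}$.

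Finally I would simplify with $\Gamma(1+z) = z\,\Gamma(z)$, which rewrites $\frac{\Gamma(1+d/p)}{\Gamma(1+(d+2)/p)} = \frac{d}{d+2}\cdot\frac{\Gamma(d/p)}{\Gamma((d+2)/p)}$; multiplying by the factor $d$ from the symmetry step and restoring $r^2$ gives $r^2\cdot\frac{d^2}{d+2}\cdot\frac{\Gamma(d/p)\Gamma(3/p)}{\Gamma(1/p)\Gamma((d+2)/p)}$, which matches the claimed $r^2\cdot\frac{d}{3}\left(\frac{3d}{d+2}\right)\left(\frac{\Gamma(d/p)\Gamma(3/p)}{\Gamma(1/p)\Gamma((d+2)/p)}\right)$ because $\frac{d}{3}\cdot\frac{3d}{d+2} = \frac{d^2}{d+2}$. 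The only delicate point is pinning down the Dirichlet formula and its normalization exactly; a route that sidesteps it is the Barthe--Gu\'edon--Mendelson--Naor representation $X = Y/(\|Y\|_p^p + E)^{1/p}$, with the $Y_i$ iid having density $\propto e^{-|t|^p}$ and $E \sim \expo{1}$ independent, from which $\mathbb{E}[X_1^2]$ can be computed directly, but the Dirichlet computation above is the most economical.
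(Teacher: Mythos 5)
Your proof is correct, but note that the paper itself contains no proof of this lemma to compare against: it is imported wholesale from \cite{JRY25} (it appears in the appendix only as a stated prerequisite for the proof of the mean-squared-error formula for the $K$-norm mechanism). So your proposal supplies a self-contained derivation where the paper has none. Checking your steps: the scaling reduction $\mathbb{E}_2^2(rB_p^d) = r^2\,\mathbb{E}_2^2(B_p^d)$ and the symmetry reduction to $d\,\mathbb{E}[X_1^2]$ are both valid; your normalization of the Dirichlet--Liouville formula is the right one (the substitution $u_i = x_i^p$ contributes $\frac{1}{p}u_i^{1/p-1}$ per coordinate, giving the $\bigl(\tfrac{1}{p}\bigr)^d$ prefactor and the $\Gamma\bigl(1 + \tfrac{1}{p}\sum_i a_i\bigr)$ in the denominator); and the ratio with $(a_1,\ldots,a_d) = (3,1,\ldots,1)$ over $(1,\ldots,1)$ indeed collapses to
\begin{equation*}
\mathbb{E}[X_1^2] = \frac{\Gamma(3/p)}{\Gamma(1/p)}\cdot\frac{\Gamma(1+d/p)}{\Gamma(1+(d+2)/p)} = \frac{d}{d+2}\cdot\frac{\Gamma(d/p)\,\Gamma(3/p)}{\Gamma(1/p)\,\Gamma\bigl(\tfrac{d+2}{p}\bigr)},
\end{equation*}
so that multiplying by $d$ gives $\frac{d^2}{d+2}$ times the gamma ratio, which equals the stated $\frac{d}{3}\cdot\frac{3d}{d+2}$ times the same ratio. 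Two quick sanity checks confirm the constants: at $p=2$ the gamma factors reduce via $\Gamma(3/2) = \tfrac12\Gamma(1/2)$ and $\Gamma\bigl(\tfrac{d+2}{2}\bigr) = \tfrac{d}{2}\Gamma(d/2)$ to the classical $\frac{d}{d+2}$ for the Euclidean ball, and at $d=1$ the formula gives $r^2/3$ as it must for the uniform distribution on $[-r,r]$. The alternative route you mention (the representation of a uniform ball sample via i.i.d. draws with density proportional to $e^{-|t|^p}$ and an independent $\expo{1}$ variable) would also work, but as you say, the Dirichlet computation is the more economical argument.
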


\expectedKNorm*
\begin{proof}
    Consider the mechanism releasing a noisy version of $T(X) = 0$. Call this mechanism $M_\sigma^p$. Recall from \Cref{lem:k_norm} that we can sample it by sampling $r \sim \gammad{d+1}{\sigma}$, sampling $z \sim B_p^d$, and outputting $rz$. The distribution $\gammad{d+1}{\sigma}$ has density
    \begin{equation}
    \label{eq:gamma_density}
        f(x) = \frac{x^de^{-x/\sigma}}{\Gamma(d+1)\sigma^{d+1}}.
    \end{equation}
    so
    \begin{align*}
        \E{y \sim M_\sigma^p}{\|y\|_2^2} =& \int_0^\infty f(r) \mathbb{E}_2^2(rB_p^d) dr \\
        =&\ \int_0^\infty \frac{r^de^{-r/\sigma}}{\Gamma(d+1)\sigma^{d+1}} r^2 \cdot \frac{d}{3}\left(\frac{3d}{d+2}\right)\left(\frac{\Gamma(\frac{d}{p}) \Gamma(\frac{3}{p})}{\Gamma(\frac{1}{p}) \Gamma(\frac{d+2}{p})}\right) dr \\
        =&\  \frac{d}{3\Gamma(d+1)\sigma^{d+1}}\left(\frac{3d}{d+2}\right)\left(\frac{\Gamma(\frac{d}{p}) \Gamma(\frac{3}{p})}{\Gamma(\frac{1}{p}) \Gamma(\frac{d+2}{p})}\right) \int_0^\infty r^{d+2}e^{-r/\sigma}dr \\
        =&\  \frac{d}{3\Gamma(d+1)\sigma^{d+1}}\left(\frac{3d}{d+2}\right)\left(\frac{\Gamma(\frac{d}{p}) \Gamma(\frac{3}{p})}{\Gamma(\frac{1}{p}) \Gamma(\frac{d+2}{p})}\right) \cdot \Gamma(d+3)\sigma^{d+3} \\
        =&\  (d\sigma)^2(d+1)\left(\frac{\Gamma(\frac{d}{p}) \Gamma(\frac{3}{p})}{\Gamma(\frac{1}{p}) \Gamma(\frac{d+2}{p})}\right).
    \end{align*}
\end{proof}

\gaussianExpectedSquaredNorm*
\begin{proof}
    Denote the mechanism by $N_\sigma$. Then by linearity of expectation and the fact that the Gaussian mechanism has independent Gaussian marginals,
    \begin{align*}
        \E{y \sim N_\sigma}{\|y\|_2^2
        } =&\ \E{}{\sum_{j=1}^d y_j^2} \\
        =&\ d\E{z \sim N(0, \sigma^2)}{z^2} \\
        =& d\sigma^2
    \end{align*}
    where the last equality used
    \begin{equation*}
        \E{z \sim N(0, \sigma^2)}{z^2} = \text{Var}(z) + \E{}{z}^2 = \sigma^2.
    \end{equation*}
\end{proof}

The following result provides evidence that, with reasonable parameters, approximate DP does not yield meaningful utility improvements over pure DP for the Laplace mechanism. A result like this is likely folklore, but we include it here for completeness.
\begin{lemma}
\label{lem:laplace_approx}
    The Laplace mechanism with parameter $\sigma \leq 1/\eps$ does not satisfy $(\eps, \delta)$-DP for $\eps < 2\ln(1-\delta) + \frac{1}{\sigma}$.
\end{lemma}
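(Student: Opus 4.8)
The plan is to reduce to the one-dimensional closed forms already established and let them cancel. Since the lemma concerns a one-dimensional statistic with sensitivity $1$, and since the $\ell_2$ mechanism coincides with the Laplace mechanism when $d=1$, the computations in \Cref{lem:one_dim} and \Cref{lem:one_dim_lb} apply verbatim. First I would fix the worst-case neighboring pair $T(X)=0$ and $T(X')=1$ and appeal to \Cref{lem:approx_dp}: the mechanism is $(\eps,\delta)$-DP if and only if $\P{}{\ell_{M,X,X'} \geq \eps} - e^\eps\P{}{\ell_{M,X',X} \leq -\eps} \leq \delta$. As in the main text, I would identify these two probabilities with $\P{y \sim M(0)}{y \in V}$ and $\P{y \sim M(1)}{y \in V}$, and then substitute the closed forms from the two cited lemmas.

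The key step is a cancellation. Writing $a = \exp\!\left(\tfrac{1}{2}\left[\eps - \tfrac{1}{\sigma}\right]\right)$, \Cref{lem:one_dim} gives $\P{}{\ell_{M,X,X'} \geq \eps} = 1 - \tfrac{1}{2}a$, while \Cref{lem:one_dim_lb} gives $\P{}{\ell_{M,X',X} \leq -\eps} = \tfrac{1}{2}\exp\!\left(\tfrac{1}{2}\left[-\eps - \tfrac{1}{\sigma}\right]\right)$. Multiplying the latter by $e^\eps$ collapses its exponent back to $\tfrac{1}{2}\left[\eps - \tfrac{1}{\sigma}\right]$, so $e^\eps\P{}{\ell_{M,X',X} \leq -\eps} = \tfrac{1}{2}a$. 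Hence the left-hand side of the \Cref{lem:approx_dp} inequality becomes
\[
    \left(1 - \tfrac{1}{2}a\right) - \tfrac{1}{2}a = 1 - \exp\!\left(\tfrac{1}{2}\left[\eps - \tfrac{1}{\sigma}\right]\right).
\]

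It then remains to solve a scalar inequality. The mechanism is $(\eps,\delta)$-DP if and only if $1 - \exp\!\left(\tfrac{1}{2}\left[\eps - \tfrac{1}{\sigma}\right]\right) \leq \delta$, equivalently $\exp\!\left(\tfrac{1}{2}\left[\eps - \tfrac{1}{\sigma}\right]\right) \geq 1-\delta$; taking logarithms (valid since $0 < \delta < 1$ forces $1-\delta > 0$) and rearranging gives $\eps \geq 2\ln(1-\delta) + \tfrac{1}{\sigma}$. Negating yields exactly the claimed failure region $\eps < 2\ln(1-\delta) + \tfrac{1}{\sigma}$. I do not expect a genuine obstacle here: the argument is essentially bookkeeping once the two prior lemmas are in hand, and the only points needing a word of care are that sensitivity-$1$ neighbors are the worst case (immediate for the one-dimensional Laplace distribution) and that the hypothesis $\sigma \leq 1/\eps$ together with $0 < \delta < 1$ keeps both cited lemmas and the logarithm well-defined.
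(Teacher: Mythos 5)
Your proof is correct and is essentially the paper's own argument: the paper's proof carries out exactly the computations recorded in \Cref{lem:one_dim} and \Cref{lem:one_dim_lb} (the Laplace CDF evaluations on the region $y_1 \leq \frac{1}{2}(1-\sigma\eps)$), followed by the same exact cancellation of $e^\eps$ against the second term and the same rearrangement of $1 - \exp\left(\frac{1}{2}\left[\eps - \frac{1}{\sigma}\right]\right) \leq \delta$ into $\eps \geq 2\ln(1-\delta) + \frac{1}{\sigma}$. The only cosmetic differences are that you cite the established one-dimensional lemmas while the paper redoes the calculation in general dimension $d$ (where it factorizes through the first coordinate anyway), and that your derivation makes clear the cancellation is exact, so the paper's passing mention of $1+x \leq e^x$ is unnecessary.
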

\begin{proof}
    Let $T(X) = 0$ and let $T(X') = e_1$. Then $\|T(X) - T(X')\|_1 = 1$, and
    \begin{equation*}
        \P{y \sim M(0)}{\ln\left(\frac{f_X(y)}{f_{X'}(y)}\right) \geq \eps} = \P{y \sim M(0)}{ \|y-e_1\|_1 - \|y\|_1 \geq \sigma \eps} = \P{y \sim M(0)}{|y_1-1| - |y_1| \geq \sigma \eps}.
    \end{equation*}
    Mechanism $M$ is equivalent to the spherical Laplace distribution where each dimension is drawn from $\lap{\sigma}$. This distribution has CDF $F(x) = 1 - \frac{1}{2}\exp(-x/\sigma)$ for $x \geq 0$. Condition $|y_1 - 1| -|y_1| \geq \sigma \eps$ holds if and only if $y_1 \leq \frac{1}{2}(1 - \sigma \eps)$, so the probability of drawing such a $y$ is
    \begin{equation*}
        1 - \frac{1}{2}\exp\left(-\frac{1}{\sigma}\left[\frac{1}{2}(1-\sigma \eps)\right]\right) = 1 - \frac{1}{2}\exp\left(\frac{\eps - \frac{1}{\sigma}}{2}\right)
    \end{equation*}
    Therefore $\P{}{\ell_{M,X,X'} \geq \eps} = 1 - \frac{1}{2}\exp\left(\frac{\eps - \frac{1}{\sigma}}{2}\right)$.
    
    We now analyze $\P{}{\ell_{M,X',X} \leq -\eps}$. Because $\ell_{M,X',X} = \log(f_{X'}(y)/f_X(y)) = \frac{1}{\sigma} \cdot (|y_1| - |y_1-1|)$, we get \begin{equation*}
        \P{}{\ell_{M,X',X} \leq -\eps} = \P{y \sim M(1)}{|y_1-1| - |y_1| \geq \sigma \eps}
    \end{equation*}
    where $M(1)$ denotes the $d$-dimensional Laplace mechanism centered at $e_1$. By the same logic used above, $|y_1-1| - |y_1| \geq \sigma \eps$ if and only if $y_1 \leq \frac{1}{2}(1-\sigma \eps)$. For $y \sim M(1)$, this event has the same probability as $y_1' \leq \frac{1}{2}(1 - \sigma \eps) - 1 = \frac{1}{2}(-1 - \sigma \eps)$ when $y' \sim M(0)$. Furthermore, $\lap{\sigma}$ has CDF $F(x) = \frac{1}{2}\exp(x/\sigma)$ for $x < 0$. Thus $\P{}{\ell_{M,X',X} \leq -\eps} = \frac{1}{2}\exp\left(\frac{1}{2}\left[-\frac{1}{\sigma} - \eps\right]\right)$.
    
    Combining these results and applying $1+x \leq e^x$ yields
    \begin{align*}
        \P{}{\ell_{M,X,X'} \geq \eps} - e^{\eps}\P{}{\ell_{M,X',X} \leq -\eps} =&\ 1 - \frac{1}{2}\exp\left(\frac{\eps - \frac{1}{\sigma}}{2}\right) - e^{\eps}\frac{1}{2}\exp\left(\frac{1}{2}\left[-\frac{1}{\sigma} - \eps\right]\right) \\
        =&\ 1 - \exp\left(\frac{\eps - \frac{1}{\sigma}}{2}\right)
    \end{align*}
    By \Cref{lem:approx_dp}, this last quantity must be upper bounded by $\delta$ for $M$ to be $(\eps, \delta)$-DP. Rearranging yields the expression in the claim.
\end{proof}

\end{document}